\documentclass[10pt,a4paper]{article}

\usepackage[utf8]{inputenc}
\usepackage[T1]{fontenc}
\usepackage[english]{babel}
\usepackage{footmisc}
\usepackage{amsmath}
\usepackage{amsfonts}
\usepackage{amssymb}
\usepackage{amsthm}

\newtheorem{thm}{Theorem}
\newtheorem{lem}{Lemma}
\newtheorem{cor}{Corollary}
\newtheorem{propo}{Proposition}
\newtheoremstyle{problem}{}{}{}{0pt}{}{}{0pt}{}
\theoremstyle{problem}
\newtheorem*{prb}{}

\usepackage{multicol}
\usepackage{graphicx}
\usepackage{enumitem}

\author{Andreas Darmann and Janosch Döcker}

\title{On simplified NP-complete variants of \\ \textsc{Not-All-Equal 3-Sat} and \textsc{3-Sat}}

\begin{document}
\maketitle

\begin{abstract}
We consider simplified, monotone versions of \textsc{Not-All-Equal 3-Sat} and \textsc{3-Sat}, variants of the famous \textsc{Satisfiability Problem} where each clause is made up of exactly three distinct literals. We show that \textsc{Not-All-Equal 3-Sat} remains NP-complete even if (1) each variable appears exactly four times, (2) there are no negations in the formula, and (3) the formula is linear, i.e., each pair of distinct clauses shares at most one variable. 

Concerning \textsc{3-Sat} we prove several hardness results for monotone formulas with respect to a variety of restrictions imposed on the variable appearances.  \textsc{Monotone 3-Sat} is the restriction of \textsc{3-Sat} to monotone formulas, i.e. to formulas in which each clause contains only unnegated variables or only negated variables, respectively. In particular, we show that, for any $k\geq 5$, \textsc{Monotone 3-Sat} is NP-complete even if each variable appears exactly $k$ times unnegated and exactly once negated. In addition, we show that \textsc{Monotone 3-Sat} is NP-complete even if each variable appears exactly three times unnegated and three times negated, respectively. In fact, we provide a complete analysis of \textsc{Monotone 3-Sat} with exactly six  appearances per variable. Further, we prove that the problem remains NP-complete when restricted to instances in which each variable appears either exactly once unnegated and three times negated or the other way around. Thereby, we improve on a result by Darmann et al.~\cite{Darmann2018} showing NP-completeness for four appearances per variable. Our stronger result also implies that  \textsc{3-Sat} remains NP-complete even if each variable appears exactly  three times unnegated and once negated, therewith complementing a result by Berman et al.~\cite{Berman2003}. 
\end{abstract}

\section{Introduction}

The famous Boolean satisfiability problem, and in particular \textsc{$3$-Satisfiability}, can be considered \textit{the}  classical decision problem in computer science. \textsc{$3$-Satisfiability} has been the first problem shown to be \textsc{NP}-complete decades ago (Cook~\cite{cook71}) and is of undisputed theoretical and practical importance; it both appears in practical applications of routing, scheduling and artificial intelligence (see, e.g., Devlin and O'Sullivan~\cite{devlin08}, Nam et al.~\cite{Nam99}, Horbach et al.~\cite{Horbach12}, and Kautz and Selman~\cite{kautz96}), and is the most prominent problem, and probably the most frequently used one, for complexity analysis of decision problems. Therefore, it has continuously attracted researchers through decades focusing on the computational complexity of variants of the satisfiability problem (for recent work see, e.g., Pilz~\cite{pilz19} or Paulusma and Szeider~\cite{Paul2019}). 

In this paper, we add to that branch of literature and investigate the computational complexity\footnote{We assume the reader to be familiar with the basic concepts of the theory of NP-completeness and refer to Garey and Johnson~\cite{garey79} for an extensive introduction.} of restricted variants of \textsc{$3$-Satisfiability} and its variation \textsc{Not-All-Equal $3$-Satisfiability}, which is also known to be \textsc{NP}-complete (Schaefer~\cite{schaefer78}). 

In \textsc{$3$-Satisfiability}, we are given a set of propositional variables and a collection of clauses, where each clause contains three literals. The question is whether there is a satisfying truth assignment, i.e., whether we can satisfy all clauses by assigning truth values to the variables. In what follows, we will refer to \textsc{3-Sat} as the version of \textsc{$3$-Satisfiability} in which each clause is made up of three \emph{distinct} literals -- which is the setting we focus on in this paper -- and to \textsc{3-Sat*} as the version in which the three literals of a clause are not necessarily distinct. \textsc{Not-All-Equal Satisfiability} imposes an additional restriction on truth assignments by asking whether  there is a truth assignment such that for each clause at least one literal evaluates to true and at least one to false, respectively. As a consequence of Schaefer's dichotomy theorem~\cite{schaefer78} \textsc{Not-All-Equal Satisfiability} is NP-complete even if each clause is made up of three literals. In addition, Dehghan et al.~\cite[pp.\,1362f.]{Dehghan2015} show that \textsc{Not-All-Equal Satisfiability} remains NP-complete even if each variable appears unnegated exactly three times (i.e., there are no negations present at all), and each clause is a disjunction of either two or three distinct variables. In contrast, \textsc{Not-All-Equal 3-Sat} (all clauses have exactly three distinct variables) can be solved in polynomial time in case that there are no negations present and each variable appears at most three times (Porschen et al.~\cite[Theorem 4]{porschen04}, see also discussion in \cite[p.\,186]{porschen04}).

The first main focus of this paper is laid on the monotone variant of \textsc{Not-All-Equal 3-Sat}. According to the common convention an instance of \textsc{Not-All-Equal 3-Sat} is \emph{monotone}\footnote{We point out that \textit{monotonicity} has different meanings for \textsc{3-Satisfiability} and \textsc{Not-All-Equal 3-Satisfiability}, which is certainly not ideal but appears to be the established notation.\label{fn:monotone}} if and only if negations are completely absent, i.e., there are no negated variables in the formula. Porschen et al.~\cite{porschen14} studied variants of \textsc{Not-All-Equal 3-Sat} that restrict the interleaving of clauses; e.g., in \emph{linear} formulas each pair of distinct clauses shares at most one variable. In particular, Porschen et al.~\cite[Theorem 3]{porschen14} show that \textsc{Monotone Not-All-Equal 3-Sat} is NP-complete even for linear instances. In this paper we improve upon the result of Porschen et al.~\cite{porschen14} by showing that \textsc{Monotone Not-All-Equal 3-Sat} remains NP-complete for linear instances in which each variable appears exactly four times. Therewith, we also improve upon a result of Karpinski and Piecuch~\cite{karppiec18}, who show that \textsc{Not-All-Equal 3-Sat*} (possibly with duplicates of literals in the same clause) is NP-complete even if each variable appears at most $4$ times in the formula.  

The second main focus of this paper is laid on \textsc{Monotone 3-Sat} where each clause is monotone\footref{fn:monotone}, i.e., contains only unnegated or only negated variables, respectively. It is known that \textsc{Monotone 3-Sat} is NP-complete~\cite{gold78,li97}, and that intractability holds even if (1) each variable appears exactly 4 times~\cite[Corollary 4]{Darmann2018}. We show that this problem remains NP-complete even if condition~(1) is replaced by either one of the following four conditions: 
\begin{itemize}
\item (1a) each variable appears exactly $k$ times unnegated and $k$ times negated, respectively, for every fixed integer $k \geq 3$,
\item (1b) each variable appears exactly $k'$ times unnegated and once negated, respectively, for every fixed integer $k' \geq 5$,
\item (1c) each variable appears exactly $p$ times unnegated and $6-p$ times negated, respectively, for all $p \in \{1, 2, 3, 4, 5\}$, or
\item (1d) each variable appears exactly three times unnegated and once negated or three times negated and once unnegated. 
\end{itemize}
We remark that the hardness result for condition (1d) improves upon the result for condition (1) by Darmann et al.~\cite[Corollary 4]{Darmann2018}. Also, as a by-product, we derive the result that the classical \textsc{3-Sat} problem remains NP-complete  even if each variable appears exactly  three times unnegated and once negated (observe that this  implies hardness also for the vice versa case where each variable appears exactly once unnegated and three times negated). Therewith, we complement results of Tovey~\cite{tovey84} and Berman et al.~\cite{Berman2003}: The former showed that \textsc{3-Sat} remains NP-complete even if each variable appears in at most 4 clauses  and it is trivial if the number of variable appearances is bounded by 3~\cite[Theorem 2.3 and Theorem 2.4]{tovey84}; Berman et al.~\cite[Theorem 1]{Berman2003} added to that result by showing that NP-completeness holds even if each variable appears exactly twice negated and twice unnegated, respectively. 

Further related literature is concerned with the planar\footnote{In that respect, planarity refers to the corresponding graph property of the following associated bipartite graph: there is a vertex for each variable $v$ and for each clause $c$, and an edge connects a variable vertex $v$ with a clause vertex $c$ if and only if variable $v$ appears in clause $c$.} variants of \textsc{(Monotone) 3-Satisfiability}.   Both \textsc{Planar 3-Satisfiability} and \textsc{Planar Monotone 3-Satisfiability} are known to be NP-complete even in restricted settings (e.g., see~\cite{lichtenstein82,kratochvil94} respectively \cite{deberg12,Darmann2018}), while Pilz~\cite[Theorem 11]{pilz19} shows that all instances of \textsc{Planar Monotone 3-Sat}, i.e., where each clause contains three \textit{distinct} variables, are satisfiable. Moreover, the planar variant of \textsc{Not-All-Equal 3-Sat} can be solved in polynomial time~\cite{moret88}.

The paper is structured as follows. In Section~\ref{sec:prel} we introduce basic notation and formally state the considered decision problems. The focus of Section~\ref{sec:nae} is laid on restricted instances of \textsc{Not-All-Equal 3-Sat}, and in Section~\ref{sec:3sat} we provide hardness results for \textsc{Monotone 3-Sat} in restricted settings. Finally, Section~\ref{sec:con} concludes the paper with a concise summary of the results and challenges for future research.  

\section{Preliminaries}\label{sec:prel}

Let $V = \{x_1, \ldots, x_n\}$ be a set of propositional \emph{variables}. For the remainder of the paper we simply say variable instead of propositional variable since all variables take on values in $\{T, F\}$, where $T$ represents true and $F$ false, respectively. A \emph{literal} is a variable or its negation, i.e., an element of $L_V = \{x_i, \overline{x_i} \mid x_i \in V\}$. A \emph{clause} is a subset of $L_V$, and a $k$-clause contains exactly $k$ distinct literals. Further, a clause is \emph{monotone} if either all contained variables are negated or none of them is. In the setting of nae-satisfiability, which we define later, a clause is monotone if none of the contained variables is negated, i.e., if the clause is a subset of $V$. A Boolean formula $C$ in \emph{conjunctive normal form} (CNF) is a collection of $m$ clauses, i.e., $C = \bigcup_{j=1}^m \{c_j\}$. It is also common to use logical connectives, e.g. $\vee$ and $\wedge$, to describe a Boolean formula. Then, $C$ is a conjunction of clause $\bigwedge_{j = 1}^m c_j$, where $c_j = (\ell_{j,1} \vee \ell_{j,2} \vee \ldots \vee \ell_{j,i_j})$ is a disjunction of literals. We use the set notation to emphasize that we do not allow duplicates of literals in clauses. For one result, where we allow duplicates, we describe a clause by a multiset instead. For instance, $\{x_1, x_1, x_3\}$ represents a clause in this setting that contains $x_1$ twice. We denote the total number of appearances of a variable $x_i \in V$ in a formula $C$ by $a(x_i)$. A Boolean formula is \emph{linear} if all pairs of distinct clauses share at most one variable. A \emph{truth assignment} is a mapping $\beta \colon V \rightarrow \{T, F\}$ which extends to literals in the obvious way, i.e., for $\ell=x_i$ we have $\beta(\ell)=\beta(x_i)$ and for $\ell=\overline{x_i}$ we have  $\beta(\ell) \in \{T, F\} \setminus \beta(x_i)$, $i \in \{1,2, \ldots, n\}$. A clause $c_j$ is \emph{satisfied} under $\beta$ if $\beta(\ell) = T$ for at least one $\ell \in c_j$. Further, $c_j$ is \emph{nae-satisfied} if there are literals $\ell, \ell' \in c_j$ such that $\beta(\ell) \neq \beta(\ell')$. A Boolean formula $C = \bigcup_{j=1}^m \{c_j\}$ in CNF is \emph{satisfiable} (\emph{nae-satisfiable}) if there exists a truth assignment $\beta \colon V \rightarrow \{T, F\}$ such that all clauses $c_1, \ldots, c_m$ are satisfied (nae-satisfied). We say that a truth assignment $\beta'$ for $V'$ \emph{extends} a truth assignment $\beta$ for $V$ if $V \subseteq V'$ and $\beta'(v) = \beta(v)$ for all $v \in V$. \\

\subsection{Problem statements}

The decision problems considered in this work are stated below; we abbreviate {\textsc{Not-All-Equal 3-Sat}} with {\textsc{NAE-3-Sat}}.

\begin{prb}
\noindent{\sc Monotone NAE-3-Sat-E4} \\
{\bf Instance.} A set $V$ of variables, and a collection $C$ of clauses over $V$ such that each clause $c \in C$ contains $|c| = 3$ distinct variables, every variable appears in exactly four clauses and there is no negation in the formula. \\
{\bf Question.} Is there a truth assignment for $V$ such that each clause in $C$ has at least one true literal and at least one false literal? \end{prb}

\begin{prb}
\noindent{\sc Monotone 3-Sat-$(p, q)$} \\
{\bf Instance.} A set $V$ of variables, and a collection $C$ of clauses over $V$ such that each clause $c \in C$ contains $|c| = 3$ distinct variables, either all or none of them negated, and every variable appears unnegated in exactly $p$ clauses and negated in exactly $q$ clauses, respectively.\\
{\bf Question.} Is there a truth assignment for $V$ such that each clause in $C$ has at least one true literal? \end{prb}

\begin{prb}
\noindent{\sc Monotone 3-Sat*-$(2, 2)$} \\
{\bf Instance.} A set $V$ of variables, and a collection $C$ of clauses over $V$ such that each clause $c \in C$ is a multiset containing $|c| = 3$ variables, either all or none of them negated, and every variable appears exactly twice negated and twice unnegated, respectively.\\
{\bf Question.} Is there a truth assignment for $V$ such that each clause in $C$ has at least one true literal? \end{prb}

\begin{prb}
\noindent{\sc Monotone 3-Sat-E4} \\
{\bf Instance.} A set $V$ of variables, and a collection $C$ of clauses over $V$ such that each clause $c \in C$ contains $|c| = 3$ distinct variables, either all or none of them negated, and every variable appears in exactly four clauses. \\
{\bf Question.} Is there a truth assignment for $V$ such that each clause in $C$ has at least one true literal? \end{prb}

Finally, dropping the monotonicity requirement, we will consider the following restricted variant of \textsc{3-Sat}. 

\begin{prb}
\noindent{\sc 3-Sat-E4} \\
{\bf Instance.} A set $V$ of variables, and a collection $C$ of clauses over $V$ such that each clause $c \in C$ contains $|c| = 3$ distinct variables, and every variable appears in exactly four clauses. \\
{\bf Question.} Is there a truth assignment for $V$ such that each clause in $C$ has at least one true literal? \end{prb}

Note that all of the above decision problems belong to the class NP. Hence, the NP-completeness proofs in this paper reduce to showing NP-hardness of the respective problem. 

\section{A simplified variant of \textsc{Not-All-Equal 3-Sat}}\label{sec:nae}

We begin our study with {\sc Monotone NAE-3-Sat-E4}, proving its NP-completeness in Section~\ref{sub:MonNAE}. This result, in turn, is then used in Section~\ref{sub:MonNAE-linear} to derive the even stronger result that {\sc Monotone NAE-3-Sat-E4} remains NP-complete even when restricted to linear formulas. 

\subsection{Hardness of \textsc{Monotone NAE-3-Sat-E4}}\label{sub:MonNAE}

For our first result, NP-completeness of {\sc Monotone NAE-3-Sat-E4}, we give two different proofs. The reason for doing so is that the first proof has the advantage of being relatively simple, while featuring the drawback of using an auxiliary gadget to increase the number of variable appearances; the latter is avoided in the second proof. 

\begin{thm}\label{thm:MonNAE_3SATE4}
{\sc Monotone NAE-3-Sat-E4} is NP-complete.
\end{thm}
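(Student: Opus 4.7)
The plan is to reduce from general \textsc{Monotone NAE-3-Sat}, known to be NP-complete by Porschen et al.~\cite{porschen14}, to \textsc{Monotone NAE-3-Sat-E4}. Given an instance $(V, C)$, for each $x \in V$ with $a(x) = k$ I would replace the $k$ occurrences of $x$ in $C$ with $k$ fresh distinct copies $x^{(1)}, \ldots, x^{(k)}$, so that each copy appears exactly once in the rewritten ``main'' clauses. It then remains to attach auxiliary clauses that (i) force $\beta(x^{(i)}) = \beta(x^{(j)})$ for all $i,j$ in every nae-satisfying assignment, and (ii) bring the total number of occurrences of every variable of the constructed instance --- copies and fresh auxiliaries alike --- to exactly four.

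The principal building block is a monotone \emph{inequality gadget} $I(u, v)$ consisting of the four nae-3-clauses $(u, v, a)$, $(u, v, b)$, $(u, v, c)$, $(a, b, c)$, with three fresh auxiliaries $a, b, c$. In any nae-satisfying extension this forces $\beta(u) \neq \beta(v)$: if $\beta(u) = \beta(v)$, then nae-satisfaction of the first three clauses forces $\beta(a) = \beta(b) = \beta(c)$ to the opposite common value, making $(a, b, c)$ monochromatic and violating the fourth clause. Conversely, any $\beta$ with $\beta(u) \neq \beta(v)$ extends to an nae-satisfying assignment of the gadget. Using $I(\cdot, \cdot)$ I can force all copies $x^{(i)}$ to agree, for instance by introducing a pivot $z_x$ and attaching $I(x^{(i)}, z_x)$ for each $i$, which forces every $x^{(i)}$ to equal $\neg \beta(z_x)$.

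The main obstacle I anticipate is the delicate \emph{occurrence-count} bookkeeping required to hit exactly four appearances for every variable of the constructed instance. In $I(u,v)$ the pair variables gain three occurrences while each auxiliary gains only two, and a naive single-pivot arrangement would leave $z_x$ with $3a(x)$ occurrences --- far too many. I would resolve this by (a) sharing the auxiliaries $a, b, c$ across a small bounded number of inequality gadgets so that their counts rise to exactly four; (b) replacing the single-pivot fan by a short chain or cycle of inequality gadgets through pivots whose own occurrences then add up to four; and (c) absorbing any residual deficit by attaching free-standing padding on a small 3-uniform, 4-regular, nae-2-colourable hypergraph, whose existence is easy to exhibit by hand. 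Correctness then becomes routine: every gadget is independently nae-satisfiable whenever its pair variables are set consistently with the (in)equality it enforces, so a nae-satisfying assignment of $(V, C)$ extends to one of the constructed instance and, conversely, any nae-satisfying assignment of the constructed instance induces one of $(V, C)$ by reading off the common value of each variable's copies.
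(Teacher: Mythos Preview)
Your high-level strategy---reduce from \textsc{Monotone NAE-3-Sat}, replace each occurrence of a variable by a fresh copy, enforce equality of the copies via monotone gadgets, and pad---is exactly the route taken in the paper's first proof. The inequality gadget you propose is correct, and your sharing idea in (a) does bring the auxiliaries to four occurrences (sharing $a,b,c$ across three pairs with a single $\{a,b,c\}$ clause still forces $u_i\neq v_i$ for each pair and gives each of $a,b,c$ exactly four occurrences).

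The gap is in step (b). In your gadget $I(u,v)$ each of $u,v$ picks up \emph{three} occurrences. A copy $x^{(i)}$ already has one occurrence in a main clause, so it can sit in exactly one pair-slot and reach four---good. But any pivot that actually links two copies must sit in at least two pair-slots and therefore lands at six or more occurrences; padding only raises counts, so this overshoot cannot be repaired. Conversely, if every pivot sits in only one pair-slot (three occurrences, then pad to four), the inequality graph is a perfect matching between copies and pivots and imposes no relation among the copies at all. So with this particular gadget there is no chain or cycle of pivots ``whose own occurrences then add up to four'' that also enforces equality of the copies. The paper's first proof sidesteps exactly this arithmetic obstruction: its non-equality gadget $\operatorname{NE}(x,y)$ contributes only \emph{two} occurrences to each of $x,y$ (it nests a copy of your $I$-gadget one level down), and from it builds an equality gadget $\operatorname{EQ}(x,y)$ in which each of $x,y$ contributes only \emph{one} occurrence; a cycle of $\operatorname{EQ}$-gadgets then gives every copy three occurrences, and a separate one-shot padding gadget brings everything to four. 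The paper's second proof goes further and designs bespoke $\operatorname{NE}$ and $\operatorname{EQ}$ gadgets whose internal auxiliaries already appear exactly four times, eliminating the padding step entirely.
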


\noindent\textit{Proof 1 of Theorem~\ref{thm:MonNAE_3SATE4}.}
We show NP-hardness of {\sc Monotone NAE-3-Sat-E4} by reduction from {\sc Monotone NAE-3-Sat} (see, e.g., Porschen et al.~\cite[Theorem 3]{porschen14} for a proof that the latter problem is NP-complete). Let $\mathcal{I} = (V, C)$ be an instance of {\sc Monotone NAE-3-Sat}. Let $n := |V|$ denote the number of variables, $m := |C|$ the number of clauses and recall that $a(x_i)$ denotes the number of appearances of a variable $x_i \in V$ in the formula $C$. Further, let the set of variables be given as $V := \{x_1,x_2,\ldots,x_n\}$. 

For each variable $x_i$, we replace the $j$th appearance with a new variable~$x_{i,j}$ and introduce the clauses
\[
\operatorname{EQ}(x_{i,a(x_i)}, x_{i,1}) \cup \bigcup_{j = 1}^{a(x_i) - 1} \operatorname{EQ}(x_{i,j}, x_{i,j+1}),
\]
where $\operatorname{EQ}(x_{i,s}, x_{i,t})$ is an \emph{equality gadget} (a set of clauses) enforcing that $x_{i,s}$ and $x_{i,t}$ are mapped to the same truth value by any satisfying assignment. More precisely, a truth assignment $\beta$ for $\{x_s, x_t\}$ can be extended to a truth assignment $\beta'$ for all variables appearing in $\operatorname{EQ}(x_{i,s}, x_{i,t})$ that nae-satisfies $\operatorname{EQ}(x_{i,s}, x_{i,t})$ if and only if $\beta(x_{i,s}) = \beta(x_{i,t})$. We construct this gadget in two steps. First, we define a \emph{non-equality gadget} enforcing that two variables are set to different truth values in any nae-satisfying truth assignment.
 
Consider the set of clauses   
\[
\operatorname{NE}(x, y) := \{\{x,y,a\}, \{x,y,b\}, \{a,b,u\}, \{a,b,v\}, \{a,b,w\}, \{u,v,w\}\},
\]
where $a,b,u,v,w$ are new variables not appearing anywhere else, e.g., the clause sets $\operatorname{NE}(x, y)$ and $\operatorname{NE}(y, z)$ do not have any common variables except of~$y$. In order to nae-satisfy the last clause in $\operatorname{NE}(x, y)$, at least one of $u,v,w$ is set to true and at least one of them is set to false. Hence, by construction of the three preceding clauses, $a$ and $b$ are set to different truth values. Then, due to the first two clauses $x$ and $y$ are set to different truth values in any truth assignment that nae-satisfies $\operatorname{NE}(x, y)$. Now, the equality gadget is defined as
\[
\operatorname{EQ}(x,y) := \operatorname{NE}(p, q) \cup \operatorname{NE}(p, r) \cup \{\{x,q,r\}, \{y,q,r\}\},
\]
where $p$, $q$ and $r$ are new variables not appearing anywhere else. Note that by construction of the two non-equality gadgets, $q$ and $r$ are set to the same truth value. Hence, due to the two last clauses, $x$ and $y$ are set to the same truth value. By symmetry of nae-satisfying truth assignments, we can, thus, extend any truth assignment $\beta$ for $\{x, y\}$ with $\beta(x) = \beta(y)$ to a truth assignment that nae-satisfies $\operatorname{EQ}(x,y)$.  

Note that each variable $x_{i,j}$ appears in two equality gadgets, once in each gadget, and in exactly one clause of the original instance. Moreover, each introduced variable appears in at most four clauses. With the following gadget, we can increase the appearances of a variable by one, while only introducing variables with exactly four appearances. Let 
\[
\operatorname{P1}(x) := \{\{x,a,b\}, \{a,c,d\}, \{a,b,e\}, \{a,d,e\}, \{b,c,d\}, \{b,c,e\}, \{c,d,e\}\},
\]
where $a,b,c,d,e$ are new variables not appearing anywhere else. Note that these clauses are satisfiable independently of the truth value of $x$ by setting each variable in $\{a,c,e\}$ true and each variable in $\{b,d\}$ false. Now, we can use this gadget to increase the appearances of each variable until it appears exactly four times. The number of introduced variables and clauses is clearly polynomial and the verification of the reduction is straightforward.  \hfill $\square$\\

We now present a second proof for Theorem~\ref{thm:MonNAE_3SATE4} which reduces from the more general  \textsc{NAE-{3-Sat*}} problem and does not require a separate gadget to increase the number of  variable appearances. The proof will make use of the two following lemmata. 

\begin{lem}\label{lem:NE-gadget} Let $\operatorname{NE}(x, y)$ be the following set of clauses, where $V_{\text{aux}} = \{a, b, \ldots, f\}$ are new variables.  

\begin{multicols}{3}
\begin{enumerate}
\item $\{x, a, b\}$
\item $\{y, c, d\}$
\item $\{y, e, f\}$
\item $\{c, e, f\}$
\item $\{b, c, e\}$
\item $\{a, c, f\}$
\item $\{a, d, e\}$
\item $\{a, b, d\}$
\item $\{b, d, f\}$
\end{enumerate}
\end{multicols}

Then, a truth assignment $\beta$ for $\{x, y\}$ can be extended to a truth assignment $\beta'$ for $\{x, y\} \cup V_{\text{aux}}$ that nae-satisfies $\operatorname{NE}(x, y)$ if and only if $\beta(x) \neq \beta(y)$.
\end{lem}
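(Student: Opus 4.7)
The plan is to handle the two directions separately, exploiting the global symmetry of NAE-satisfiability (if $\beta'$ nae-satisfies a formula, then so does its complement) to cut each direction in half.

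For the easy ("if") direction, I would simply exhibit an explicit extension for one of the two cases. For $\beta(x)=T$, $\beta(y)=F$, I would set $a:=F$, $b:=F$, $c:=T$, $d:=T$, $e:=F$, $f:=T$ and then verify by inspection that each of the nine clauses contains at least one true and at least one false literal. The case $\beta(x)=F$, $\beta(y)=T$ follows by complementing all truth values.

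For the forward ("only if") direction, I would suppose $\beta(x)=\beta(y)$ and derive a contradiction; by the same complementation symmetry, it suffices to treat $\beta(x)=\beta(y)=T$. Clauses 1--3 then immediately force each of the three pairs $\{a,b\}$, $\{c,d\}$, $\{e,f\}$ to contain at least one $F$, so the number $t$ of true auxiliary variables satisfies $t\le 3$. Next I would note that each auxiliary variable appears in exactly three of the six clauses 4--9 (an easy check from the list), so the total number of true literal-occurrences across clauses 4--9 equals $3t$. Since each such clause needs at least one $T$, we get $3t\ge 6$, i.e.\ $t\ge 2$, leaving only $t\in\{2,3\}$.

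The main work — and the only real obstacle — is then the bounded case analysis for $t=2$ and $t=3$. For $t=2$ the two true variables must lie in distinct pairs, and the counting above forces each of the six clauses 4--9 to contain exactly one $T$; hence the sets of clauses covered by the two true variables must be disjoint, but a quick check of the incidence lists (e.g.\ $a\in\{6,7,8\}$, $c\in\{4,5,6\}$, etc.) shows that any two variables from different pairs share at least one clause, a contradiction. For $t=3$ exactly one variable from each pair is true, giving only eight candidate assignments, and for each one I would point out a clause among 4--9 that is either all-true (e.g.\ clause 7 when $\{a,d,e\}$ are true) or all-false (e.g.\ clause 9 when the true set is $\{a,c,e\}$), completing the contradiction. $\hfill\square$
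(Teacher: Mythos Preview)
Your proof is correct. The satisfying assignment you give for $\beta(x)=T$, $\beta(y)=F$ checks out (and differs harmlessly from the paper's choice), and the use of complementation symmetry to halve each direction is exactly right. In the impossibility direction your counting argument is sound: each auxiliary variable does occur exactly three times in clauses~4--9, so $3t\ge 6$ and $t\le 3$ from clauses~1--3; the $t=2$ case collapses because the incidence sets of any two variables from different pairs always intersect (as you say); and each of the eight $t=3$ configurations indeed kills one of clauses~4--9 (in fact, clauses~5, 6, 7, 9 are precisely the four complementary pairs of triples, so every choice is either one of these clauses or its complement).

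The paper proves the same lemma by a different, purely implicational case analysis: it fixes $\beta(x)=\beta(y)=T$ and branches on the four possible truth values of $(\beta(a),\beta(c))$, then in each branch chases forced values through the clauses until a violated clause appears. Your approach trades that propagation for a global double-counting observation that isolates $t\in\{2,3\}$, handles $t=2$ uniformly via the incidence structure, and leaves only the eight symmetric $t=3$ cases. The paper's argument has fewer top-level cases but more ad~hoc implication chasing; yours exposes more of the gadget's combinatorial structure (the $3$-regularity of clauses~4--9) and makes the $t=2$ case essentially conceptual. Either route is fine.
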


\begin{proof}
First, we can nae-satisfy all clauses in $\operatorname{NE}(x, y)$ by setting all variables in $\{x, c, d, e\}$ true (resp. false) and all variables in $\{y, a, b, f\}$ false (resp. true). 
Second, assume towards a contradiction that there is an nae-satisfying assignment $\beta$ with $\beta(x) = \beta(y) = T$. We consider all four possible assignments of the variables $a$ and $c$ to truth values.

\noindent\underline{Case $\beta(a) = F, \beta(c) = F$:} By clause $6$ we have $\beta(f) = T$. Then, by clause $3$ we have $\beta(e) = F$. By clauses $5$ and $7$ we have $\beta(b) = T$ and $\beta(d) = T$, respectively. Hence, all literals in clause $9$ evaluate to true, i.e., $\beta$ does not nae-satisfy clause~$9$.  

\noindent\underline{Case $\beta(a) = F, \beta(c) = T$:} By clause $2$ we have $\beta(d) = F$. Then, by clauses $7$ and $8$ we have $\beta(e) = T$ and $\beta(b) = T$, respectively. By clause $5$ we have $\beta(b) = F$. Thus, we have $\beta(b) \neq \beta(b)$, a contradiction.

\noindent\underline{Case $\beta(a) = T, \beta(c) = F$:} By clause $1$ we have $\beta(b) = F$. Then, by clause $5$ we have $\beta(e) = T$. By clause $7$ we have $\beta(d) = F$. Then, by clause $9$ we have $\beta(f) = T$. Hence, all literals in clause $3$ evaluate to true, i.e., $\beta$ does not nae-satisfy clause $3$.   

\noindent\underline{Case $\beta(a) = T, \beta(c) = T$:} By clauses $1$, $2$ and $6$ we have $\beta(b) = F$, $\beta(d) = F$ and $\beta(f) = F$, respectively. Hence, all literals in clause $9$ evaluate to false, i.e., $\beta$ does not nae-satisfy clause $9$.

By symmetry of nae-satisfying truth assignments, there is no nae-satisfying assignment $\beta$ with $\beta(x) = \beta(y) = F$.
\end{proof}

\begin{lem}\label{lem:EQ-gadget} Let $\operatorname{EQ}(x, y)$ be the following set of clauses, where $V_{\text{aux}} = \{a, b, \ldots, i\}$ are new variables. 

\begin{multicols}{5}
\begin{enumerate}
\item $\{x, a, b\}$
\item $\{y, c, d\}$
\item $\{y, e, f\}$
\item $\{a, c, g\}$
\item $\{a, e, d\}$
\item $\{a, h, i\}$
\item $\{b, e, h\}$
\item $\{b, f, h\}$
\item $\{b, g, i\}$
\item $\{c, e, i\}$
\item $\{c, f, g\}$
\item $\{d, g, h\}$
\item $\{d, f, i\}$
\end{enumerate}
\end{multicols}

Then, a truth assignment $\beta$ for $\{x, y\}$ can be extended to a truth assignment $\beta'$ for $\{x, y\} \cup V_{\text{aux}}$ that nae-satisfies $\operatorname{EQ}(x, y)$ if and only if $\beta(x) = \beta(y)$.
\end{lem}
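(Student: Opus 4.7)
The plan is to follow the template of the proof of Lemma~\ref{lem:NE-gadget}: establish the \emph{if} direction by exhibiting an explicit extension, and the \emph{only if} direction by a case analysis that derives a contradiction whenever $\beta(x)\neq \beta(y)$. By the global-complementation symmetry of nae-satisfaction (flipping every truth value preserves the property), in the \emph{if} direction it suffices to consider $\beta(x)=\beta(y)=T$, and in the \emph{only if} direction it suffices to rule out $\beta(x)=T,\,\beta(y)=F$.

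For the \emph{if} direction I write down one concrete extension of $\beta$ to the nine auxiliary variables $a,b,\ldots,i$ and verify clause by clause that each of the $13$ clauses contains both a true literal and a false literal; the case $\beta(x)=\beta(y)=F$ then follows by complementation.

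For the \emph{only if} direction, assume toward a contradiction that some extension $\beta$ nae-satisfies $\operatorname{EQ}(x,y)$ with $\beta(x)=T$ and $\beta(y)=F$. I then branch on the truth values of a carefully chosen pair of auxiliary variables. A natural first choice is $(\beta(a),\beta(b))$, because clause~$1$ already forces at least one of $a,b$ to be false and thereby leaves only three sub-cases. In each sub-case I propagate the forced values through the remaining clauses in a fixed order, first through clauses~$2$ and~$3$ (which contain $y$) to constrain $c,d,e,f$, and then through the clauses containing $g,h,i$, until either some clause is forced to be monochromatic (all true or all false) or a variable is assigned both truth values, in either case contradicting the assumption that $\beta$ nae-satisfies $\operatorname{EQ}(x,y)$. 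The main obstacle is purely bookkeeping: with $13$ clauses and $9$ auxiliary variables (each appearing in exactly four clauses), a careless branching order can leave open sub-cases that require further splitting, so the challenge is to choose the branching variables and the propagation order so that each sub-case closes within a handful of deductions, mirroring the compact case analysis used in Lemma~\ref{lem:NE-gadget}.
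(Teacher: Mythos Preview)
Your proposal is correct and follows essentially the same approach as the paper: an explicit nae-satisfying extension for the \emph{if} direction together with the symmetry argument, and a case analysis for the \emph{only if} direction. The only cosmetic differences are that the paper treats the case $\beta(x)=F,\ \beta(y)=T$ (rather than your $\beta(x)=T,\ \beta(y)=F$) and branches on the pair $(\beta(a),\beta(c))$ (four sub-cases, one of which is further split on $\beta(e)$) instead of your $(\beta(a),\beta(b))$; your choice has the pleasant feature that clause~1 immediately kills one sub-case, but otherwise the two analyses are of comparable length.
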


\begin{proof}
First, we can nae-satisfy all clauses in $\operatorname{EQ}(x, y)$ by setting all variables in $\{x, y, e, g, h, i\}$ true (resp. false) and all variables in $\{a,b,c,d,f\}$ false (resp. true). Hence, we can extend a truth assignment $\beta$ for $\{x, y\}$ to a truth assignment $\beta'$ for $\{x, y\} \cup V_{\text{aux}}$ that nae-satisfies $\operatorname{EQ}(x, y)$ if $\beta(x) = \beta(y)$. 

Second, assume towards a contradiction that $\beta(x) = F$ and $\beta(y) = T$ for a truth assignment $\beta$ that nae-satisfies $\operatorname{EQ}(x, y)$.

\noindent\underline{Case $\beta(a) = F, \beta(c) = F$:} By clauses $1$ and $4$ we have $\beta(b) = T$ and $\beta(g) = T$, respectively. Then, by clause $9$ we have $\beta(i) = F$. By clauses $6$ and $10$ we have $\beta(h) = T$ and $\beta(e) = T$, respectively. But then all literals in clause $7$ evaluate to true, i.e., $\beta$ does not nae-satisfy clause $7$, a contradiction to our assumption.

\noindent\underline{Case $\beta(a) = F, \beta(c) = T$:} By clauses $1$ and $2$ we have $\beta(b) = T$ and $\beta(d) = F$, respectively. Then, by clause $5$ we have $\beta(e) = T$. By clauses $7$ and $10$ we have $\beta(h) = F$ and $\beta(i) = F$, respectively. Therewith all literals in clause $6$ evaluate to false and hence $\beta$ does not nae-satisfy clause $6$, a contradiction.

\noindent\underline{Case $\beta(a) = T, \beta(c) = F$:}
\begin{itemize}
\item Case $\beta(e) = F$: By clause $10$ we have $\beta(i) = T$. Then, by clause $6$ we have $\beta(h) = F$. By clause $7$ we have $\beta(b) = T$. Then, by clause $9$ we have $\beta(g) = F$. By clause $12$ we have $\beta(d) = T$. Then, by clause $13$ we have $\beta(f) = F$. This, however, implies that $\beta$  does not nae-satisfy clause $11$, a contradiction.
\item Case $\beta(e) = T$: By clauses $3$ and $5$ we have $\beta(f) = F$ and $\beta(d) = F$, respectively. Then, by clauses $13$ and $11$ we have $\beta(i) = T$ and $\beta(g) = T$, respectively. By clauses $9$ and $6$ we have $\beta(b) = F$ and $\beta(h) = F$, respectively. Hence, $\beta$  does not nae-satisfy clause $8$, in contradiction with our assumption.
\end{itemize}
\noindent\underline{Case $\beta(a) = T, \beta(c) = T$:} By clauses $2$ and $4$ we have $\beta(d) = F$ and $\beta(g) = F$, respectively. Then, by clause $12$ we have $\beta(h) = T$. By clause $6$ we have $\beta(i) = F$. Then, by clauses $13$ and $9$ we have $\beta(f) = T$ and $\beta(b) = T$, respectively. Thus, $\beta$  does not nae-satisfy clause $8$, a contradiction.

Hence, there is no truth assignment $\beta$ with $\beta(x) = F$ and $\beta(y) = T$ that nae-satisfies $\operatorname{EQ}(x, y)$. By symmetry of nae-satisfying truth assignments, there is also no truth assignment $\beta$ with $\beta(x) = T$ and $\beta(y) = F$ that can be extended to a truth assignment that nae-satisfies $\operatorname{EQ}(x, y)$. 
\end{proof}

Now, we have the tools we need for our second proof of Theorem~\ref{thm:MonNAE_3SATE4}, i.e., that {\sc Monotone NAE-3-Sat-E4} is NP-complete.\\

\noindent\textit{Proof 2 of Theorem~\ref{thm:MonNAE_3SATE4}.}
We show NP-hardness by reduction from \textsc{NAE-3-Sat*}. NP-completeness of \textsc{NAE-3-Sat*} was established by Schaefer~\cite{schaefer78}. Let $\mathcal{I} = (V, C)$ be an instance of \textsc{NAE-3-Sat*}. Let $n := |X|$ denote the number of variables, $m := |C|$ the number of clauses and recall that $a(x_i)$ denotes the number of appearances of a variable $x_i \in V$ in the formula $C$. Further, let the set of variables be given as $V := \{x_1,x_2,\ldots,x_n\}$. 

For each variable $x_i \in V$, we replace the $j$th appearance with a new variable~$x_{i,j}$, such that $x_{i,j}$ is unnegated for $j \leq u(x_i)$ and negated for $j > u(x_i)$, where $u(x_i) \in \{0,1,\ldots,a(x_i)\}$ is the number of unnegated appearances of $x_i$ in~$C$. First, we make sure that, for each $x_i \in V$, all variables in $\{x_{i,j} \mid j \leq u(x_i)\}$ are mapped to the same truth value in any nae-satisfying assignment by introducing the clauses
\[
\bigcup_{i = 1}^n \bigcup_{j = 1}^{u(x_i)-1} EQ(x_{i,j}, x_{i,j+1}),  
\]
where $EQ(x_{i,j}, x_{i,j+1})$ is the equality gadget defined in Lemma~\ref{lem:EQ-gadget}. Second, we do the same for the variables in $\{x_{i,j} \mid j > u(x_i)\}$, i.e., we introduce the clauses
\[
\bigcup_{i = 1}^n \bigcup_{j = u(x_i) + 1}^{a(x_i)-1} EQ(x_{i,j}, x_{i,j+1}).
\]
Now, we delete all negations and make sure that $x_{i,j}$ and $x_{i,j'}$ with $j \leq u(x_i)$ and $j' > u(x_i)$ are to be mapped to different truth values by introducing 
\[
\bigcup_{\substack{1 \leq i \leq n \\ u(x_i) \not \in \{ 0, a(x_i)\}}} NE(x_{i,u(x_i)}, x_{i,u(x_i)+1}), 
\]
where $NE(x_{i,j}, x_{i,j'})$ is the non-equality gadget defined in Lemma~\ref{lem:NE-gadget}. Next, in order to get the right number of variable appearances, we introduce for each $x_i$ that appears only negated or only unnegated the clauses $EQ(x_{i,a(x_i)}, x_{i,1})$ and for each variable $x_{i'}$ that appears both negated and unnegated we introduce the clauses $NE(x_{i',a(x_{i'})}, x_{i',1})$. Thus, for each variable $x_i$ we get the ring structure
\[
EQ(x_{i,1}, x_{i,2}) \cup EQ(x_{i,2}, x_{i,3}) \cup \ldots \cup EQ(x_{i,a(x_i)-1}, x_{i,a(x_i)}) \cup EQ(x_{i,a(x_i)}, x_{i,1}),
\]
if $x_i$ appears only negated or only unnegated, and we get the ring structure
\begin{align*}
&\bigcup_{j = 1}^{u(x_i)-1} EQ(x_{i,j}, x_{i,{j+1}}) \cup NE(x_{i,u(x_i)}, x_{i,u(x_i)+1}) \cup {}\\ 
&\bigcup_{j = u(x_i) + 1}^{a(x_i)-1} EQ(x_{i,j}, x_{i,{j+1}}) \cup NE(x_{i,a(x_i)}, x_{i,1}),
\end{align*}
otherwise. It is straightforward to verify that the resulting instance is nae-satisfiable if and only if $\mathcal{I}$ is nae-satisfiable. 

 Note that for $a(x_i)>1$ each variable $x_{i,j}$ appears exactly once as the first argument and exactly once as the second argument of a gadget (it is not important of which gadget) yielding three appearances of $x_{i,j}$. Observe that in the case $a(x_i)=1$ we introduce $EQ(x_{i,a(x_i)}, x_{i,1}) = EQ(x_{i,1}, x_{i,1})$ only, hence yielding three appearances of $x_{i,1}$ by means of that gadget. Since each $x_{i,j}$ also replaces exactly one appearance of $x_i$ in the clause set $C$, we get exactly four appearances of $x_{i,j}$ in the constructed instance. All other variables introduced by the gadgets (variables of the gadgets that are not arguments are always newly created, i.e., these variables are \emph{not} shared between gadgets) appear exactly four times by construction. Hence, the resulting instance is indeed an instance of {\sc Monotone NAE-3-Sat-E4}. We conclude the proof by remarking that the transformation is  polynomial. \hfill $\square$
 
\subsection{Hardness of \textsc{Monotone NAE-3-Sat-E4} for linear formulas}\label{sub:MonNAE-linear}

In this section, we strengthen our result from the previous section by showing that {\sc Monotone NAE-3-Sat-E4} remains NP-complete even when restricted to linear formulas. We begin by stating the following lemma.  

\begin{lem}\label{lem:Linear-gadget} Let $\operatorname{EQ}(x, y, z, u)$ be the following set of clauses, where $V_{\text{aux}} = \{a, b, \ldots, f\}$ are new variables. 

\begin{multicols}{4}
\begin{enumerate}
\item $\{x, a, e\}$
\item $\{x, b, d\}$
\item $\{x, c, f\}$
\item $\{y, a, b\}$
\item $\{y, c, e\}$
\item $\{y, d, f\}$
\item $\{z, a, f\}$
\item $\{z, c, d\}$
\item $\{z, u, b\}$
\item $\{u, a, c\}$
\item $\{u, d, e\}$
\item $\{b, e, f\}$
\end{enumerate}
\end{multicols}

Then, a truth assignment $\beta$ for $\{x, y, z, u\}$ can be extended to a truth assignment $\beta'$ for $\{x, y, z, u\} \cup V_{\text{aux}}$ that nae-satisfies $\operatorname{EQ}(x, y, z, u)$ if and only if $\beta(x) = \beta(y) = \beta(z) = \beta(u)$. In addition, the above set of clauses is linear if the variables $x, y, z, u$ are pairwise distinct.
\end{lem}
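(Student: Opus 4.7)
The plan is to first prove sufficiency by exhibiting an explicit extension when $\beta(x)=\beta(y)=\beta(z)=\beta(u)=T$: setting $\beta(a)=\beta(b)=\beta(c)=\beta(d)=F$ and $\beta(e)=\beta(f)=T$ nae-satisfies each of clauses 1--11 (each contains a true literal from $\{x,y,z,u\}$ together with at least one false auxiliary) and also clause 12 $=\{b,e,f\}$ (via $b=F$ and $e=T$). By the global complementation symmetry of nae-satisfiability, the flipped assignment then handles the all-$F$ case.

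For necessity, I argue by contradiction, assuming an nae-satisfying $\beta$ exists in which $\beta(x),\beta(y),\beta(z),\beta(u)$ are not all equal. Global complementation reduces the $2^4-2=14$ non-monochromatic patterns on $(\beta(x),\beta(y),\beta(z),\beta(u))$ to seven representatives, which I dispatch one by one by propagating forced assignments through the nae-clauses (two already-agreeing literals in a $3$-clause force the third to the opposite value) until some remaining clause is violated. For instance, in pattern $(T,F,F,F)$, clause 9 forces $\beta(b)=T$; then clauses 2 and 11 force $\beta(d)=F$ and $\beta(e)=T$; then clause 1 forces $\beta(a)=F$ and clause 7 forces $\beta(f)=T$; finally clause 3 forces $\beta(c)=F$, and clause 8 becomes the all-$F$ set $\{z,c,d\}$, contradicting nae-satisfaction. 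The six remaining representative patterns yield analogous short propagation chains, occasionally branching on the value of a single auxiliary.

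For linearity, I enumerate the three pairs of variables contributed by each of the twelve clauses and check that the resulting $36$ pair-tokens are pairwise distinct under the hypothesis that $x,y,z,u$ are pairwise distinct (and distinct from the new variables $a,\ldots,f$). Inspection shows that $\{z,u\}$ is the only pair of two main variables appearing anywhere (and only in clause 9); each main-auxiliary pair occurs in a unique clause; and the auxiliary-auxiliary pairs contributed across the twelve clauses are precisely the thirteen distinct pairs $\{a,e\},\{b,d\},\{c,f\},\{a,b\},\{c,e\},\{d,f\},\{a,f\},\{c,d\},\{a,c\},\{d,e\},\{b,e\},\{b,f\},\{e,f\}$. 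Hence any two distinct clauses share at most one variable. The main obstacle is the necessity direction: no single case is deep, but there are seven to handle, and in each one must pick a good starting clause for propagation so that the chain of forced values is short and ends in a clause that is monochromatically $T$ or $F$.
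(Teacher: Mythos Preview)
Your proposal is correct. The sufficiency witness you give ($a=b=c=d=F$, $e=f=T$) differs from the paper's ($a=b=c=d=f=F$, $e=T$) but is easily checked to work; your linearity argument is a more explicit version of the paper's ``by inspection''; and your sample necessity case $(T,F,F,F)$ propagates cleanly to a contradiction at clause~8, with the remaining six representatives dispatched analogously.

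The only notable methodological difference is in the organisation of the necessity direction. You handle all seven complementation-representatives directly by propagation. The paper instead first isolates the role of $x$ and $y$: observing that when $\beta(x)\neq\beta(y)$ the first six clauses reduce to a $2$-SAT instance over $\{a,\ldots,f\}$, and applying resolution to extract a cyclic implication chain forcing $\beta(b)=\beta(e)=\beta(f)$, which immediately kills clause~12. This gives $\beta(x)=\beta(y)$ ``for free'', after which only the three possibilities for $(\beta(z),\beta(u))\in\{(T,T),(T,F),(F,T)\}$ remain (assuming $\beta(x)=\beta(y)=F$ by symmetry). Your route is a little longer but entirely elementary; the paper's route trades three of your seven cases for one short resolution argument.
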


\begin{proof}
First, by setting all variables in $\{x, y, z, u, e\}$ true and all variables in $\{a,b,c,d,f\}$ false we can nae-satisfy all clauses in $\operatorname{EQ}(x, y, z, u)$. Further, by flipping the truth values for these sets, we obtain a nae-satisfying truth assignment where $x, y, z$ and $u$ are all set false. Second, we show that $\beta(x) = \beta(y) = \beta(z) = \beta(u)$ for each assignment $\beta$ that nae-satisfies $\operatorname{EQ}(x, y, z, u)$. Let $\beta$ be a nae-satisfying assignment. Assume towards a contradiction that $\beta(x) \neq \beta(y)$. By symmetry of nae-satisfying truth assignments, we may assume that $\beta(x) = F$ and $\beta(y) = T$. Then, $\beta$ nae-satisfies the first six clauses if and only if $\beta$ satisfies (not necessarily nae-satisfies) the following set of 2-clauses:
\[
\{\{a, e\}, \{b, d\}, \{c, f\}, \{\bar{a}, \bar{b}\}, \{\bar{c}, \bar{e}\}, \{\bar{d}, \bar{f}\}\}
\]
Now, using resolution we obtain clauses $\{\bar{b}, e\}, \{\bar{e}, f\}, \{\bar{f}, b\}$ which are satisfied if $\beta$ satisfies the above set of 2-clauses. Since the inferred clauses form a cyclic implication chain, we have $\beta(b) = \beta(e) = \beta(f)$. Thus, clause 12 is not nae-satisfied which is a contradiction to the assumption that $\beta$ nae-satisfies $\operatorname{EQ}(x, y, z, u)$. Hence, $\beta(x) = \beta(y)$ and, by symmetry of nae-satisfying truth assignments, we may assume that $\beta(x) = \beta(y) = F$. If $\beta(z) = \beta(u) = F$, we are done. Let us consider the three remaining cases:
\begin{itemize}
\item If $\beta(z) = \beta(u) = T$, then $\beta(b) = F$ by clause 9. By clauses 2 and 4, we have $\beta(d) = T$ and $\beta(a) = T$, respectively. Then, by clause 7 and 11, we have $\beta(f) = F$ and $\beta(e) = F$, respectively. Thus, clause 12 is not nae-satisfied. Again, this is a contradiction to the assumption that $\beta$ nae-satisfies $\operatorname{EQ}(x, y, z, u)$. 
\item If $\beta(z) = T$ and $\beta(u) = F$, then $\beta$ nae-satisfies clauses 2, 6, 7, 8, 10 and 11 if and only if $\beta$ satisfies (again, not necessarily nae-satisfies) the following set of 2-clauses: 
\[
\{\{b, d\}, \{d, f\}, \{\bar{a}, \bar{f}\}, \{\bar{c}, \bar{d}\}, \{a, c\}, \{d, e\}\}.
\]
Using resolution, we obtain clauses $\{\{\bar{f} , c\}, \{\bar{c} , f\}, \{\bar{c} , b\}, \{\bar{c} , e\}\}$ which are satisfied by $\beta$ since $\beta$ satisfies the above set of 2-clauses. Now, by the first two inferred clauses and clause 3 (recall that $\beta(x) = F$), we have $\beta(c) = \beta(f) = T$. Then, by the latter two inferred clauses, we have $\beta(b) = \beta(e) = T$. Thus, clause 12 is not nae-satisfied, a contradiction.
\item If $\beta(z) = F$ and $\beta(u) = T$, then $\beta$ nae-satisfies clauses 1, 2, 6, 8, 10 and 11 if and only if $\beta$ satisfies the following set of 2-clauses: 
\[
\{\{a, e\}, \{b, d\}, \{d, f\}, \{c, d\}, \{\bar{a}, \bar{c}\}, \{\bar{d}, \bar{e}\}\}.
\]
Using resolution, we obtain clauses $\{\{\bar{c},\, e\},\, \{\bar{e},\, c\},\, \{\bar{e},\, b\},\, \{\bar{e},\, f\}\}$ which leads to a contradiction in a similar way as in the previous case (i.e., $\beta$ does not nae-satisfy clause 12). 
\end{itemize}
Hence, we conclude that $\beta(x) = \beta(y) = \beta(z) = \beta(u)$ for each assignment $\beta$ that nae-satisfies $\operatorname{EQ}(x, y, z, u)$. A truth assignment $\beta$ for $\{x, y, z, u\}$ can, thus, be extended to a truth assignment $\beta'$ for $\{x, y, z, u\} \cup V_{\text{aux}}$ that nae-satisfies $\operatorname{EQ}(x, y, z, u)$ if and only if $\beta(x) = \beta(y) = \beta(z) = \beta(u)$.

By considering each pair of distinct clauses in $\operatorname{EQ}(x, y, z, u)$ it is easy to verify that the set of clauses is linear if the variables $x, y, z, u$ are pairwise distinct.
\end{proof}

\begin{thm}
{\sc Monotone NAE-3-Sat-E4} is NP-complete for linear formulas.
\end{thm}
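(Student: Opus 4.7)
The plan is to reduce from \textsc{Monotone NAE-3-Sat-E4} itself, whose NP-completeness was established in Theorem~\ref{thm:MonNAE_3SATE4}, and to exploit the linearity gadget from Lemma~\ref{lem:Linear-gadget}. Given an instance $(V, C)$ of \textsc{Monotone NAE-3-Sat-E4}, I would first decouple the four appearances of each variable by renaming: for every $x_i \in V$, replace the $j$th appearance of $x_i$ in $C$ by a fresh variable $x_{i,j}$, $j \in \{1,2,3,4\}$. Writing $C'$ for the resulting clause set, every variable of $C'$ now occurs in exactly one clause of $C'$, so $C'$ is vacuously linear.

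Next, to make sure the four copies of $x_i$ collapse back to a single truth value, I would add, for each $x_i \in V$, the gadget $\operatorname{EQ}(x_{i,1}, x_{i,2}, x_{i,3}, x_{i,4})$ of Lemma~\ref{lem:Linear-gadget}, with the six auxiliary variables $a,b,\dots,f$ chosen fresh for each gadget. By Lemma~\ref{lem:Linear-gadget}, any nae-satisfying assignment of the whole resulting formula must set $x_{i,1}, x_{i,2}, x_{i,3}, x_{i,4}$ to a common value, and conversely any assignment of $(V,C)$ extends; hence the constructed instance is nae-satisfiable if and only if $(V,C)$ is.

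It then remains to verify the two structural conditions. For the occurrence count: in the list of clauses of Lemma~\ref{lem:Linear-gadget} each of the interface variables $x,y,z,u$ appears exactly three times and each auxiliary variable exactly four times, so each $x_{i,j}$ accumulates $3+1 = 4$ appearances (three in its gadget plus one in $C'$) and each auxiliary variable appears four times inside its gadget. For linearity, there are four types of clause pairs to check: (i) two clauses of $C'$ share no variable at all; (ii) two clauses belonging to gadgets for different $x_i$'s share no variable, since auxiliary variables are gadget-private and $x_{i,j}$ belongs to a unique gadget; (iii) a clause of $C'$ and a clause of the gadget for $x_i$ meet only in the single variable $x_{i,j}$ that $C'$ contributes; and (iv) two clauses within one gadget share at most one variable by the linearity statement in Lemma~\ref{lem:Linear-gadget} (applied with $x,y,z,u$ pairwise distinct, which holds by construction).

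The construction is clearly polynomial, so this completes the reduction. I do not expect a real obstacle: the work has been front-loaded into Lemma~\ref{lem:Linear-gadget}, which was engineered precisely to be linear \emph{and} to exhibit exactly the occurrence pattern $(3,3,3,3)$ on its interface and $(4,4,4,4,4,4)$ on its auxiliary variables, so a single gadget per variable simultaneously enforces equality, preserves linearity, and closes the occurrence budget to exactly four.
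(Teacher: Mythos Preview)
Your proposal is correct and follows essentially the same route as the paper: reduce from \textsc{Monotone NAE-3-Sat-E4}, decouple each variable into four copies, and glue them back with one instance of the $\operatorname{EQ}(x_{i,1},x_{i,2},x_{i,3},x_{i,4})$ gadget per variable, then check occurrence counts and linearity case by case. The only point the paper makes more explicitly is that clause~9 of the gadget contains \emph{two} interface variables ($z$ and $u$, i.e.\ $x_{i,3}$ and $x_{i,4}$), so your case~(iii) relies on the fact that no clause of $C'$ can contain two copies $x_{i,j}$, $x_{i,j'}$ of the same original variable---which is exactly the distinct-literals hypothesis on the source instance; your argument covers this, just a bit implicitly.
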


\begin{proof}
We show NP-hardness by reduction from {\sc Monotone NAE-3-Sat-E4}, for which NP-hardness was established in Theorem~\ref{thm:MonNAE_3SATE4}. Let $\mathcal{I} = (V, C)$ be an instance of  {\sc Monotone NAE-3-Sat-E4}. Let $n := |V|$ denote the number of variables, $m := |C|$ the number of clauses and let the set of variables be given as $V := \{x_1,x_2,\ldots,x_n\}$. For each variable $x_i \in V$, we replace the $j$th appearance with a new variable~$x_{i,j}$. Then, we make sure that, for each $x_i \in V$, all variables in $\{x_{i,1}, x_{i,2}, x_{i,3}, x_{i,4}\}$ are mapped to the same truth value in any nae-satisfying truth assignment by introducing the clauses
\[
\bigcup_{i=1}^n \operatorname{EQ}(x_{i,1}, x_{i,2}, x_{i,3}, x_{i,4}), 
\]
where $\operatorname{EQ}(x_{i,1}, x_{i,2}, x_{i,3}, x_{i,4})$ is the equality gadget defined in Lemma~\ref{lem:Linear-gadget}. The gadgets do not share any variables, i.e., each instance of the equality gadget has its own newly created auxiliary variables. Note that each variable still appears exactly four times, once in the original clause set and three times in an equality gadget. Further, since the variables $x_{i,1}, x_{i,2}, x_{i,3}, x_{i,4}$ are pairwise distinct, the subformulas defined by the equality gadgets are linear (see Lemma~\ref{lem:Linear-gadget}). Observe that the clauses of the original instance are pairwise disjoint after the variable replacement and each of these clauses shares at most one variable with any clause introduced by the gadgets. Note that each clause, except clause 9, in the $i$th instance of the equality gadget contains at most one variable that appears in the original clause set, i.e., at most one variable $x_{i,j}$ with $1 \leq i \leq n$ and $1 \leq j \leq 4$. Even though clause 9 (see the clause set introduced in Lemma~\ref{lem:Linear-gadget}) contains two variables $x_{i,3}$ and $x_{i,4}$ that appear outside the gadget, there is no other clause that contains both of them (otherwise some clause of the given instance of {\sc Monotone NAE-3-Sat-E4} contains the variable $x_i$ twice, a contradiction). Hence, the constructed formula is linear. By Lemma~\ref{lem:Linear-gadget} it follows that the constructed instance is nae-satisfiable if and only if $\mathcal{I}$ is nae-satisfiable.

We conclude the proof by remarking that the transformation is polynomial.\end{proof}

\section{Simplified variants of \textsc{Monotone 3-Sat}}\label{sec:3sat}

In this section, the focus is laid on restricted variants of {\sc Monotone 3-Sat}. In Section~\ref{sub:balanced} we consider the case of \textit{balanced} variable appearances, where each variable appears  unnegated and negated equally often. 
In Section~\ref{sub:once-negated} {\sc Monotone 3-Sat} is analyzed restricted to instances in which each variable appears exactly once negated. Section~\ref{sub:six} deals with a full dichotomy result for {\sc Monotone 3-Sat} when each variable appears exactly six times. Finally, we consider {\sc Monotone 3-Sat} restricted to instances in which each variable appears either three times unnegated and once negated or  once unnegated and three times negated in Section~\ref{sub:a-restr}. 

\subsection{Balanced variable appearances}\label{sub:balanced}

Section~\ref{sub:balanced} is structured as follows. We begin with a simple corollary stating NP-completeness of {\sc Monotone 3-Sat-(4,4)}, even in a restricted setting. 
Then we turn to {\sc Monotone 3-Sat-(3,3)} and, by the use of several lemmata, show its NP-completeness, leading to our first main result in the section that {\sc Monotone 3-Sat-$(k,k)$} is intractable for any choice of $k\geq 3$. \newline
Finally, we turn to instances in which each variable appears exactly twice unnegated and exactly twice negated. We show that {\sc Monotone 3-Sat-(2,2)} is either trivial, i.e., each instance is satisfiable, or NP-complete. That is, in order to confirm NP-completeness, it would suffice to find an unsatisfiable instance of {\sc Monotone 3-Sat-(2,2)}. We conclude Section~\ref{sub:balanced}, however, with proving NP-completeness for the case that the literals in the 3-clauses are not necessarily distinct, i.e., NP-completeness of {\sc Monotone 3-Sat*-(2,2)}. 

\subsubsection{\textsc{Monotone 3-Sat-}$(k,k)$, for $k\geq 3$}

\begin{cor}
{\sc Monotone 3-Sat-(4,4)} is NP-complete, even if no pair of clauses has exactly two variables and more than one literal in common.
\end{cor}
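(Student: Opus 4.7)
The plan is to derive this corollary directly from the NP-completeness of \textsc{Monotone NAE-3-Sat-E4} restricted to linear formulas, which was established immediately above. Given a linear instance $\mathcal{I} = (V, C)$ of that problem, I would build an instance $(V, C')$ of \textsc{Monotone 3-Sat-(4,4)} via the standard NAE-to-monotone reduction: for each clause $c = \{x, y, z\} \in C$, add both the positive clause $\{x, y, z\}$ and the negative clause $\{\overline{x}, \overline{y}, \overline{z}\}$ to $C'$.

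Correctness then reduces to the textbook observation that a truth assignment nae-satisfies $c$ if and only if it satisfies both $\{x, y, z\}$ and $\{\overline{x}, \overline{y}, \overline{z}\}$; hence $\mathcal{I}$ is nae-satisfiable iff $(V, C')$ is satisfiable. Since every variable appears in exactly four clauses of $C$, it appears exactly four times unnegated and exactly four times negated in $C'$, and every clause of $C'$ is monotone by construction.

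The one step that genuinely uses linearity is checking the extra structural condition. I would distinguish four types of pairs of clauses in $C'$: (i) two positive clauses arising from distinct $c, c' \in C$; (ii) two negative clauses arising from distinct $c, c' \in C$; (iii) one positive and one negative clause arising from distinct $c, c' \in C$; and (iv) the positive and the negative clause arising from the same $c \in C$. In cases (i)--(iii), the pair shares exactly the variables that $c$ and $c'$ share in $\mathcal{I}$, which is at most one by linearity of $\mathcal{I}$. In case (iv), the two clauses share all three variables but no literals at all. Hence no pair of clauses in $C'$ shares exactly two variables, so the forbidden configuration never occurs and the structural condition is vacuously satisfied. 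I do not anticipate any serious obstacle; the reason it is important to reduce from the linear variant of \textsc{Monotone NAE-3-Sat-E4}, rather than from the plain version, is precisely to avoid creating pairs with two shared literals in case~(i) or~(ii).
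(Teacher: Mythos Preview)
Your proposal is correct and is exactly the paper's approach: reduce from the linear variant of \textsc{Monotone NAE-3-Sat-E4} via the standard transformation that adds, for each clause $\{x,y,z\}$, the negative clause $\{\overline{x},\overline{y},\overline{z}\}$. Your case analysis verifying the structural condition is more explicit than the paper's (which simply notes that ``the resulting formula has the desired properties''), but the argument is the same.
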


\begin{proof}
This follows from the simple standard transformation from \textsc{Not-All-Equal 3-Sat} to \textsc{3-Sat}: Given an instance of {\sc Monotone NAE-3-Sat-E4} where the formula is linear, introduce for each clause $\{\ell_1, \ell_2, \ell_3\}$ a second clause $\{\neg \ell_1, \neg \ell_2, \neg \ell_3\}$. Note that the resulting formula has the desired properties. 
\end{proof}

In the next step, we consider {\sc Monotone 3-Sat-(3,3)}. In order to show its hardness we state three lemmata below. 
The first one makes use of a construction inspired by the idea of an \textit{enforcer} for a clause described by Berman et al.~\cite[p.\,3]{Berman2003}. Note that we only use monotone clauses. This will require us to define a second enforcer in order to prevent the introduction of mixed clauses. 

\begin{lem}\label{lem:S(x,y,z)} Let $\mathcal{S}(x, y, z)$ be defined as the set containing the following clauses, where $V_\text{aux} = \{a, b, \ldots, f\}$ are new variables. 

\begin{multicols}{3} 
\begin{enumerate}
\item $\{x, a, b\}$
\item $\{y, c, d\}$ 
\item $\{z, e, f\}$
\item $\{a, c, f\}$
\item $\{a, d, e\}$
\item $\{b, c, e\}$
\item $\{b, d, f\}$
\item $\{\neg a, \neg c, \neg f\}$
\item $\{\neg a, \neg d, \neg e\}$
\item $\{\neg a, \neg e, \neg f\}$
\item $\{\neg b, \neg c, \neg d\}$
\item $\{\neg b, \neg c, \neg e\}$
\item $\{\neg b, \neg d, \neg f\}$
\end{enumerate}
\end{multicols}
 
Then, a truth assignment $\beta$ for $\{x, y, z\}$ can be extended to a truth assignment $\beta'$ for $\{x, y, z\} \cup V_{\text{aux}}$ that satisfies $\mathcal{S}(x, y, z)$ if and only if $\beta(v) = T$ for at least one $v \in \{x, y, z\}$.
\end{lem}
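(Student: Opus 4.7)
The plan is to prove the equivalence in both directions, with the forward direction by explicit construction and the reverse by a finite case analysis.

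For the forward direction, assume some $v \in \{x,y,z\}$ has $\beta(v)=T$; I would exhibit an explicit extension to $V_{\text{aux}}$ in each of the three cases. The cleanest case is $\beta(x)=T$: set $a=b=F$ and $c=d=e=f=T$. Then clause~1 is satisfied by $x$, clauses~2 and 3 by $c$ and $e$, clauses~4--7 by their literals in $\{c,d,e,f\}$, and every negative clause 8--13 is satisfied because each of them contains either $\neg a$ or $\neg b$, both true. For $\beta(y)=T$ and $\beta(z)=T$ one can give analogous (though less uniform, since the pair $(a,b)$ is privileged by appearing in \emph{every} negative clause) explicit witnesses, e.g.\ $a=b=T,\, c=d=F,\, e=F,\, f=T$ in the former case and $a=b=T,\, c=T,\, d=e=f=F$ in the latter, each verified by a routine clause-by-clause check.

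For the reverse direction, I would suppose $\beta(x)=\beta(y)=\beta(z)=F$ and derive a contradiction by a case split on the values of $a$ and $b$. The case $a=b=F$ immediately falsifies clause~1. In the case $a=b=T$, clause~11 forces at least one of $c,d$ to be false while clause~2 (with $y=F$) forces at least one of them to be true, so exactly one of $c,d$ is true; the subcase $c=T, d=F$ then propagates through clauses~12 (giving $e=F$), 3 (giving $f=T$) and falsifies clause~8, while the subcase $c=F, d=T$ propagates through clauses~13, 3 and falsifies clause~9. In the case $a=T, b=F$, the positive clauses~4--5 are satisfied by $a$ while clauses~6, 7 require $c \vee e$ and $d \vee f$; a further split on clause~2 (forcing $c=T$ or $d=T$) then propagates through clauses~8--10 and clause~3 to a violated clause. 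The remaining case $a=F, b=T$ is handled by a symmetric propagation using clauses~4, 5 and~11--13.

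The main obstacle is purely organizational: the four main cases on $(a,b)$ each branch into up to two subcases depending on which of $c,d$ (or $e,f$) is forced true by clauses~2 and 3, so the argument consists of on the order of half a dozen short propagation chains to verify. Since each chain has length two or three and terminates at an explicitly identifiable violated clause, no deeper structural insight is required beyond careful bookkeeping; once all branches are closed, both directions of the equivalence are established and the lemma follows.
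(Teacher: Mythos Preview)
Your proposal is correct. The ``forward'' direction (constructing a satisfying extension when some $v\in\{x,y,z\}$ is true) is essentially the same as the paper's: both give explicit assignments to $V_{\text{aux}}$, and your witness for the case $\beta(x)=T$ is identical to the paper's, while your witnesses for $\beta(y)=T$ and $\beta(z)=T$ differ from the paper's choices but are easily checked to work.

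For the ``reverse'' direction your case organisation is genuinely different. The paper argues that when $x=y=z=F$, clauses~1--3 force some triple $(u,v,w)\in\{a,b\}\times\{c,d\}\times\{e,f\}$ to be all true; it then observes that the pairs $(4,8),(5,9),(6,12),(7,13)$ are nae-constraints on $\{a,c,f\},\{a,d,e\},\{b,c,e\},\{b,d,f\}$, which immediately eliminates four of the eight candidate triples, and dispatches the remaining four by short propagations. You instead split on the four values of $(a,b)$ and then branch on which of $c,d$ is forced true by clause~2, unit-propagating to a violated clause in each branch. Both are finite case analyses of comparable length; the paper's nae observation is a slightly more structural device that halves the case count up front, whereas your approach is more direct resolution-style propagation and avoids needing to spot the nae pairing. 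Either is perfectly adequate for this lemma.
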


\begin{proof}
First, consider the truth assignment $\beta$ for $\{x, y, z\}$ with $\beta(x) = \beta(y) = \beta(z) = F$ and assume towards a contradiction that $\beta$ can be extended to a truth assignment $\beta'$ for $\{x, y, z\} \cup V_{\text{aux}}$ that satisfies $\mathcal{S}(x, y, z)$. Then, by clauses 1, 2 and 3, there are three variables  

\[
(u, v, w) \in \{a, b\} \times \{c, d\} \times \{e, f\}
\]
with $\beta'(u) = \beta'(v) = \beta'(w) = T$. By clauses 4 and 8, $\beta'$ nae-satisfies $\{a, c, f\}$. Analogously, $\beta'$ nae-satisfies $\{a, d, e\}, \{b, c, e\}$ and $\{b, d, f\}$, respectively. Hence, 
\[
(u, v, w) \not \in \{(a, c, f),\, (a, d, e),\, (b, c, e),\, (b, d, f)\}. 
\]
Let us consider the four remaining cases.
\begin{itemize}
\item Case $(u, v, w) = (a, c, e)$. Since $\beta'$ nae-satisfies $\{b, c, e\}$, $\{a, d, e\}$ and $\{a, c, f\}$, respectively, we have $\beta'(b) = \beta'(d) = \beta'(f) = F$. Thus, clause 7 is not satisfied.
\item Case $(u, v, w) = (a, d, f)$. Since $\beta'$ nae-satisfies $\{b, d, f\}$, $\{a, c, f\}$ and $\{a, d, e\}$, respectively, we have $\beta'(b) = \beta'(c) = \beta'(e) = F$. Thus, clause 6 is not satisfied.
\item Case $(u, v, w) = (b, c, f)$. Since $\beta'$ nae-satisfies $\{a, c, f\}$, $\{b, d, f\}$ and $\{b, c, e\}$, respectively, we have $\beta'(a) = \beta'(d) = \beta'(e) = F$. Thus, clause 5 is not satisfied.
\item Case $(u, v, w) = (b, d, e)$. Since $\beta'$ nae-satisfies $\{a, d, e\}$, $\{b, c, e\}$ and $\{b, d, f\}$, respectively, we have $\beta'(a) = \beta'(c) = \beta'(f) = F$. Thus, clause 4 is not satisfied.
\end{itemize}
Since each case yields a contradiction, we conclude that no extension of the truth assignment $\beta$ with $\beta(x) = \beta(y) = \beta(z) = F$ satisfies all clauses in $\mathcal{S}(x, y, z)$. Second, let $\beta$ be a truth assignment for $\{x, y, z\}$ with $\beta(v) = T$ for at least one $v \in \{x, y, z\}$. Then, depending on the truth values assigned to $x$, $y$ and $z$, at least one of the following three extensions of $\beta$ satisfies $\mathcal{S}(x, y, z)$: 
\begin{enumerate}
\item[] $\beta_x(a) = \beta_x(b) = F, \quad \beta_x(c) = \beta_x(d) = \beta_x(e) = \beta_x(f) = T$,
\item[] $\beta_y(a) = \beta_y(c) = \beta_y(d) = F, \quad \beta_y(b) = \beta_y(e) = \beta_y(f) = T$,
\item[] $\beta_z(d) = \beta_z(e) = \beta_z(f) = F, \quad \beta_z(a) = \beta_z(b) = \beta_z(c) = T$,
\end{enumerate}
where $\beta_v$, $v \in \{x, y, z\}$, satisfies $\mathcal{S}(x, y, z)$ if $\beta(v) = T$. 
\end{proof}

Lemma~\ref{lem:S(x,y,z)} straightforwardly translates into the following lemma.

\begin{lem}\label{lem:S(barx,bary,barz)}
Let $\mathcal{\bar{S}}(\bar{x}, \bar{y}, \bar{z})$ be the set of clauses over $V = \{x, y, z\} \cup V_\text{aux}$ obtained by negating every literal in $\mathcal{S}(x, y, z)$. Then, a truth assignment $\beta$ for $\{x, y, z\}$ can be extended to a truth assignment $\beta'$ for $V$ that satisfies $\mathcal{\bar{S}}(\bar{x}, \bar{y}, \bar{z})$ if and only if $\beta(v) = F$ for at least one $v \in \{x, y, z\}$.
\end{lem}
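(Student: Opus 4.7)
The plan is to deduce Lemma~\ref{lem:S(barx,bary,barz)} from Lemma~\ref{lem:S(x,y,z)} purely by a complementation argument, rather than repeating the case analysis. The key observation is the following general duality: if $F$ is any CNF formula on variable set $V$ and $\bar F$ denotes the formula obtained by negating every literal of $F$, then for any assignment $\gamma \colon V \to \{T,F\}$, the assignment $\gamma$ satisfies $\bar F$ if and only if the complementary assignment $\bar\gamma$ (defined by $\bar\gamma(v) = \neg \gamma(v)$ for every $v \in V$) satisfies $F$. This holds clause-by-clause: a literal $\ell$ appears in $F$ iff $\neg\ell$ appears in $\bar F$, and $\ell$ evaluates to $T$ under $\bar\gamma$ iff $\neg\ell$ evaluates to $T$ under $\gamma$.

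With this duality in hand, the proof reduces to a short chain of equivalences. First I would fix a truth assignment $\beta$ for $\{x,y,z\}$ and let $\bar\beta$ denote its complementary assignment on $\{x,y,z\}$. Then the existence of an extension $\beta'$ of $\beta$ to $\{x,y,z\} \cup V_\text{aux}$ that satisfies $\bar{\mathcal{S}}(\bar x,\bar y,\bar z)$ is equivalent, by the duality above, to the existence of an extension of $\bar\beta$ (namely $\overline{\beta'}$) that satisfies $\mathcal{S}(x,y,z)$. By Lemma~\ref{lem:S(x,y,z)}, such an extension of $\bar\beta$ exists if and only if $\bar\beta(v) = T$ for at least one $v \in \{x,y,z\}$, which is the same as $\beta(v) = F$ for at least one $v \in \{x,y,z\}$.

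There is no real obstacle here; the only thing to be mildly careful about is bookkeeping of the complementary assignment on the auxiliary variables, making sure that the map $\beta' \mapsto \overline{\beta'}$ is a bijection between extensions of $\beta$ and extensions of $\bar\beta$, so that an extension of $\beta$ exists on one side iff an extension of $\bar\beta$ exists on the other. Since $V_\text{aux}$ consists of the same new variables in both $\mathcal{S}$ and $\bar{\mathcal{S}}$, this bijection is immediate. Hence the lemma follows without any further case analysis.
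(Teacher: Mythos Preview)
Your proposal is correct and follows exactly the approach the paper takes: the paper simply states that Lemma~\ref{lem:S(x,y,z)} ``straightforwardly translates'' into Lemma~\ref{lem:S(barx,bary,barz)}, and your complementation/duality argument is precisely that translation made explicit. There is nothing to add.
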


The construction of an unsatisfiable formula is done by combining two enforcers (see Berman et al.~\cite[p.\,3]{Berman2003}). 

\begin{propo}
There is an unsatisfiable instance of {\sc Monotone 3-Sat-(3,3)}.  
\end{propo}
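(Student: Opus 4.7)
The plan is to combine the two enforcers from Lemmas~\ref{lem:S(x,y,z)} and~\ref{lem:S(barx,bary,barz)} over the incidence structure of the Fano plane. Label the seven points of the Fano plane as $x_1,\ldots,x_7$ and its seven lines as $L_1,\ldots,L_7$, so that each $L_i$ is a $3$-element subset of $\{x_1,\ldots,x_7\}$, every point lies on exactly three lines, and no $2$-coloring of the points avoids a monochromatic line.

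For each line $L_i=\{x_a,x_b,x_c\}$, I would include one copy of $\mathcal{S}(x_a,x_b,x_c)$ and one copy of $\bar{\mathcal{S}}(\bar{x}_a,\bar{x}_b,\bar{x}_c)$, using pairwise disjoint fresh auxiliary variables for each of the $14$ gadgets. By construction, all resulting clauses are monotone $3$-clauses over distinct variables. The appearance count works out as follows. Each Fano point $x_i$ is the top-level parameter of exactly three $\mathcal{S}$-gadgets (one per line through $x_i$), in each of which it appears unnegated exactly once (as the first argument of clause $1$, $2$, or $3$), and of exactly three $\bar{\mathcal{S}}$-gadgets, in each of which it appears negated exactly once; thus $x_i$ occurs three times unnegated and three times negated overall. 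A direct inspection of the clause list in Lemma~\ref{lem:S(x,y,z)} shows that each internal auxiliary variable $a,b,\ldots,f$ of an $\mathcal{S}$-gadget occurs exactly three times unnegated (in clauses $1$--$7$) and three times negated (in clauses $8$--$13$); the analogous statement holds for $\bar{\mathcal{S}}$ by symmetry. Hence the overall formula is indeed an instance of \textsc{Monotone 3-Sat-}$(3,3)$.

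For unsatisfiability I would argue by contradiction. Suppose that a truth assignment $\beta$ satisfies the constructed formula. The restriction $\beta_0$ of $\beta$ to $\{x_1,\ldots,x_7\}$ must, by Lemma~\ref{lem:S(x,y,z)}, set at least one point of every line $L_i$ to $T$, and by Lemma~\ref{lem:S(barx,bary,barz)}, set at least one point of every line $L_i$ to $F$. Thus $\beta_0$ is a $2$-coloring of the Fano plane in which no line is monochromatic, contradicting the classical fact that the Fano plane has chromatic number~$3$.

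Conceptually the construction and the verification are routine; the only real step is to identify the correct combinatorial target, namely a $3$-regular, $3$-uniform, non-$2$-colorable hypergraph. The enforcer lemmas then translate any such hypergraph into a balanced monotone formula, and the Fano plane is the canonical (and smallest) concrete example. This realizes, in the balanced monotone setting, the double-enforcer scheme used by Berman et al.~\cite{Berman2003}.
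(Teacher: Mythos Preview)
Your argument is correct: the Fano-plane construction yields a valid instance of \textsc{Monotone 3-Sat}-$(3,3)$, and the non-$2$-colorability of the Fano plane, combined with Lemmas~\ref{lem:S(x,y,z)} and~\ref{lem:S(barx,bary,barz)}, forces unsatisfiability. The appearance counts are right, both for the top-level points (three lines each) and for the internal auxiliaries.

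However, you have worked considerably harder than necessary. The paper simply takes
\[
\mathcal{S}(x,x,x)\;\cup\;\bar{\mathcal{S}}(\bar{x},\bar{x},\bar{x})
\]
for a single variable $x$, with fresh auxiliaries in each enforcer. By Lemma~\ref{lem:S(x,y,z)} the first block forces $\beta(x)=T$; by Lemma~\ref{lem:S(barx,bary,barz)} the second forces $\beta(x)=F$; done. The variable $x$ appears once in each of the three positive ``head'' clauses of $\mathcal{S}$ and once in each of the three negative head clauses of $\bar{\mathcal{S}}$, giving exactly three unnegated and three negated appearances, and all auxiliaries already have the $(3,3)$ profile you verified. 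This is precisely the double-enforcer trick of Berman et al., applied in the degenerate case where the three ``ports'' of each enforcer are identified.

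Your route has the conceptual virtue of working even if one insisted that $\mathcal{S}(x,y,z)$ be instantiated only with pairwise distinct arguments, and it identifies the natural general mechanism (translate any $3$-regular $3$-uniform non-$2$-colorable hypergraph). The price is a $14$-gadget formula on $7+14\cdot 6=91$ variables where the paper gets away with $2$ gadgets and $13$ variables. So: nothing wrong, but you missed the one-variable shortcut.
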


\begin{proof}
By construction the formula defined by the clauses 
\[
\mathcal{S}(x, x, x) \cup \mathcal{\bar{S}}(\bar{x}, \bar{x}, \bar{x})
\]
is unsatisfiable. The two enforcers only share the variable $x$. Hence, each variable appears exactly three times unnegated and exactly three times negated. Also note that each clause is monotone by construction.
\end{proof}

We remark that there are smaller unsatisfiable instances of {\sc Monotone 3-Sat-(3,3)}. In particular, we show that there is an unsatisfiable instance with 9 variables.
\begin{propo}
The following instance of {\sc Monotone 3-Sat-(3,3)} with 9 variables and 18 clauses is unsatisfiable. 
\begin{multicols}{3} 
\begin{enumerate}
\item $\{\bar{a}, \bar{d}, \bar{g}\}$
\item $\{\bar{a}, \bar{f}, \bar{i}\}$
\item $\{\bar{b}, \bar{d}, \bar{h}\}$
\item $\{\bar{b}, \bar{e}, \bar{f}\}$
\item $\{\bar{c}, \bar{e}, \bar{g}\}$
\item $\{\bar{c}, \bar{h}, \bar{i}\}$
\item $\{a, d, g\}$
\item $\{a, f, i\}$
\item $\{b, d, h\}$
\item $\{b, e, f\}$
\item $\{c, e, g\}$
\item $\{c, h, i\}$
\item $\{a, b, c\}$
\item $\{d, e, i\}$
\item $\{f, g, h\}$
\item $\{\bar{a}, \bar{e}, \bar{h}\}$
\item $\{\bar{b}, \bar{g}, \bar{i}\}$
\item $\{\bar{c}, \bar{d}, \bar{f}\}$
\end{enumerate}
\end{multicols}
\end{propo}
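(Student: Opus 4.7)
The plan is to interpret the instance in terms of the affine plane $\mathrm{AG}(2,3)$. A direct inspection of the clause set reveals that the twelve distinct variable triples appearing among the eighteen clauses are precisely the twelve lines of $\mathrm{AG}(2,3)$ on the nine points $a,\ldots,i$, organised into four parallel classes of three lines each: class~I $=\{\{a,b,c\},\{d,e,i\},\{f,g,h\}\}$ gives the positive-only clauses~13--15, class~II $=\{\{a,e,h\},\{b,g,i\},\{c,d,f\}\}$ gives the negative-only clauses~16--18, while classes III~$=\{\{a,d,g\},\{b,e,f\},\{c,h,i\}\}$ and IV~$=\{\{a,f,i\},\{b,d,h\},\{c,e,g\}\}$ each supply one positive and one negative clause per line.

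Suppose for contradiction that the formula is satisfied by some assignment $\beta$, and let $T\subseteq V$ be the set of true variables. Then satisfaction of the positive (resp.\ negative) clauses amounts to: (P)~every line of classes I, III, IV meets $T$, and (N)~no line of classes II, III, IV is contained in~$T$. Counting literal occurrences (each variable lies in three positive and three negative clauses) immediately gives $3\le|T|\le 6$.

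The key step will be the double count
\[
\sum_{L}\binom{|T\cap L|}{2}=\binom{|T|}{2},
\]
valid because any two distinct points in $\mathrm{AG}(2,3)$ lie on a unique line. Grouping the sum by parallel class and writing $k_X:=\sum_{L\in X}\binom{|T\cap L|}{2}$ turns this identity into $k_I+k_{II}+k_{III}+k_{IV}=\binom{|T|}{2}$. For each class $X$, (P) requires $|T\cap L|\ge 1$ for all $L\in X$ whenever $X\in\{\text{I},\text{III},\text{IV}\}$, while (N) requires $|T\cap L|\le 2$ for all $L\in X$ whenever $X\in\{\text{II},\text{III},\text{IV}\}$. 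Enumerating the three-part partitions of $|T|$ (parts in $\{0,1,2,3\}$) compatible with the per-class constraints and maximising $k_X=\sum_j\binom{p_j}{2}$ over the admissible partitions will give upper bounds on each $k_X$ whose total falls strictly short of $\binom{|T|}{2}$ for every admissible size. The binding case is $|T|=4$: (P) forces the distribution $(2,1,1)$ in classes I, III, IV so that $k_I=k_{III}=k_{IV}=1$, while (N) permits only $(2,1,1)$ or $(2,2,0)$ in class~II so that $k_{II}\le 2$, giving $\sum_X k_X\le 5<6=\binom{4}{2}$. The analogous inequalities $1<3$, $9<10$, and $13<15$ for $|T|\in\{3,5,6\}$ settle the remaining cases.

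The main obstacle will be the elementary but somewhat tedious bookkeeping of partitions and their $k_X$-contributions across the four admissible values of~$|T|$; once that table is in hand, the double-counting identity delivers the contradiction uniformly, and the instance is seen to be unsatisfiable.
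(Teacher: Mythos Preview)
Your proof is correct and takes a genuinely different route from the paper. The paper argues by a direct case split on the truth values of $a,b,c$ (one of which is true by clause~13), then in each subcase strips satisfied clauses and unsatisfied literals and chases a chain of implications among the residual $2$-clauses to force a contradiction with $\{d,e,i\}$ or $\{f,g,h\}$; two cases are spelled out in full and the remaining five are declared analogous. By contrast, you identify the twelve underlying triples as the line set of $\mathrm{AG}(2,3)$, translate satisfiability into hitting/avoidance constraints on the four parallel classes, and derive a contradiction from the double-count $\sum_L\binom{|T\cap L|}{2}=\binom{|T|}{2}$ by bounding each class's contribution via the admissible three-part compositions of $|T|$. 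Your bounds $1<3$, $5<6$, $9<10$, $13<15$ for $|T|=3,4,5,6$ are all correct (for $|T|=5,6$ you rightly allow class~I to take the distribution $(3,1,1)$ resp.\ $(3,2,1)$ since only classes II, III, IV carry the ``no full line'' constraint). The paper's argument is entirely elementary and needs no recognition of the affine structure, at the cost of seven subcases and some unavoidable tedium. Your approach explains \emph{why} the instance is unsatisfiable and treats all sizes of $T$ uniformly; it also makes clear that the construction works for any assignment of the ``positive-only/negative-only/both'' roles to the four parallel classes. The bookkeeping you anticipate is indeed the only real work, and it is short.
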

\begin{proof}
Let $C$ denote the set of clauses of the instance defined above. We use clause 13 for a case analysis. Assume towards a contradiction that there is a satisfying truth assignment $\beta$ for the instance above. Clearly, for at least one variable $v \in \{a, b, c\}$ we have $\beta(v) = T$. Let us consider the remaining cases:

\noindent
\underline{$\beta(a) = T$, $\beta(b) = F$, $\beta(c) = F$:} Then, by removing satisfied clauses and unsatisfied literals, the instance reduces to the following clauses:
\begin{multicols}{3}
\begin{enumerate}[label=(\roman*)]
\item $\{\bar{d}, \bar{g}\}$
\item $\{\bar{f}, \bar{i}\}$
\item $\{\bar{e}, \bar{h}\}$
\item $\{d, h\}$
\item $\{e, f\}$
\item $\{e, g\}$
\item $\{h, i\}$
\item $\{d, e, i\}$
\item $\{f, g, h\}$
\end{enumerate}
\end{multicols}
Now, by clauses (i) to (vii), we get the following cyclic implication chain: 
\[
d \overset{\text{(i)}}{\Rightarrow} \bar{g} \overset{\text{(vi)}}{\Rightarrow} e \overset{\text{(iii)}}{\Rightarrow} \bar{h} \overset{\text{(vii)}}{\Rightarrow} i \overset{\text{(ii)}}{\Rightarrow} \bar{f} \overset{\text{(v)}}{\Rightarrow} e \overset{\text{(iii)}}{\Rightarrow} \bar{h} \overset{\text{(iv)}}{\Rightarrow} d. 
\]
Thus, $\beta(d) = \beta(e) = \beta(i) = \beta(\bar{f}) = \beta(\bar{g}) = \beta(\bar{h})$. Consequently, either clause (viii) or clause (ix) is not satisfied which is a contradiction to the assumption that $\beta$ is a satisfying truth assignment. Most of the other cases can be shown similarly (e.g., in some cases a smaller cyclic implication chain is used to infer an additional 2-clause such that a larger cyclic implication chain can be formed that yields a contradiction with the clauses $\{d, e, i\}$ and $\{f, g, h\}$). The one exception to this approach is the last case, which we consider next. 

\noindent
\underline{$\beta(a) = T$, $\beta(b) = T$, $\beta(c) = T$:} Then, the instance reduces to 
\begin{multicols}{4}
\begin{enumerate}[label=(\roman*)]
\item $\{\bar{d}, \bar{g}\}$
\item $\{\bar{f}, \bar{i}\}$
\item $\{\bar{d}, \bar{h}\}$
\item $\{\bar{e}, \bar{f}\}$
\item $\{\bar{e}, \bar{g}\}$
\item $\{\bar{h}, \bar{i}\}$
\item $\{\bar{e}, \bar{h}\}$
\item $\{\bar{g}, \bar{i}\}$
\item $\{\bar{d}, \bar{f}\}$
\item $\{d, e, i\}$
\item $\{f, g, h\}$
\end{enumerate}
\end{multicols}
Observe that the set containing the clauses (i) to (ix) is equal to $\{\bar{d}, \bar{e}, \bar{i}\} \times \{\bar{f}, \bar{g}, \bar{h}\}$. Hence, by setting one variable in clause (x) true, we have to set all variables in (xi) false. Consequently, $\beta$ can not simultaneously satisfy clauses (x) and (xi), a contradiction. We conclude that $C$ is unsatisfiable.
\end{proof}

The third and last lemma for our hardness proof of {\sc Monotone 3-Sat-(3,3)} is stated as follows. 

\begin{lem}\label{lem:setA} Let $\mathcal{A}(\bar{x}, \bar{y})$ be defined as the set containing the clauses below, where $V_\text{aux} = \{a, b, c, d\}$ are new variables. 

\begin{multicols}{4} 
\begin{enumerate}
\item $\{\bar{a}, \bar{b},\bar{x}\}$
\item $\{\bar{a}, \bar{c},\bar{x}\}$
\item $\{\bar{a}, \bar{d},\bar{x}\}$
\item $\{\bar{b}, \bar{c},\bar{y}\}$
\item $\{\bar{b}, \bar{d},\bar{y}\}$
\item $\{\bar{c}, \bar{d},\bar{y}\}$
\item $\{a, b, c\}$
\item $\{a, b, d\}$
\item $\{a, c, d\}$
\item $\{b, c, d\}$
\end{enumerate}
\end{multicols}
Then, a truth assignment $\beta$ for $\{x, y\}$ can be extended to a truth assignment $\beta'$ for $\{x, y\} \cup V_{\text{aux}}$ that satisfies $\mathcal{A}(\bar{x}, \bar{y})$ if and only if $\beta(v) = F$ for at least one $v \in \{x, y\}$.  
\end{lem}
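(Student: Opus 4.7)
The plan is to verify the two directions of the biconditional separately and to explain why a small contradiction argument suffices for the harder direction.

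For the sufficiency direction, I would exhibit explicit extensions depending on which of $x,y$ is false. If $\beta(x)=F$, I expect $\beta'(a)=\beta'(b)=T$ and $\beta'(c)=\beta'(d)=F$ to satisfy $\mathcal{A}(\bar x,\bar y)$: clauses 1--3 are satisfied by $\bar x$; clauses 4--6 are satisfied because each contains $\bar c$ or $\bar d$, both of which evaluate to true; clauses 7--10 are satisfied since each contains $a$ or $b$. Symmetrically, if $\beta(y)=F$, the assignment $\beta'(a)=\beta'(b)=F$ and $\beta'(c)=\beta'(d)=T$ should work: clauses 1--3 via $\bar a$ (and $\bar b$), clauses 4--6 via $\bar y$, and clauses 7--10 via $c$ or $d$. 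These two cases cover all situations in which at least one of $x,y$ is assigned false.

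For the necessity direction, I would assume $\beta(x)=\beta(y)=T$ and derive a contradiction. Under this assumption, $\bar x$ and $\bar y$ both evaluate to false, so clauses 1--6 reduce to the six 2-clauses
\[
\{\bar a,\bar b\},\ \{\bar a,\bar c\},\ \{\bar a,\bar d\},\ \{\bar b,\bar c\},\ \{\bar b,\bar d\},\ \{\bar c,\bar d\},
\]
which are exactly all 2-subsets of $\{\bar a,\bar b,\bar c,\bar d\}$ and thus collectively enforce that \emph{at most one} variable of $V_\text{aux}$ is assigned true. On the other hand, clauses 7--10 are exactly all 3-subsets of $\{a,b,c,d\}$, so they enforce that no three of the auxiliary variables are simultaneously false; equivalently, \emph{at least three} of them are assigned true (since the negation of ``at most two true'' gives at least two false, and a slightly tighter count actually yields a direct clash already at two). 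In particular, at least two auxiliary variables must be true, contradicting the ``at most one'' condition derived from clauses 1--6.

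I do not anticipate any serious obstacle: the lemma is essentially the combinatorial observation that one cannot have both $|\{v\in V_\text{aux}:\beta'(v)=T\}|\le 1$ and $\ge 2$, and the forward direction is a short verification on the two canonical assignments above. The only care needed is to list the reductions of clauses 1--6 under $\bar x=\bar y=F$ and confirm they exhaust the pairs over $V_\text{aux}$, and to note that clauses 7--10 exhaust the triples over $V_\text{aux}$.
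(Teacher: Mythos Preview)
Your proposal is correct and follows essentially the same approach as the paper: both argue that under $\beta(x)=\beta(y)=T$ clauses~1--6 reduce to all negative 2-clauses over $V_{\text{aux}}$ (forcing at most one auxiliary variable true), while clauses~7--10, being all positive 3-clauses over $V_{\text{aux}}$, force at least two auxiliary variables true; and for the other direction both exhibit a satisfying extension by setting exactly two auxiliary variables true. One small slip: ``no three of the auxiliary variables are simultaneously false'' is equivalent to \emph{at least two} (not three) of them being true---your muddled parenthetical and your final sentence land on the right count, so the argument goes through, but you should clean up that line.
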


\begin{proof}
There are $\binom{4}{2} = 6$ distinct negative 2-clauses over $V_\text{aux} = \{a, b, c, d\}$:
\[
C_2 = \{\{\bar{a}, \bar{b}\}, \{\bar{a}, \bar{c}\}, \{\bar{a}, \bar{d}\}, \{\bar{b}, \bar{c}\}, \{\bar{b}, \bar{d}\}, \{\bar{c}, \bar{d}\}\}.
\]
Consequently, if at least two variables in $V_\text{aux}$ are set true, then $C_2$ is not satisfied. Now, clauses 7, 8, 9 and 10 are satisfied if and only if we set at least two variables true (any pair of distinct variables works). Hence, if $x$ and $y$ are both set true, then $\mathcal{A}(\bar{x}, \bar{y})$ is unsatisfiable since clauses $1, 2, \ldots, 6$ are equivalent to $C_2$ in this case. Moreover, by setting $x$ or $y$ false, some of the first six clauses are satisfied (at least three of them). Now, we can choose any of these clauses, say $c_j$, and set the two variables in $c_j \cap V_\text{aux}$ true and the variables in $V_\text{aux} \setminus c_j$ false, respectively. It is easy to see that this assignment satisfies all clauses in~$\mathcal{A}(\bar{x}, \bar{y})$.    
\end{proof}

\begin{thm}\label{thm:Mon3Sat-(3,3)}
{\sc Monotone 3-Sat-(3,3)} is NP-complete.
\end{thm}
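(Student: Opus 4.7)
The plan is to reduce from classical \textsc{Monotone 3-Sat}, known NP-complete by Gold~\cite{gold78} and Li~\cite{li97}. Given an instance $\mathcal{I}=(V,C)$, I would first replace each positive clause $\{x,y,z\}\in C$ by the gadget $\mathcal{S}(x,y,z)$ of Lemma~\ref{lem:S(x,y,z)} and each negative clause $\{\bar x,\bar y,\bar z\}$ by $\bar{\mathcal{S}}(\bar x,\bar y,\bar z)$ of Lemma~\ref{lem:S(barx,bary,barz)}. A direct inspection of the thirteen clauses of $\mathcal{S}$ (and symmetrically of $\bar{\mathcal{S}}$) shows that every auxiliary variable already appears exactly three times unnegated and three times negated, so all gadget-internal variables are balanced, and the logical content of each replaced clause is preserved.

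To cope with original variables that appear an unsuitable number of times, I would split variables before the clause replacement. For each $x\in V$ with $p_x$ positive and $q_x$ negative occurrences in $C$, introduce fresh copies $x^+_1,\ldots,x^+_{p_x}$ and $x^-_1,\ldots,x^-_{q_x}$ and substitute them for the corresponding occurrences, so that after the gadget step each $x^+_i$ appears exactly once unnegated and each $x^-_j$ exactly once negated in the constructed formula. Consistency---every $x^+_i$ agreeing with a common truth value of $x$ and every $x^-_j$ taking its complement---is then enforced by a cyclic disequality chain in which each link $u\neq v$ is realised by $\mathcal{A}(\bar u,\bar v)\cup\mathcal{A}'(u,v)$, where $\mathcal{A}'$ denotes the literal-complement dual of the gadget of Lemma~\ref{lem:setA}. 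This pair forces $u\neq v$, contributes exactly three positive and three negative occurrences to each of $u$ and $v$, and keeps its own auxiliary variables balanced at $(3,3)$. Arranging the copies of $x$ around a cycle that alternates positive and negative copies and placing one such link on every edge propagates consistency, because successive literals differ and therefore alternate literals coincide.

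The main obstacle is the final appearance-count accounting. With one $\mathcal{A}\cup\mathcal{A}'$ link per copy the enforcer contributes $(3,3)$ while the clause-gadget already contributes $(1,0)$ or $(0,1)$, so the totals overshoot. I would resolve this by using only a half-enforcer on each cycle edge (either $\mathcal{A}$ alone, contributing $(0,3)$, or $\mathcal{A}'$ alone, contributing $(3,0)$) combined with a small amount of free padding: a positive copy should acquire two further unnegated and three further negated appearances, and symmetrically for negative copies. The residual $(2,0)$ and $(0,2)$ deficits are absorbed by padding blocks analogous to the $\operatorname{P1}$ gadget in Proof~1 of Theorem~\ref{thm:MonNAE_3SATE4}, redesigned so that every auxiliary variable still appears exactly three times unnegated and three times negated and the main variable remains unconstrained. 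A further corner case arises when $p_x\neq q_x$ (so pure alternation fails) or one of them is zero; these are handled by inserting fresh intermediate copies that are fused back via additional disequality links. Once the chain structure and padding blocks are fixed, equisatisfiability follows directly from combining Lemmata~\ref{lem:S(x,y,z)}, \ref{lem:S(barx,bary,barz)}, and~\ref{lem:setA}, and the whole construction is clearly polynomial; the combinatorial book-keeping for the exact $(3,3)$ discipline is the step that will require the most care.
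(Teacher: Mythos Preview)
Your sketch has a genuine gap at precisely the point you flag as delicate, and the proposed fix breaks the argument rather than completing it. The pair $\mathcal{A}(\bar u,\bar v)\cup\mathcal{A}'(u,v)$ does force $u\neq v$, but a single ``half-enforcer'' does not: $\mathcal{A}(\bar u,\bar v)$ by itself (Lemma~\ref{lem:setA}) only rules out $u=v=T$, and its dual only rules out $u=v=F$. Hence a cycle decorated with half-enforcers on the edges no longer propagates any disequality, so the copies $x^+_i,x^-_j$ are not tied to a common value of $x$, and equisatisfiability fails. Moreover, even the full link $\mathcal{A}\cup\mathcal{A}'$ contributes $(3,3)$ \emph{per incidence}, so in a cycle each copy picks up $(6,6)$ from the two incident links, not $(3,3)$; your counting was off before the half-enforcer fix was proposed. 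Finally, the ``redesigned $\operatorname{P1}$'' padding blocks that would add exactly $(2,0)$ or $(0,2)$ to a variable while keeping all their own variables at $(3,3)$ and leaving the main variable unconstrained are never constructed; asserting that such a gadget exists is the heart of the matter and cannot be left as a remark.

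The paper avoids all of this bookkeeping by a different choice of source problem. It reduces from \textsc{3-Sat-(2,2)} (Berman et al.), so each original variable already has exactly two unnegated and two negated occurrences. Splitting $x_i$ into $x_{i,1}$ (for the two negated occurrences, then de-negated) and $x_{i,2}$ (for the two unnegated ones) leaves each copy with two unnegated occurrences in the original clauses. Consistency is then enforced by the single $2$-clause $\{x_{i,1},x_{i,2}\}$ together with \emph{one} copy of $\mathcal{A}(\overline{x_{i,1}},\overline{x_{i,2}})$: the former says at least one is true, the latter at least one is false, so they differ. This adds one positive and three negative occurrences to each copy, giving exactly $(3,3)$ with no cycles and no padding of variables. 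The only remaining cleanup is turning the $n$ positive $2$-clauses into $3$-clauses; since $4n=3m$ forces $3\mid n$, the paper manufactures forced-false filler variables via $\bar{\mathcal{S}}(\bar y,\bar y,\bar y)$ (Lemma~\ref{lem:S(barx,bary,barz)}), each providing three unnegated occurrences to pad three $2$-clauses while itself ending at $(3,3)$. The moral is that starting from a source problem with tightly controlled occurrence counts makes the $(3,3)$ discipline a one-line check rather than an open-ended gadget design problem.
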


\begin{proof}
We show NP-hardness by reduction from \textsc{3-Sat-(2,2)}, for which NP-hardness was established by Berman et al.~\cite[Theorem 1]{Berman2003}. Let $\mathcal{I} = (V, C)$ be an instance of \textsc{3-Sat-(2,2)}. Let $n := |V|$ denote the number of variables, $m := |C|$ the number of clauses and let the set of variables be given as $V := \{x_1,x_2,\ldots,x_n\}$. For each variable $x_i \in V$, we introduce two new variables $x_{i,1}$, $x_{i,2}$ and replace the two negated appearances with $x_{i, 1}$ and the two unnegated appearances with $x_{i, 2}$, respectively. Then, we remove all negations and introduce
\[
\bigcup_{i=1}^n \left( \{\{x_{i,1},\,x_{i,2}\}\} \cup \mathcal{A}(\overline{x_{i,1}}, \overline{x_{i,2}}) \right ).
\]
Note that each variable appears exactly three times unnegated and three times negated, respectively (see also Lemma~\ref{lem:setA}). We introduced $n$ positive 2-clauses. Now, since $4n = 4|V| = 3|C| = 3m$, the number of these clauses is a multiple of 3. With $\mathcal{S}(\bar{y}, \bar{y}, \bar{y})$ we get 3 copies of some new variable $y$ that all have the forced truth value false (see Lemma~\ref{lem:S(barx,bary,barz)}). Then, we replace the first three 2-clauses, say $c, c', c''$, with $c \cup y$, $c' \cup y$ and $c'' \cup y$, respectively. We repeat this step until no 2-clause is left, which is possible since the number of 2-clauses is a multiple of 3. The resulting formula is an instance of {\sc Monotone 3-Sat-(3,3)} and satisfiable if and only if the original formula is satisfiable.    
\end{proof}

As shown below, Theorem~\ref{thm:Mon3Sat-(3,3)} can be extended to show that, in fact for any choice of $k\geq 3$,  \textsc{Monotone 3-Sat-($k$,$k$)} is NP-complete.

\begin{lem}
If \textsc{Monotone 3-Sat-($k$,$k$)} is NP-hard for a fixed positive integer $k$, then so is \textsc{Monotone 3-Sat-($k+1$,$k+1$)}. 
\end{lem}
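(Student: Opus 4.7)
\medskip

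\noindent\textbf{Proof proposal.}
The plan is to reduce \textsc{Monotone 3-Sat-$(k,k)$} to \textsc{Monotone 3-Sat-$(k+1,k+1)$} in polynomial time. Given an instance $(V,C)$ of the former with $V=\{x_1,\ldots,x_n\}$, I introduce $2n$ fresh auxiliary variables $y_1,z_1,\ldots,y_n,z_n$ forming the set $V^{*}$, let $V'=V\cup V^{*}$, and build the new clause set as $C'=C\cup C_{\mathrm{br}}\cup C_{\mathrm{bal}}$. The \emph{bridge} set $C_{\mathrm{br}}$ contains, for each $i\in\{1,\ldots,n\}$, the positive clause $\{x_i,y_i,z_i\}$ and the negative clause $\{\bar{x}_i,\bar{y}_i,\bar{z}_i\}$. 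These bridges raise every $x_i$ from its original count $(k,k)$ to $(k+1,k+1)$ and give each auxiliary $y_i$ and $z_i$ an initial count of $(1,1)$.

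The \emph{balancer} $C_{\mathrm{bal}}$ is a monotone formula on $V^{*}$ designed so that (i) it adds exactly $(k,k)$ further occurrences to each $y_i$ and each $z_i$ (so the final counts become $(k+1,k+1)$), and (ii) the specific assignment $y_i=T$, $z_i=F$ for all $i$ satisfies every clause of $C_{\mathrm{bal}}$. Property~(ii) is secured by arranging every positive clause in $C_{\mathrm{bal}}$ to contain at least one $y$-variable (so $y_j=T$ supplies a true literal) and every negative clause to contain at least one $\bar{z}$-variable (so $\bar{z}_j=T$ does). Concretely, one takes $nk/3$ positive clauses of shape $\{y_a,y_b,z_c\}$ and $nk/3$ of shape $\{y_a,z_b,z_c\}$, with indices distributed so that each $y_i$ and each $z_i$ attains total positive degree exactly $k$; the negative clauses are built symmetrically with the roles of $y$ and $z$ swapped. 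The divisibility $3\mid nk$ required for these counts to be integral follows from the valid-instance identity $3|C|=2kn$. For corner cases in which the distinctness of the designed clauses becomes tight (e.g.\ very small $n$ relative to $k$), $V^{*}$ is padded with a polynomial number of additional fresh auxiliaries, and the balancer is extended over them by a disjoint self-contained satisfiable monotone-$(k+1,k+1)$ sub-formula built by a standard explicit construction on cyclically chosen triples of padding variables.

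For the equivalence, let $\beta$ satisfy $C$. Extend $\beta$ to $\beta'$ on $V'$ by setting $y_i=T$ and $z_i=F$ for all $i$ (together with the assignment of any padding variables prescribed by the padding sub-formula's satisfying assignment). Then $C$ is satisfied by $\beta$; every positive bridge $\{x_i,y_i,z_i\}$ is satisfied by $y_i=T$; every negative bridge $\{\bar{x}_i,\bar{y}_i,\bar{z}_i\}$ by $\bar{z}_i=T$; and $C_{\mathrm{bal}}$ by property~(ii). Conversely, any satisfying assignment for $C'$ restricts to a satisfying assignment of $C$ since $C\subseteq C'$. The transformation is clearly polynomial in $|V|+|C|$, so NP-hardness of \textsc{Monotone 3-Sat-$(k,k)$} transfers to \textsc{Monotone 3-Sat-$(k+1,k+1)$}. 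The main technical obstacle is the explicit construction of the balancer meeting both the exact degree sequence and the "anchoring" structural constraint (every positive clause contains a $y$, every negative clause contains a $\bar{z}$), subject to distinctness of clauses; this is a routine 3-uniform hypergraph design problem, solvable via the divisibility condition above and a small amount of padding when necessary.
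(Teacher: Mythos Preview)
Your bridge idea---attaching $\{x_i,y_i,z_i\}$ and $\{\bar x_i,\bar y_i,\bar z_i\}$ and then satisfying these via $y_i=T$, $z_i=F$---is exactly the device the paper uses, and your correctness argument for $C\subseteq C'$ is fine. The divergence, and the weakness, is in how you push the auxiliaries up to $(k+1,k+1)$.

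You offload this to a ``balancer'' $C_{\mathrm{bal}}$ whose existence you assert but do not construct. The degree-counting is right, but you still owe an explicit family of pairwise distinct monotone 3-clauses on $\{y_1,\ldots,y_n,z_1,\ldots,z_n\}$ realising the exact degree sequence \emph{and} the anchoring constraint, for every $n$ and $k$. Calling this ``routine'' is optimistic: for small $n$ relative to $k$ the distinctness of clauses genuinely bites (there is only one positive 3-clause on three variables, for instance), and your fallback---padding with a ``disjoint self-contained satisfiable monotone-$(k+1,k+1)$ sub-formula''---begs the question, since exhibiting such a formula with all clauses distinct is itself a nontrivial design problem you have not solved. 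As written, the reduction is not complete.

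The paper sidesteps the balancer entirely with a single clean trick: instead of one copy of $(V,C)$, take $k+1$ disjoint copies $(V_i,C_i)$, and let the auxiliaries $y_j,z_j$ be \emph{shared} across copies. The bridge clauses are then $\{x_j^i,y_j,z_j\}$ and $\{\bar x_j^i,\bar y_j,\bar z_j\}$ for all $i\in\{1,\ldots,k+1\}$ and $j\in\{1,\ldots,n\}$. Now each $y_j$ and $z_j$ automatically picks up exactly $k+1$ positive and $k+1$ negative occurrences---one from each copy---so no balancer is needed at all; distinctness of clauses is immediate because the $x_j^i$ are all different. Satisfiability of the copies is equivalent to satisfiability of $C$, and $y_j=T$, $z_j=F$ still kills all bridge clauses. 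This buys a fully explicit polynomial reduction with no case analysis and no hypergraph design.
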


\begin{proof}
We present a polynomial reduction from \textsc{Monotone 3-Sat-($k$,$k$)} to \textsc{Monotone 3-Sat-($k+1$,$k+1$)}. Given an instance of \textsc{Monotone 3-Sat-($k$,$k$)} with a set of clauses $C = \{c_1, \ldots, c_m\}$ over variables $V = \{x_1, \ldots, x_n\}$, we make $k + 1$ copies of $C$ and $V$, respectively, such that each copy has only new variables that are not shared with other copies: For $i \in \{1, \ldots, k+1\}$ let
\[
C_i = \{c_1^i, \ldots, c_m^i\}, \quad V_i = \{x_1^i, \ldots, x_n^i\}.
\]  
Next, we introduce the clauses 
\[
C_\text{inc} = \bigcup_{i = 1}^{k+1}\bigcup_{j = 1}^n \{\{x_j^i, y_j, z_j\}, \{\overline{x_j^i}, \overline{y_j}, \overline{z_j}\}\}, 
\]
where $y_j, z_j$ with $j \in \{1, \ldots, n\}$ are new variables. Now, consider the instance
\[
C' = C_\text{inc} \cup \bigcup_{i=1}^{k+1}C_i, \quad V' = \{y_j, z_j \mid 1 \leq j \leq n\} \cup \bigcup_{i=1}^{k+1}V_i.   
\]
Observe that each variable in $V'$ appears exactly $k+1$ times negated and $k+1$ times unnegated in $C'$. Further, each clause in $C'$ contains exactly three distinct literals, either all of them or none of them negated. Hence, we constructed an instance of \textsc{Monotone 3-Sat-($k+1$,$k+1$)}. Now, by setting all $y_j$ true and all $z_j$ false, respectively, we satisfy all clauses in $C_\text{inc}$. Thus, $C'$ is satisfiable if and only if $\bigcup_{i=1}^{k+1}C_i$ is satisfiable. Since the latter set of clauses is a union of disjoint copies of $C$, we conclude: $C'$ is satisfiable if and only if $C$ is satisfiable. Finally, we have
\[
|C'| = (k+1)(m + 2n), \quad |V'| = (k + 1)n + 2n = (k+3)n. 
\]
Therefore, since $k$ is a fixed positive integer, the transformation is polynomial. 

\end{proof}

\begin{cor}\label{thm:Mon3Sat-(k,k)}
{\sc Monotone 3-Sat-($k$,$k$)} is NP-complete for all $k \geq 3$. 
\end{cor}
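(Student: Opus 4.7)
The plan is to deduce this corollary by a straightforward induction on $k$, using the two ingredients that precede it in the paper. Membership in NP is immediate (a satisfying assignment is a polynomial-size certificate verifiable in linear time), so only NP-hardness requires argument.

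For the induction, the base case $k=3$ is furnished directly by Theorem~\ref{thm:Mon3Sat-(3,3)}, where the reduction from \textsc{3-Sat-(2,2)} (known to be NP-hard by Berman et al.) is carried out via the gadget $\mathcal{A}(\bar{x},\bar{y})$ of Lemma~\ref{lem:setA} together with the enforcer $\mathcal{S}(\bar{y},\bar{y},\bar{y})$ used to pad the introduced $2$-clauses into monotone $3$-clauses. For the inductive step, suppose NP-hardness of \textsc{Monotone 3-Sat-}$(k,k)$ has already been secured for some fixed $k\geq 3$. The lemma immediately preceding the corollary then supplies a polynomial-time reduction from \textsc{Monotone 3-Sat-}$(k,k)$ to \textsc{Monotone 3-Sat-}$(k+1,k+1)$ by taking $k+1$ disjoint copies of the instance and gluing them together with the balancing clauses $\{x_j^i,y_j,z_j\}$ and $\{\overline{x_j^i},\overline{y_j},\overline{z_j}\}$ over fresh auxiliaries $y_j,z_j$. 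Composing this reduction with the inductive hypothesis yields NP-hardness at level $k+1$, completing the induction.

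There is essentially no obstacle beyond bookkeeping: both the base case and the inductive step are already fully proved earlier in the excerpt, so the corollary is just their composition. The only subtlety to flag is that the reduction in the inductive lemma blows up the instance by a factor of $k+1$, so the size bound is polynomial only because $k$ is treated as a fixed constant at each step; the statement should therefore be read as ``for every fixed $k\geq 3$,'' which is exactly how it is phrased. Hence \textsc{Monotone 3-Sat-}$(k,k)$ is NP-complete for every fixed $k\geq 3$.
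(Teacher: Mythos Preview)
Your proposal is correct and matches the paper's approach exactly: the corollary is stated immediately after the lemma and is meant to follow by induction from Theorem~\ref{thm:Mon3Sat-(3,3)} as base case and the preceding lemma as inductive step, with NP-membership being trivial. Your remark that $k$ must be treated as a fixed constant for the reduction to remain polynomial is also on point and consistent with the paper's statement.
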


\subsubsection{{\textsc {Monotone 3-Sat-}}$(2,2)$}

By Corollary~\ref{thm:Mon3Sat-(k,k)} we know that {\sc Monotone 3-Sat-($k$,$k$)} is NP-complete for all $k \geq 3$. Naturally, the question arises if this already settles a sharp boundary in terms of the number of variable appearances between NP-complete and polynomial time solvable cases. In this respect, this section aims at shedding light on the complexity of  \textsc{Monotone 3-Sat-(2,2)}. First of all, the question arises whether or not there are unsatisfiable instances of that problem. To the best of our knowledge, the answer to this question is still open. However, we can show that in case that question can be answered in the affirmative, \textsc{Monotone 3-Sat-(2,2)} is in fact NP-complete. We formally prove this result with the help of the following lemma.

\begin{lem}\label{lem:gadgetMCL}
Given an unsatisfiable instance of \textsc{Monotone 3-Sat-(2,2)}, we can construct a gadget $M_{C_\text{Sat}, \mathcal{L}}$ where
\begin{itemize}
\item $C_\text{Sat}$ is a set of monotone 3-clauses over a set of variables $V$,
\item $\mathcal{L}$ is a multiset of the literals $L_V = \{x_i, \overline{x_i} \mid x_i \in V\}$, 
\end{itemize}
such that the following three conditions are met: 
\begin{enumerate}
\item[(M1)] $C_\text{Sat}$ is satisfiable. Moreover, each truth assignment $\beta\colon V \rightarrow \{T, F\}$ that satisfies $C_\text{Sat}$ does not satisfy any of the literals contained in $\mathcal{L}$.
\item[(M2)] Let $\mathcal{L} = \mathcal{L}_+ \cup \mathcal{L}_-$ be the partition of $\mathcal{L}$ where $\mathcal{L}_+$ contains the positive literals, and $\mathcal{L}_-$ contains the negative literals, respectively. Then, we have $|\mathcal{L}_+| = |\mathcal{L}_-| = 3q$ for some fixed integer $q \geq 1$. 
\item[(M3)] Let $\mathcal{L}_{C_\text{Sat}}$ denote the multiset of literals that appear in $C_\text{Sat}$. Then, for each variable $x \in V$, $\mathcal{L}_{C_\text{Sat}} \cup \mathcal{L}$ contains $x$ exactly twice as a positive literal and exactly twice as a negative literal, respectively. 
\end{enumerate}
\end{lem}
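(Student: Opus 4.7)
\noindent\textbf{Proof plan for Lemma~\ref{lem:gadgetMCL}.}

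The plan is to treat the given unsatisfiable instance $U$ itself as the raw material of the gadget: delete from a suitable augmentation of $U$ precisely those clauses that are forced to be falsified by every ``best'' satisfying attempt, and declare their literals to constitute $\mathcal{L}$. The subtlety is that $U$ alone need not admit an optimal assignment whose falsified clauses split evenly between positive and negative monotone clauses, so the first step is to pair $U$ with a sign-flipped disjoint copy and harvest that symmetry for free.

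Concretely, let $V$ be the variables of $U$ and form a companion instance $\bar{U}$ on a fresh variable set $V' = \{v' : v \in V\}$ by replacing every literal in every clause of $U$ by its negation and renaming $v$ to $v'$. Then $\bar{U}$ is again a Monotone 3-Sat-$(2,2)$ instance, and via the sign-flip bijection $\beta \mapsto \tilde{\beta}$ defined by $\tilde{\beta}(v') := \neg \beta(v)$, the assignment $\tilde{\beta}$ falsifies a clause of $\bar{U}$ if and only if $\beta$ falsifies the corresponding clause of $U$, with positive and negative clauses swapping roles. In particular $\bar{U}$ is unsatisfiable, and the disjoint union $\widehat{U} := U \cup \bar{U}$ is an unsatisfiable Monotone 3-Sat-$(2,2)$ instance on $V \cup V'$.

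Next I would choose an assignment $\beta_0$ of $V \cup V'$ that minimises $|F(\beta_0)|$, where $F(\beta) := \{c \in \widehat{U} : \beta \text{ falsifies } c\}$, and set $C_\text{Sat} := \widehat{U} \setminus F(\beta_0)$ and $\mathcal{L}$ to be the multiset of all literals appearing in the clauses of $F(\beta_0)$. Property (M3) is then immediate, because rebundling a monotone 3-clause into three literals of $\mathcal{L}$ preserves the per-variable appearance counts, which are already $(2,2)$ in $\widehat{U}$. Property (M1) follows from minimality of $\beta_0$: trivially $\beta_0 \models C_\text{Sat}$, and for any $\beta \models C_\text{Sat}$ we have $F(\beta) \subseteq F(\beta_0)$, so minimality forces $F(\beta) = F(\beta_0)$; since every clause in $F(\beta_0)$ is monotone, $\beta$ falsifies each of its literals, that is, all of $\mathcal{L}$.

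The one step that requires real work is (M2), and this is exactly where the sign-flipped copy earns its keep. Because $V$ and $V'$ are disjoint, every minimum of $|F_{\widehat{U}}|$ splits as a sum of independent minima on $U$ and on $\bar{U}$, each equal to $k_1 := \min_{\beta}|F_U(\beta)| \geq 1$ (nonzero by unsatisfiability of $U$). If $\beta^*$ attains the minimum for $U$ while falsifying $a$ positive and $b$ negative clauses (with $a + b = k_1$), then by the sign-flip correspondence $\tilde{\beta}^*$ attains the minimum for $\bar{U}$ while falsifying $b$ positive and $a$ negative clauses. Choosing $\beta_0 := (\beta^*, \tilde{\beta}^*)$ therefore produces a minimiser whose falsified clauses contribute $3(a+b) = 3k_1$ positive and $3(b+a) = 3k_1$ negative literal occurrences to $\mathcal{L}$, establishing (M2) with $q := k_1 \geq 1$.
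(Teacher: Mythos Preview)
Your argument is correct and follows essentially the same route as the paper: start from the unsatisfiable instance, pass to a satisfiable subformula whose complement consists of clauses that are \emph{forced} false by every satisfying assignment, and pair it with a sign-flipped disjoint copy so that the removed positive and negative literal occurrences balance out to $3q$ each. The only cosmetic difference is that the paper selects $C_\text{Sat}$ as a \emph{maximal} satisfiable subset (and argues the forcing property from maximality), whereas you take the complement of a \emph{minimum}-falsified assignment (and argue forcing from minimality); both choices yield the required property (M1), and your choice is simply the stricter of the two.
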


\begin{proof}
Given an unsatisfiable instance of \textsc{Monotone 3-Sat-(2,2)}, let $C'$ denote the corresponding set of clauses over variables $V' = \{x'_1, \ldots, x'_n\}$. Then, there is a strict subset $C'_\text{Sat} \subsetneq C'$ such that $C'_\text{Sat}$ is satisfiable and $C'_\text{Sat} \cup \{c\}$ is unsatisfiable for all $c \in C'\setminus C'_\text{Sat}$. Now, each variable that appears in $C' \setminus C'_\text{Sat}$ has a forced truth value, i.e., if $x'_i$ appears negated (unnegated) in $C' \setminus C'_\text{Sat}$, then any satisfying truth assignments for $C'_\text{Sat}$ sets $x'_i$ true (false). Otherwise, there a satisfying assignment for $C'_\text{Sat}$ such that a clause in $C' \setminus C'_\text{Sat}$ is satisfied which is a contradiction since, by construction, such a clause does not exist. Also observe that no variable appears both negated and unnegated in $C' \setminus C'_\text{Sat}$. Let $\mathcal{L}'_+$ denote the multiset containing the positive literals appearing in $C' \setminus C'_\text{Sat}$ and $\mathcal{L}'_-$ the multiset containing the negative literals, respectively (e.g., if a negative literal $\ell$ appears twice in $C' \setminus C'_\text{Sat}$, then $\mathcal{L}'_-$ contains two copies of $\ell$). Since all clauses contain exactly three distinct literals, the number of literals in $\mathcal{L}'_- \cup \mathcal{L}'_+$ is divisible by 3. Observe that we can only guarantee that $\mathcal{L}'_+ \neq \emptyset$ or $\mathcal{L}'_- \neq \emptyset$. Therefore, we introduce a copy of $C'$  denoted by $C''$ (where the copy of  $C'_\text{Sat}$ is denoted by $C''_\text{Sat}$) over new variables $V'' = \{x''_1, \ldots, x''_n\}$, where we negate each literal. Observe that $C''$ is an instance of \textsc{Monotone 3-Sat-(2,2)}. With $\mathcal{L}''_+$ and $\mathcal{L}''_-$ defined as above, the clauses 
\begin{equation}\label{eq:C}
C_\text{Sat} = C'_\text{Sat} \cup C''_\text{Sat}
\end{equation}
force all literals in 
\begin{equation}\label{eq:L}
\mathcal{L} = \mathcal{L}'_+ \cup \mathcal{L}''_+ \cup \mathcal{L}'_- \cup \mathcal{L}''_-
\end{equation}
to be set to false. By construction, we have 
\[
|\mathcal{L}'_+ \cup \mathcal{L}''_+| = |\mathcal{L}'_- \cup \mathcal{L}''_-| = 3q
\]
with $q \geq 1$. It is now straightforward to verify that the gadget $M_{C, \mathcal{L}}$ with $C$ and $\mathcal{L}$ as defined in Equations~\eqref{eq:C} and~\eqref{eq:L}, respectively, has properties (M1), (M2) and (M3).    
\end{proof}

\begin{thm}
\textsc{Monotone 3-Sat-(2,2)} is either trivial or NP-complete. 
\end{thm}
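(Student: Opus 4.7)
First I would split on whether there exists any unsatisfiable instance of \textsc{Monotone 3-Sat-(2,2)}. If none exists, the problem is trivial (every instance is a yes-instance) and we fall into the first alternative. So suppose some unsatisfiable instance exists. Membership in NP is immediate, so it remains to prove NP-hardness via a polynomial reduction.

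From the unsatisfiable instance, Lemma~\ref{lem:gadgetMCL} supplies the gadget $M_{C_{\text{Sat}},\mathcal{L}}$. The central feature I would exploit is that every literal of $\mathcal{L}$ is forced to evaluate to false under any satisfying assignment of $C_{\text{Sat}}$. Consequently, a literal from $\mathcal{L}_+$ inserted into a positive 3-clause (and, similarly, a literal from $\mathcal{L}_-$ inserted into a negative 3-clause) behaves as a constant-false dummy, effectively turning the 3-clause into a 2-clause of the same polarity. Moreover, property (M3) guarantees that once each literal of $\mathcal{L}$ is consumed by exactly one external clause, the gadget variables automatically end with exactly two positive and two negative appearances in the final formula. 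By taking a polynomial number of independent copies of $M_{C_{\text{Sat}},\mathcal{L}}$, the supply of padding literals can be made to match any finite demand arising from the reduction.

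The reduction itself would be from \textsc{Monotone 3-Sat-(3,3)} (NP-complete by Theorem~\ref{thm:Mon3Sat-(3,3)}). For each variable $x$ of the source instance I would introduce several fresh copies and link them by an agreement sub-gadget. Since monotone 3-clauses cannot express equivalence directly, I would build the agreement from "not-equal" constraints routed through a mediator: a pair of the form $(u\vee v\vee \ell_f)$ with $\ell_f\in\mathcal{L}_+$ together with $(\overline{u}\vee \overline{v}\vee \ell_t)$ with $\ell_t\in\mathcal{L}_-$ enforces $u\neq v$, and combining $u\neq z$ with $v\neq z$ through a fresh mediator $z$ enforces $u=v$. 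The three positive and three negative occurrences of $x$ from the source instance are then distributed across its copies, and the number of copies plus the layout of the agreement sub-gadget are chosen so that every copy, every mediator, and every gadget variable ends with exactly two positive and two negative appearances. Correctness in the forward direction uses the forced-false property of $\mathcal{L}$; in the backward direction, one verifies that the three restrictions $u\neq z$, $v\neq z$ and the distribution of $x$'s occurrences recover a satisfying assignment of the source formula.

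The main obstacle is the combinatorial bookkeeping. Because monotonicity forbids direct equivalence clauses, each "not-equal" constraint contributes precisely one positive and one negative appearance to each of its two endpoints, so the distribution of $x$'s six original occurrences among the copies is tightly constrained. I expect to handle this by picking three or four copies per source variable, balancing the agreement structure accordingly (so that no single copy is over-loaded), and using additional harmless "filler" 3-clauses composed entirely of $\mathcal{L}$-literals to round out any residual slack in the appearance counts. Since every sub-gadget has constant size and the number of copies of $M_{C_{\text{Sat}},\mathcal{L}}$ is polynomial in the size of the source instance, the entire transformation runs in polynomial time, yielding the desired NP-hardness.
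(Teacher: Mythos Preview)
Your overall strategy---reduce from \textsc{Monotone 3-Sat-(3,3)}, split each source variable into fresh copies, link copies by padded monotone ``not-equal'' clauses, and use the gadget $M_{C_{\text{Sat}},\mathcal{L}}$ from Lemma~\ref{lem:gadgetMCL} to supply the forced-false padding literals---is exactly the route the paper takes. The paper, however, commits to a concrete layout that you leave open, and two of your proposed fallback mechanisms do not work.

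First, your ``filler 3-clauses composed entirely of $\mathcal{L}$-literals'' are not harmless: every literal in $\mathcal{L}$ is forced to evaluate to \emph{false}, so a monotone 3-clause built solely from such literals is unsatisfied under every satisfying assignment of $C_{\text{Sat}}$. You therefore cannot use $\mathcal{L}$-only clauses to absorb slack; any residual appearance-count imbalance must be resolved structurally. Second, the claim that ``the supply of padding literals can be made to match any finite demand'' overlooks the divisibility constraint built into (M2) and (M3): each copy of the gadget contributes exactly $3q$ positive and $3q$ negative padding literals, and all of them must be consumed. Your construction must therefore produce a number of positive 2-clauses and a number of negative 2-clauses that are equal and divisible by $3q$. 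The paper handles both issues simultaneously by (i) using six copies $x_{i,1},\ldots,x_{i,6}$ per source variable, with a cyclic chain of six 2-clauses enforcing $x_{i,1}=x_{i,3}=x_{i,5}\neq x_{i,2}=x_{i,4}=x_{i,6}$ and two extra negative 3-clauses to balance negated appearances, and then (ii) taking $q$ disjoint copies of the entire transformed formula so that the $3n$ positive and $3n$ negative 2-clauses become $3nq$ of each, matched exactly by $n$ copies of $M_{C_{\text{Sat}},\mathcal{L}}$. Once you replace the filler idea and the hand-waved supply argument with this kind of explicit bookkeeping, your proposal goes through.
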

\begin{proof}
We sketch a polynomial reduction from \textsc{Monotone 3-Sat-(3,3)}, for which NP-hardness was established in Theorem~\ref{thm:Mon3Sat-(3,3)}, with clauses over a set of variables $V = \{x_1, x_2, \ldots, x_n\}$. For each variable $x_i \in V$, we replace each appearance with a separate new variable~$x_{i, s}$, $1 \leq s \leq 6$, such that the positive literal $x_i$ is replaced with $x_{i, 1}$, $x_{i, 3}$ and $x_{i, 5}$, respectively, and the negative literal $\overline{x_i}$ is replaced with $x_{i, 2}$, $x_{i, 4}$ and $x_{i, 6}$, respectively. We denote the resulting set of clauses by $C$. Next, for each $i \in \{1, \ldots, n\}$,  we introduce the following clauses
\[
C_i = \{\{x_{i,1}, x_{i,2}\},\, \{\overline{x_{i,2}}, \overline{x_{i,3}}\},\, \{x_{i,3}, x_{i,4}\},\, \{\overline{x_{i,4}}, \overline{x_{i,5}}\},\, \{x_{i,5}, x_{i,6}\},\, \{\overline{x_{i,6}}, \overline{x_{i,1}}\}\},
\] 
which are equivalent to the following cyclic chain of implications
\[
\overline{x_{i,1}} \Rightarrow x_{i,2} \Rightarrow \overline{x_{i,3}} \Rightarrow x_{i,4} \Rightarrow \overline{x_{i,5}} \Rightarrow x_{i,6} \Rightarrow \overline{x_{i,1}}.
\]
Hence, a truth assignment $\beta$ satisfies these clauses if and only if 
\[
\beta(x_{i,1}) = \beta(x_{i,3}) = \beta(x_{i,5}) \neq \beta(x_{i,2}) = \beta(x_{i,4}) = \beta(x_{i,6})
\]
for all $i \in \{1, 2, \ldots, n\}$. Observe that each variable $x_{i,s}$ appears exactly once unnegated and exactly once negated in $C_i$ and exactly once unnegated in the remaining clauses. In order to increase the number of negated appearances of each variable by one, we introduce
\[
C'_i = \{\{\overline{x_{i,1}}, \overline{x_{i,2}}, \overline{x_{i,6}}\}, \{\overline{x_{i,3}}, \overline{x_{i,4}}, \overline{x_{i,5}}\}\}
\]
for each $i \in \{1, 2, \ldots, n\}$. Recall that a truth assignment that satisfies $C_i$ assigns different truth values to $x_{i,s}$ and $x_{i,t}$, where $s \in \{1, 3, 5\}$ and $t \in \{2, 4, 6\}$. Hence, a truth assignment that satisfies $C_i$ also satisfies $C'_i$.

Finally, we deal with the 2-clauses introduced above. Assume that there is an unsatisfiable instance of \textsc{Monotone 3-Sat-(2,2)}. 
Then we can apply  Lemma~\ref{lem:gadgetMCL} and the gadget $M_{C_\text{Sat}, \mathcal{L}}$ used in that lemma. Recall that the corresponding set of clauses $C_\text{Sat}$ can be satisfied only by assignments that do not satisfy any of the literals contained in the multiset $\mathcal{L}$. Further, the multiset $\mathcal{L}$ contains exactly $3q$ positive and exactly $3q$ negative literals for some fixed integer $q \geq 1$. Note that if we knew that $q = 1$, then we could simply use $n$ instances of this gadget to pad all 2-clauses, i.e., $\bigcup_{i = 1}^n C_i$, since each $C_i$ contains exactly 3 positive and exactly 3 negative 2-clauses. As we can not make this assumption, we solve the parity problem as follows. First, we replace the clauses 
\[
\mathcal{C} = C \cup \bigcup_{i = 1}^n \left( C_i \cup C'_i \right)
\]
with $q$ copies $\mathcal{C}_1, \mathcal{C}_2, \ldots, \mathcal{C}_q$ such that the variables of the $k$th copy are 
\[
V_k = \{x_{i, s}^k \mid 1 \leq i \leq n \text{ and } 1\leq s \leq 6\}.
\] 
Now, the set of clauses $\bigcup_{i = 1}^q \mathcal{C}_i$ contains exactly $q \cdot 3n$ negative 2-clauses and exactly $q \cdot 3n$ positive 2-clauses. Then, we use $n$ instances of the gadget $M_{C_\text{Sat}, \mathcal{L}}$, where each instance has their own new variables, to pad these 2-clauses. To be precise, we introduce the set of clauses $\bigcup_{i = 1}^n C_\text{Sat}^i$, where $C_\text{Sat}^i$ is the set of clauses corresponding to the $i$th instance of the gadget. The corresponding multiset of literals is $\bigcup_{i=1}^n \mathcal{L}^i$ and, by Property (M2), contains exactly $n \cdot 3q$ positive literals and exactly $n \cdot 3q$ negative literals. Hence, we can pair each positive (resp. negative) 2-clause with exactly one positive (resp. negative) literal that evaluates to false by Property (M1). Note that this is a one-to-one correspondence. Finally, replace each 2-clause with this union of the 2-clause with the paired literal. By construction and Property (M3) in Lemma~\ref{lem:gadgetMCL}, the resulting instance is indeed an instance of \textsc{Monotone 3-Sat-(2,2)}. It is straightforward to verify that \textsc{Monotone 3-Sat-(2,2)} is satisfiable if and only if the given instance of \textsc{Monotone 3-Sat-(3,3)} is satisfiable.  
\end{proof}

Finally, we conclude this section by settling the computational complexity of the considered problem when the literals in the clauses do not need to be distinct, i.e., \textsc{Monotone 3-Sat*-(2,2)}. 

\begin{thm}
\textsc{Monotone 3-Sat*-(2,2)} is NP-complete. 
\end{thm}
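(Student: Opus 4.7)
The plan is to replay the reduction of the preceding theorem---from \textsc{Monotone 3-Sat-(3,3)} to a $(2,2)$-restricted monotone problem---instantiating the padding gadget of Lemma~\ref{lem:gadgetMCL} with an explicit unsatisfiable instance of \textsc{Monotone 3-Sat*-(2,2)}. The earlier result was conditional only because producing such an instance for the duplicate-free variant is open; permitting duplicate literals provides enough slack to construct one unconditionally.

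Concretely, I would exhibit the six-variable instance with the eight monotone clauses $\{a, b, b\}$, $\{a, c, c\}$, $\{d, d, e\}$, $\{e, f, f\}$, $\{\bar{a}, \bar{d}, \bar{d}\}$, $\{\bar{a}, \bar{e}, \bar{e}\}$, $\{\bar{b}, \bar{b}, \bar{c}\}$, and $\{\bar{c}, \bar{f}, \bar{f}\}$. A direct count shows that every variable appears exactly twice positively and twice negatively, so the instance belongs to \textsc{Monotone 3-Sat*-(2,2)}. Unsatisfiability follows by a two-way case analysis on $\beta(a)$: if $\beta(a) = T$, then $\{\bar{a}, \bar{d}, \bar{d}\}$ and $\{\bar{a}, \bar{e}, \bar{e}\}$ force $\beta(d) = \beta(e) = F$, violating $\{d, d, e\}$; whereas if $\beta(a) = F$, then $\{a, b, b\}$ and $\{a, c, c\}$ force $\beta(b) = \beta(c) = T$, violating $\{\bar{b}, \bar{b}, \bar{c}\}$.

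The proof of Lemma~\ref{lem:gadgetMCL} never relies on the distinctness of literals within a clause, so its construction applies verbatim in the $\ast$ setting; fed with the instance above, it yields a padding gadget $M_{C_\text{Sat}, \mathcal{L}}$ satisfying properties (M1)--(M3). The remainder of the reduction then is identical to the preceding proof: split each variable of the input \textsc{Monotone 3-Sat-(3,3)} instance into six copies linked by the cyclic implication chain $C_i$ of positive and negative 2-clauses, add the two balancing 3-clauses $C'_i$, take $q$ disjoint copies of the whole formula so that the number of positive (resp.\ negative) 2-clauses becomes a multiple of $3q$, and pair each remaining 2-clause with a literal of matching sign drawn from a fresh copy of $M_{C_\text{Sat}, \mathcal{L}}$. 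The principal obstacle is precisely the first step; once the unsatisfiable \textsc{Monotone 3-Sat*-(2,2)} instance is in hand, the rest is the same routine bookkeeping already carried out in the preceding theorem.
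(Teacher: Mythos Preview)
Your argument is correct but takes a different route from the paper. The paper gives a direct reduction from \textsc{Monotone 3-Sat-(3,3)}: for each variable $x_i$ it introduces copies $x_{i,1},\ldots,x_{i,6}$ and auxiliary variables $y_{i,1},\ldots,y_{i,9}$, and builds the cyclic implication chain \emph{directly} out of monotone $3$-clauses with duplicated literals such as $\{\overline{x_{i,1}},\overline{y_{i,1}},\overline{y_{i,1}}\}$ and $\{x_{i,2},y_{i,1},y_{i,1}\}$. Eighteen such clauses per variable enforce $\beta(x_{i,1})=\cdots=\beta(x_{i,6})$ and simultaneously give every variable exactly two positive and two negative occurrences, so no $2$-clauses are ever created and the padding machinery of Lemma~\ref{lem:gadgetMCL} is bypassed entirely.

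Your approach instead supplies the missing ingredient for the preceding conditional theorem: an explicit unsatisfiable \textsc{Monotone 3-Sat*-(2,2)} instance, which you then feed through Lemma~\ref{lem:gadgetMCL} (whose proof indeed does not use distinctness of literals within a clause) to obtain the padding gadget, and finally replay the $C_i$/$C'_i$/$q$-fold-copy construction verbatim. This is perfectly valid; the six-variable instance you give is correct, and with $C'_{\text{Sat}}$ taken to be the whole instance minus a single clause one even gets $q=1$, so the copying step is vacuous. The paper's direct gadget is more self-contained and avoids the somewhat heavy Lemma~\ref{lem:gadgetMCL} detour, whereas your argument has the virtue of making precise why the starred variant is unconditionally hard: allowing duplicates immediately furnishes the unsatisfiable seed instance that the duplicate-free problem is still missing.
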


\begin{proof} 
By reduction from \textsc{Monotone 3-Sat-(3,3)}, for which NP-hardness was established in Theorem~\ref{thm:Mon3Sat-(3,3)}. Let $\mathcal{I} = (V, C)$ be an instance of \textsc{Monotone 3-Sat-(3,3)}. Let $n := |V|$ denote the number of variables, $m := |C|$ the number of clauses and let the set of variables be given as $V := \{x_1,x_2,\ldots,x_n\}$. 

For each variable $x_i \in V$, we introduce new variables $x_{i,j}$, $1 \leq j \leq 6$ and replace the three unnegated appearances with $x_{i,1}$, $x_{i,2}$ and $x_{i,3}$, and the  negated appearances, i.e., $\overline{x_i}$ with $\overline{x_{i,2}}$, $\overline{x_{i,4}}$ and $\overline{x_{i,6}}$, respectively. Next, we introduce for each $i \in \{1, 2, \ldots, n\}$ the following clauses
\begin{align*} &\{x_{i,1}, y_{i,9}, y_{i,9}\},\{\overline{x_{i,1}}, \overline{y_{i,1}}, \overline{y_{i,1}}\}, \{\overline{x_{i,1}}, \overline{y_{i,2}}, \overline{y_{i,2}}\},\\ &\{x_{i,2}, y_{i,1}, y_{i,1}\},\{x_{i,2}, y_{i,2}, y_{i,2}\}, \{\overline{x_{i,2}}, \overline{y_{i,3}}, \overline{y_{i,3}}\},\\ &\{x_{i,3}, y_{i,3}, y_{i,3}\},\{\overline{x_{i,3}}, \overline{y_{i,4}}, \overline{y_{i,4}}\}, \{\overline{x_{i,3}}, \overline{y_{i,5}}, \overline{y_{i,5}}\},\\ &\{x_{i,4}, y_{i,4}, y_{i,4}\},\{x_{i,4}, y_{i,5}, y_{i,5}\}, \{\overline{x_{i,4}}, \overline{y_{i,6}}, \overline{y_{i,6}}\},\\
&\{x_{i,5}, y_{i,6}, y_{i,6}\},\{\overline{x_{i,5}}, \overline{y_{i,7}}, \overline{y_{i,7}}\}, \{\overline{x_{i,5}}, \overline{y_{i,8}}, \overline{y_{i,8}}\},\\
&\{x_{i,6}, y_{i,7}, y_{i,7}\},\{x_{i,6}, y_{i,8}, y_{i,8}\}, \{\overline{x_{i,6}}, \overline{y_{i,9}}, \overline{y_{i,9}}\}.
\end{align*} 

Note that each variable appears exactly twice unnegated and exactly twice negated (duplicates in clauses are counted as separate appearances) in the constructed instance, and that each clause is monotone and contains exactly three literals. By construction, a truth assignment $\beta$ for $\{x_{i,s} \mid 1 \leq s \leq 6\}$ can be extended to a truth assignment $\beta'$ for $\{x_{i,s} \mid 1 \leq s \leq 6\} \cup \{y_{i,t} \mid 1 \leq t \leq 9\}$ that satisfies the clauses defined above if and only if $\beta(x_{i,1}) = \beta(x_{i,2}) = \ldots = \beta(x_{i,6})$. To that end, observe that a subset of the clauses introduced above is equivalent to the following cyclic chain of implications (see also Figure~\ref{fig:duplicates}):
\begingroup
\renewcommand{\arraystretch}{1.25}
\[
\begin{array}[c]{ccccccccccc}
x_{i,1} & \Rightarrow & \overline{y_{i,1}} & \Rightarrow & x_{i,2} & \Rightarrow & \overline{y_{i,3}} & \Rightarrow & x_{i,3} & \Rightarrow & \overline{y_{i,4}}\\
\Uparrow & & & & & & & & & & \Downarrow \\
\overline{y_{i,9}} & \Leftarrow & x_{i,6} & \Leftarrow & \overline{y_{i,7}} & \Leftarrow & x_{i,5} & \Leftarrow & \overline{y_{i,6}} & \Leftarrow & x_{i,4}
\end{array}
\]
\endgroup
Hence, we have $\beta(x_{i,1}) = \beta(x_{i,2}) = \ldots = \beta(x_{i,6})$ in any satisfying truth assignment. Furthermore, we can satisfy the introduced clauses by setting all variables in $\{x_{i,s} \mid 1 \leq s \leq 6\}$ true (resp. false) and all variables in $\{y_{i,t} \mid 1 \leq t \leq 9\}$ false (resp. true).

Now it is straightforward to verify that we constructed an instance of \textsc{Monotone 3-Sat*-(2,2)} that is satisfiable if and only if $\mathcal{I}$ is satisfiable.
\begin{figure}    
  \centering
  \includegraphics[width=0.7\textwidth]{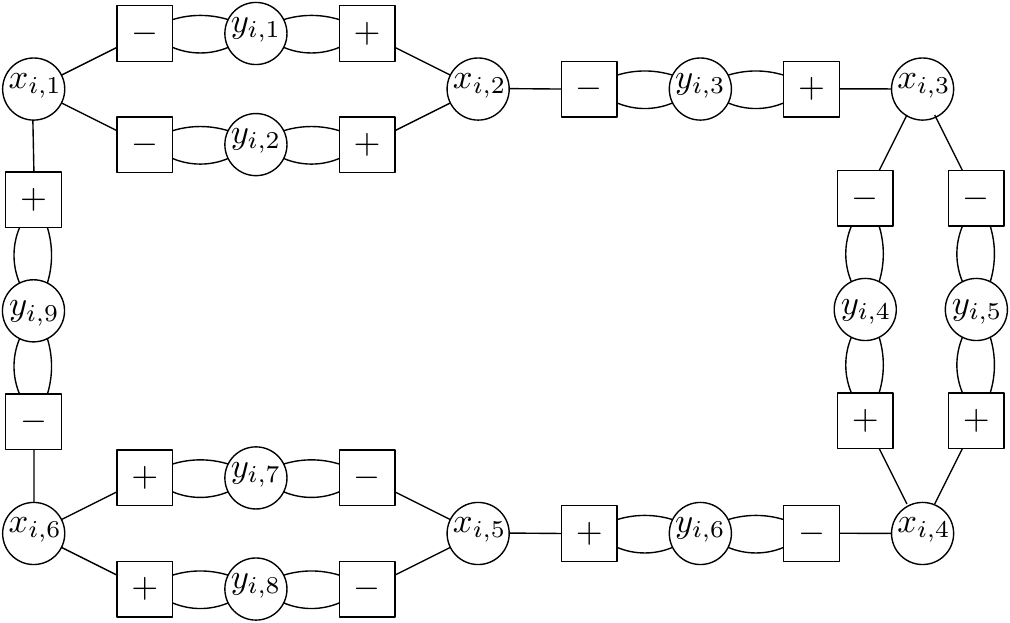}
  \caption{Gadget to reduce the number of variable appearances of a variable $x_i$. The variables $x_{i,1}, \ldots, x_{i,6}$ correspond to the six appearances of $x_i$ and $y_{i,1}, \ldots, y_{i,9}$ are auxiliary variables. Variables are depicted as circles and clauses as squares, respectively. All clauses are monotone, i.e., either all variables in a clause are unnegated (indicated by the plus sign) or all of them are negated (indicated by the minus sign), respectively. An edge between a variable and a clause means that the variable appears in the clause, either once if there is exactly one edge or twice if there are two edges. By construction, a truth assignment $\beta$ satisfies the depicted clauses if and only if $\beta(x_{i,1}) = \beta(x_{i,2}) = \ldots = \beta(x_{i,6}) \neq \beta(y_{i,1}) = \beta(y_{i,2}) = \ldots = \beta(y_{i,9})$.}
  \label{fig:duplicates}
\end{figure}
\end{proof}

\subsection{Exactly one negated appearance of each variable}\label{sub:once-negated}

In this section, we settle the computational complexity status of \textsc{Monotone 3-Sat-($k,1$)} for $k\geq 5$. We do not answer the question of its computational complexity for $k \in \{3,4\}$, which, to the best of our knowledge, is still open. However, for $k \in \{3,4\}$ we can show that when restricted to a ``small'' number of unnegated appearances each instance of  \textsc{Monotone 3-Sat-($k,1$)} is satisfiable. 

\subsubsection{On {\textsc{Monotone 3-Sat-($k,1$)}} for $k \geq 5$}

It will be useful to introduce some additional notation. Let $V$ be a set of variables and $C, C' \subseteq \mathcal{P}(V)$ non-empty sets of clauses, where $\mathcal{P}(V)$ denotes the power set of $V$. We say that $C$ \emph{subsumes} $C'$ if for each clause $c' \in C'$ there is a clause $c \in C$ such that $c \subseteq c'$. Consequently, if $C$ is satisfiable and subsumes $C'$, then $C'$ is satisfiable. On the other hand, if $C'$ is unsatisfiable, then so is $C$. 

We begin with Lemma~\ref{lem:D(X)} which will be used for proving the computational complexity result for \textsc{Monotone 3-Sat-($k,1$)} for $k\geq 5$.

\begin{lem}\label{lem:D(X)}
Let $\mathcal{D}(X)$ with $X = (x_1, x_2, \ldots, x_6)$ be the following set of clauses, where $V_\text{aux} = \{a, b, \ldots, i\}$ are new variables. 

\begin{multicols}{4} 
\begin{enumerate}
\item $\{\bar{a}, \bar{c}, \bar{e}\}$
\item $\{\bar{b}, \bar{f}, \bar{h}\}$
\item $\{\bar{d}, \bar{g}, \bar{i}\}$
\item $\{a, b, d\}$
\item $\{a, d, f\}$
\item $\{a, f, i\}$
\item $\{a, h, i\}$
\item $\{b, c, d\}$
\item $\{b, c, g\}$
\item $\{b, e, g\}$
\item $\{c, g, h\}$
\item $\{c, h, i\}$
\item $\{e, f, g\}$
\item $\{e, f, i\}$
\item $\{a, g, x_1\}$
\item $\{b, i, x_2\}$
\item $\{c, f, x_3\}$
\item $\{d, e, x_4\}$
\item $\{d, h, x_5\}$
\item $\{e, h, x_6\}$
\end{enumerate}
\end{multicols}
Then, a truth assignment $\beta$ for $X$ can be extended to a truth assignment $\beta'$ for $X \cup V_{\text{aux}}$ that satisfies $\mathcal{D}(X)$ if and only if $\beta(v) = T$ for at least one $v \in X$.
\end{lem}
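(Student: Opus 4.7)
The plan has two parts. For the (\emph{if}) direction, for each $v \in X$ I would exhibit an explicit extension of $V_\text{aux}$ that, together with $\beta(v)=T$, satisfies $\mathcal{D}(X)$. A sample extension that handles $v=x_1$ is
\[
(a,b,c,d,e,f,g,h,i) = (F,T,T,T,T,F,F,T,T),
\]
which can be verified clause by clause: clauses 1, 2, 3 are satisfied by $\bar a$, $\bar f$, $\bar g$, respectively; each of clauses 4--14 contains a true literal from $\{b,c,d,e,h,i\}$; clauses 16--20 are satisfied via the auxiliary literals $b, c, d, d, e$, respectively; and clause 15 is satisfied by $x_1$. Five analogous assignments (one per value of $v \in X$) handle the remaining cases.

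For the (\emph{only if}) direction, I would argue by contradiction: suppose $\beta'$ satisfies $\mathcal{D}(X)$ while $\beta(x_i)=F$ for all $i$. Clauses 15--20 then reduce to the six $2$-clauses
\[
\{a,g\},\ \{b,i\},\ \{c,f\},\ \{d,e\},\ \{d,h\},\ \{e,h\}.
\]
I would proceed by case analysis on $\beta'(h)$. If $h=F$, the $2$-clauses $\{d,h\}$ and $\{e,h\}$ force $d=e=T$, so clause~1 requires $a=F$ or $c=F$ and clause~3 requires $g=F$ or $i=F$; splitting on $g$ (and invoking clauses 7, 11, 12 together with the reduced clause $\{a,g\}$) yields a contradiction in each sub-case. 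If $h=T$, clause~2 forces $b=F$ or $f=F$; in each sub-case the reduced $2$-clauses together with clauses 1 and 3 propagate enough truth values that one of the positive $3$-clauses among 4--14 ends up with all literals false.

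The main obstacle is the thoroughness of the only-if case analysis, given 11 positive $3$-clauses over 9 auxiliary variables plus the constraints from clauses 1--3 and the reduced $2$-clauses. A useful structural observation to keep the case tree small is that clauses 1--3 partition $V_\text{aux}$ into the three triples $A_1=\{a,c,e\}$, $A_2=\{b,f,h\}$, $A_3=\{d,g,i\}$, and each of the clauses 4--14 contains exactly one variable from each $A_j$. Any satisfying assignment of 1--14 thus corresponds to choosing ``false sets'' $F_j \subseteq A_j$ with $|F_j| \geq 1$ such that the product $F_1 \times F_2 \times F_3$ avoids all 11 clauses; the only-if direction then reduces to showing that every such admissible choice leaves at least one of the six reduced $2$-clauses with both literals false, which is a finite mechanical check that the case split on $h$ organises efficiently.
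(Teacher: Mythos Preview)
Your proposal is correct, but the two directions are organised rather differently from the paper's proof, and your closing ``structural observation'' is in fact precisely the engine the paper runs on.

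For the \emph{only if} direction, the paper does not branch on $h$. Instead it uses your partition $A_1=\{a,c,e\}$, $A_2=\{b,f,h\}$, $A_3=\{d,g,i\}$ directly: since each auxiliary variable is negated exactly once (in clauses 1--3), one may assume without loss of generality that a satisfying assignment sets exactly one variable in each $A_j$ to false. Any such assignment falsifies exactly one clause of
\[
\mathcal{U}=\{\{u,v,w\}\mid (u,v,w)\in A_1\times A_2\times A_3\},
\]
and the paper then checks that the positive clauses of $\mathcal{D}(\emptyset)$ (clauses 4--20 with the $x_s$ deleted) \emph{subsume} every one of the $27$ elements of $\mathcal{U}$. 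Hence $\mathcal{D}(\emptyset)$ is unsatisfiable, which is exactly the \emph{only if} direction. Your case split on $h$ reaches the same conclusion but at the cost of several nested sub-cases; the subsumption check is a flat table of $27$ one-line verifications.

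For the \emph{if} direction, the paper avoids giving six bespoke assignments. It observes that setting some $x_s$ true removes the corresponding reduced $2$-clause, and that each of the six $2$-clauses 15--20 is the \emph{unique} witness of subsumption for at least one triple $\{u,v,w\}\in\mathcal{U}$. Thus some $\{u,v,w\}$ is no longer subsumed, and setting $u,v,w$ false and all other auxiliaries true satisfies every remaining clause (the negative clauses 1--3 by choice of $u,v,w$, and every positive clause because it contains some variable outside $\{u,v,w\}$). This gives a uniform recipe covering all six cases at once, whereas your approach requires verifying six explicit assignments. Your sample assignment for $x_1$ is fine; note, however, that it is not the one produced by the paper's recipe (which would set exactly three auxiliaries false), so the ``analogous'' assignments for $x_2,\ldots,x_6$ still need to be written out and checked individually under your plan.
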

\begin{proof}
Since each new variable appears only once negated, we can assume that a truth assignment that satisfies $\mathcal{D}(X)$ assigns the truth value false to exactly one variable of each negative clause (the corresponding literal evaluates to true). Hence, clauses 1, 2 and 3 in conjunction with the following set of $3^3 = 27$ clauses 
\[
\mathcal{U} = \{\{u, v, w\} \mid (u,v,w) \in \{a, c, e\} \times \{b,f,h\} \times \{d,g,i\}  \}
\]
is unsatisfiable.  
For now, we consider $\mathcal{D}(X)$ with the variables $x_1, \ldots, x_6$ removed (i.e., clauses 15--20 are 2-clauses). Let $\mathcal{D}(\emptyset)$ denote this set of clauses. Next, we show that $\mathcal{D}(\emptyset)$ subsumes $\mathcal{U}$. For each clause $c \in \mathcal{U}$ we list the clauses in $\mathcal{D}(\emptyset)$ that subsume $c$ (e.g., the clause $\{e, h, d\}$ is subsumed by clause 18, 19 and 20, respectively): 
\begin{multicols}{3} 
\begin{enumerate}
\item $\{a, b, d\}$ (4)
\item $\{a, b, g\}$ (15)
\item $\{a, b, i\}$ (16)
\item $\{c, b, d\}$ (8)
\item $\{c, b, g\}$ (9)
\item $\{c, b, i\}$ (16)
\item $\{e, b, d\}$ (18)
\item $\{e, b, g\}$ (10)
\item $\{e, b, i\}$ (16)
\item $\{a, f, d\}$ (5)
\item $\{a, f, g\}$ (15)
\item $\{a, f, i\}$ (6)
\item $\{c, f, d\}$ (17)
\item $\{c, f, g\}$ (17)
\item $\{c, f, i\}$ (17)
\item $\{e, f, d\}$ (18)
\item $\{e, f, g\}$ (13)
\item $\{e, f, i\}$ (14)
\item $\{a, h, d\}$ (19)
\item $\{a, h, g\}$ (15)
\item $\{a, h, i\}$ (7)
\item $\{c, h, d\}$ (19)
\item $\{c, h, g\}$ (11)
\item $\{c, h, i\}$ (12)
\item $\{e, h, d\}$ (18,19,20)
\item $\{e, h, g\}$ (20)
\item $\{e, h, i\}$ (20)
\end{enumerate}
\end{multicols}
Thus, $\mathcal{D}(\emptyset)$ is unsatisfiable. Consequently, if all $x_s$, $1 \leq s \leq 6$, are set false, then $\mathcal{D}(X)$ is unsatisfiable. Now, if we satisfy any of the clauses 15--20 in $\mathcal{D}(X)$ by setting $x_s$ to true for some $s \in \{1,2,\ldots,6\}$, then at least one of the clauses in $\mathcal{U}$ is not subsumed. Note that removing any clause of $\mathcal{D}(\emptyset)\setminus\{c_1, c_2, c_3\}$, where $c_j$ refers to the $j$th clause in $\mathcal{D}(X)$, means that some clause in $\mathcal{U}$ is not subsumed. In other words, if we remove any positive clause, we can satisfy the remaining clauses in $\mathcal{D}(X)$. 
Let $\{u,v,w\} \in \mathcal{U}$ be a clause that is not subsumed after assigning truth values to $x_1, \ldots x_6$. Then, setting $u, v, w$ false and all other variables true satisfies all clauses in $\mathcal{D}(X)$. Note that this assignment satisfies all negative clauses by construction. Further, each positive clause contains at least one variable $z \not\in \{u,v,w\}$, and thus is satisfied.  
\end{proof}

\textbf{Remark.} Let  $y$ be a new variable. By construction, the set of clauses   
\[
\mathcal{F}(y) = \mathcal{D}(X_1) \cup \mathcal{D}(X_2) \cup \mathcal{D}(X_3) \cup \{\{\overline{u_1}, \overline{u_2}, \overline{u_3}\}\},
\]
with new variables $u_1, u_2, u_3$ and $X_i = (y, u_i, u_i, u_i, u_i, u_i)$ for $i \in \{1, 2, 3\}$ forces $y$ to true, where  $\mathcal{D}(X_i)$ refers to the set of clauses in Lemma~\ref{lem:D(X)}. Note that each new variable except $y$ appears exactly 5 times unnegated and once negated and $y$ appears exactly three times unnegated. 

Now, we are ready to prove hardness of \textsc{Monotone 3-Sat-(5,1)}.

\begin{thm}\label{thm:Mon3Sat-(1,5)}
\textsc{Monotone 3-Sat-(5,1)} is NP-complete. 
\end{thm}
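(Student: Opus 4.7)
The plan is to reduce from \textsc{Monotone 3-Sat-(3,3)}, which is NP-hard by Theorem~\ref{thm:Mon3Sat-(3,3)}. Given an instance $\mathcal{I}=(V,C)$ in which every variable has three unnegated and three negated occurrences in monotone 3-clauses, I would first replace each source variable $x$ by several fresh target copies that inherit the original occurrences: copies corresponding to unnegated occurrences of $x$ inherit the unnegated appearance in the transformed clauses, and copies corresponding to negated occurrences inherit the negated appearance. After this renaming, the original clauses of $C$ become monotone 3-clauses in the target, and every copy occurs exactly once in them with the intended sign.

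The heart of the construction is then a per-source-variable \emph{equality-and-balancing block} that serves two purposes simultaneously: (i) it forces all copies of the same source variable to agree on a common truth value under any satisfying assignment, and (ii) it brings the appearance count of every copy, and of every auxiliary variable that the block introduces, up to exactly five unnegated and one negated. The block would be assembled from instances of the gadget $\mathcal{D}$ of Lemma~\ref{lem:D(X)}, whose six-input positive disjunction is the natural atomic primitive here because its internal auxiliaries already satisfy the $(5,1)$-pattern, together with constant-true variables supplied by copies of the forcing gadget $\mathcal{F}$ from the Remark. The constant-true helpers come in with three unnegated appearances already consumed and two unnegated plus one negated appearance still available; these remaining slots are precisely what is needed to pad the copies' positive counts and to close off the negative 3-clauses that carry the ``at least one false''-part of the equality constraint.

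The main obstacle is securing equality of the copies under the tight $(5,1)$ appearance budget: a cyclic chain of monotone ``non-equality'' pairs in the spirit of Lemma~\ref{lem:setA} would force each copy to have at least two negated appearances, which is prohibited in the target class, so that gadget cannot be reused. Hence one has to design a compound block in which the $\mathcal{D}$-gadgets provide the positive-disjunction side and a small, carefully chosen collection of negative 3-clauses (each using the single negated slot of some copies and of some $\mathcal{F}$-constants) simultaneously rules out mixed assignments of the copies and closes the appearance bookkeeping. Once such a block is in place, correctness follows the standard template: a satisfying source assignment extends to the target by setting all copies of $x$ to $\beta(x)$ and invoking the satisfying extensions guaranteed by Lemma~\ref{lem:D(X)} and the Remark, while a satisfying target assignment projects back to $\mathcal{I}$ because the transformed source clauses are among the target clauses and the equality-and-balancing block fixes a well-defined common value for each source variable. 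The reduction is clearly polynomial, since only a constant amount of material is added per source variable and per source clause.
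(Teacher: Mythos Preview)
Your plan leaves the central step undone: you describe what the ``equality-and-balancing block'' must achieve but never construct it, and the theorem lives or dies on that construction. Worse, the block as you specify it cannot exist. Each of your three \emph{negative} copies $z$ of a source variable inherits its single negated occurrence from the transformed source clause, so within the per-variable block $z$ appears five times unnegated and zero times negated. Consequently, flipping $z$ from false to true can only satisfy block clauses, never violate them; the block is therefore incapable of ruling out the mixed assignment ``$z=\text{true}$, all other copies false''. Hence the block cannot ``force all copies of the same source variable to agree on a common truth value'' as you claim, and your backward-direction argument (``the equality-and-balancing block fixes a well-defined common value'') collapses. One could try to salvage the approach by aiming only for the one-sided implication actually needed for projection (roughly: every negative copy false forces every positive copy false), but that requires a genuinely new design and a new correctness argument, neither of which you supply.

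The paper sidesteps this difficulty entirely by reducing from \textsc{3-Sat-(2,2)} rather than \textsc{Monotone 3-Sat-(3,3)}. Each source variable $x_i$ is replaced by only two fresh positive variables $x_{i,1},x_{i,2}$ (standing for $\bar{x_i}$ and $x_i$), so the link needed is \emph{non}-equality rather than equality. That is achieved directly with a single instance of $\mathcal{D}$ (forcing $x_{i,1}\vee x_{i,2}$) together with one negative clause $\{\overline{x_{i,1}},\overline{x_{i,2}},\overline{y_i}\}$ where $y_i$ is forced true by $\mathcal{F}(y_i)$ (forcing $\overline{x_{i,1}}\vee\overline{x_{i,2}}$). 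Every variable then meets the $(5,1)$ count except the $y_i$, whose remaining two unnegated slots are filled by a short list of always-satisfied positive clauses over the $y_i$'s.
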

\begin{proof}
By reduction from \textsc{3-Sat-(2,2)}, for which NP-hardness was established by Berman et al.~\cite[Theorem 1]{Berman2003}. Given an instance of the latter with a set $V$ of variables and a set $C$ of clauses over $V$, let $n:=|V|$. For each variable $x_i \in V$, we introduce two new variables $x_{i,1}$, $x_{i,2}$ and replace the two negated appearances with $x_{i, 1}$ and the two unnegated appearances with $x_{i, 2}$, respectively. Then, we remove all negations and introduce the following clauses for $i \in \{1,2, \ldots, n\}$:
\[
\mathcal{D}(x_{i, 1}, x_{i, 1}, x_{i, 1}, x_{i, 2}, x_{i, 2}, x_{i, 2}) \cup \{\{\overline{x_{i,1}}, \overline{x_{i,2}} , \overline{y_i}\}\} \cup \mathcal{F}(y_i),
\]   
where $y_i$ is a new variable. Since these clauses can be satisfied if and only if we assign different truth values to $x_{i,1}$ and $x_{i,2}$, the resulting formula is a yes-instance if and only if the original formula is a yes-instance. By construction, all variables except $y_i$, $1 \leq i \leq n$, appear exactly 5 times unnegated and once negated. Recall that $4n =  3|C|$ holds in the given instance of \textsc{3-Sat-(2,2)}, and hence the number of variables $n$ is a multiple of 3. Thus, we have $n = 3q$ variables $y_i$ that each appears exactly three times unnegated and exactly once negated. We increase the number of unnegated appearances to the desired number 5 by introducing the clauses (for $q > 1$)
\[
\bigcup_{i = 1}^{q}\{\{y_{3i-2}, y_{3i-1}, y_{3i}\}\} \cup \bigcup_{i = 1}^{q-1} \{\{y_{3i-1}, y_{3i}, y_{3i + 1}\}\} \cup \{y_{n-1}, y_n, y_1\}. 
\]
Since the variable $y_i$ have the forced truth value true, these clauses have no effect on the constructed formula. Moreover, for $q > 1$ these clauses are pairwise distinct (i.e., there are no identical clauses). For $q = 1$, we can use
\[
\mathcal{D}((y_1, y_1, y_2, y_2, y_3, y_3)).
\]
Note that we can handle the case $q > 1$ in the same way, but it would result in a larger construction since each instance of $\mathcal{D}(X)$ introduces 20 clauses and 9 variables. We conclude by remarking that the transformation is polynomial. 
\end{proof}

We can construct on unsatisfiable instance of \textsc{Monotone 3-Sat-(5,1)} as follows:
\[
\mathcal{F}(y_1) \cup \mathcal{F}(y_2) \cup \mathcal{F}(y_3) \cup \{\{\overline{y_1}, \overline{y_2}, \overline{y_3}\}\} \cup \mathcal{D}(y_1, y_1, y_2, y_2, y_3, y_3).
\]
Hence, we get the following proposition. 

\begin{propo}
There exists an unsatisfiable instance of \textsc{Monotone 3-Sat-(5,1)} with 204 clauses and 102 variables. 
\end{propo}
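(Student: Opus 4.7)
The plan is to verify three things about the constructed instance
\[
\mathcal{F}(y_1) \cup \mathcal{F}(y_2) \cup \mathcal{F}(y_3) \cup \{\{\overline{y_1}, \overline{y_2}, \overline{y_3}\}\} \cup \mathcal{D}(y_1, y_1, y_2, y_2, y_3, y_3):
\]
unsatisfiability, the $(5,1)$ variable-appearance profile, and the announced clause/variable counts.

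For unsatisfiability, I would simply invoke the Remark preceding Theorem~\ref{thm:Mon3Sat-(1,5)}, which states that the set of clauses $\mathcal{F}(y_i)$ forces $y_i$ to true in any satisfying assignment. Hence any truth assignment $\beta$ that satisfies the whole instance must have $\beta(y_1) = \beta(y_2) = \beta(y_3) = T$. But then the single clause $\{\overline{y_1}, \overline{y_2}, \overline{y_3}\}$ is left unsatisfied, a contradiction. Note that the trailing $\mathcal{D}(y_1, y_1, y_2, y_2, y_3, y_3)$ is not needed for the unsatisfiability argument; its sole purpose is to contribute the missing unnegated appearances of the $y_i$.

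For the appearance profile, every variable other than $y_1, y_2, y_3$ is a fresh auxiliary variable of exactly one gadget instance ($\mathcal{F}(y_i)$ or the trailing $\mathcal{D}$), and by the Remark together with Lemma~\ref{lem:D(X)} such variables already have exactly five unnegated and one negated appearance within their own gadget. For each $y_i$, the Remark guarantees three unnegated appearances inside $\mathcal{F}(y_i)$; the pair $(y_i, y_i)$ in the tuple argument of $\mathcal{D}(y_1, y_1, y_2, y_2, y_3, y_3)$ supplies two further unnegated appearances (these occur only in clauses $15$--$20$ of that $\mathcal{D}$); and the extra clause $\{\overline{y_1}, \overline{y_2}, \overline{y_3}\}$ contributes the single negated appearance. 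This gives exactly five unnegated and one negated appearance for each $y_i$.

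For the counts, I would use that $\mathcal{D}(X)$ contributes $20$ clauses and $9$ new auxiliary variables, and that $\mathcal{F}(y)$ consists of three disjoint copies of $\mathcal{D}$ plus one extra negative clause on the three fresh $u_j$'s, giving $3\cdot 20+1 = 61$ clauses and $3\cdot 9+3 = 30$ new variables per $\mathcal{F}$. Summing, the three $\mathcal{F}(y_i)$ together with the single additional clause and the final $\mathcal{D}$ contribute $3\cdot 61 + 1 + 20 = 204$ clauses, while the variables amount to $3\cdot 30 + 3 + 9 = 102$ (the three $y_i$ plus all fresh auxiliaries from the four gadget instances). Since all gadgets are built over pairwise disjoint sets of fresh variables, no double-counting occurs. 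No step here is really an obstacle; the only thing that needs care is the bookkeeping of $y_i$'s appearances, since $y_i$ is the one variable that lives in more than one gadget, and the construction of the trailing $\mathcal{D}$ is precisely tuned to top up $y_i$'s unnegated appearances from three to five.
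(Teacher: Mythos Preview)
Your proposal is correct and follows exactly the paper's approach: the paper simply exhibits the construction $\mathcal{F}(y_1) \cup \mathcal{F}(y_2) \cup \mathcal{F}(y_3) \cup \{\{\overline{y_1}, \overline{y_2}, \overline{y_3}\}\} \cup \mathcal{D}(y_1, y_1, y_2, y_2, y_3, y_3)$ without further argument, and you have spelled out the verification of unsatisfiability, the $(5,1)$ profile, and the clause/variable counts in exactly the intended way.
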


Next, with the lemma below we show that Theorem~\ref{thm:Mon3Sat-(1,5)} implies hardness of \textsc{Monotone 3-Sat-($k,1$)}, for any choice of $k\geq 5$.

\begin{lem}
If \textsc{Monotone 3-Sat-($k,1$)} is NP-hard for a fixed positive integer $k$, then so is \textsc{Monotone 3-Sat-($k+1,1$)}. 
\end{lem}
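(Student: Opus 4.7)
The plan is to give a polynomial reduction from \textsc{Monotone 3-Sat-($k,1$)} to \textsc{Monotone 3-Sat-($k+1,1$)} by duplicating the input instance and carefully padding the auxiliary variable profiles. Given an input $(V, C)$ with $V = \{x_1, \ldots, x_n\}$, first take $k+1$ disjoint copies $(V^i, C^i)$ for $i \in \{1, \ldots, k+1\}$, so that each copied variable $x_j^i$ starts out with appearance profile $(k, 1)$. To lift each $x_j^i$ to profile $(k+1, 1)$, introduce for every $j \in \{1, \ldots, n\}$ two fresh auxiliary variables $y_j, z_j$ and add the positive padding clauses $\{x_j^i, y_j, z_j\}$ for $i = 1, \ldots, k+1$. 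These clauses can be satisfied uniformly by the setting $y_j = T$ regardless of how the $x_j^i$ are assigned.

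At this stage each $y_j$ and each $z_j$ has appearance profile $(k+1, 0)$, so I would patch them by introducing a third auxiliary variable $t_j$ for each $j$ together with a single negative clause $\{\overline{y_j}, \overline{z_j}, \overline{t_j}\}$, satisfied by taking $z_j = F$ (compatibly with $y_j = T$). This lifts $y_j$ and $z_j$ to $(k+1, 1)$ while leaving each $t_j$ with profile $(0, 1)$. The final step is to raise each $t_j$ to $(k+1, 1)$ by adding $(k+1)n/3$ positive 3-clauses among the $t_j$'s that realise a $(k+1)$-regular 3-uniform hypergraph on the vertex set $\{t_1, \ldots, t_n\}$. Crucially, the required divisibility $3 \mid (k+1)n$ comes for free from the source constraint $3m = (k+1)n$. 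All of these positive clauses are then simultaneously satisfied by setting $t_j = T$ for every $j$, which is consistent with the previous assignments of the auxiliary variables.

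The main obstacle is ensuring that such a $(k+1)$-regular 3-uniform hypergraph with pairwise distinct hyperedges actually exists on $n$ vertices. For sufficiently large $n$ this is standard, and the small edge cases can be handled by padding the input with a constant number of additional dummy variables together with a small sub-gadget that keeps them at profile $(k+1, 1)$. Once this is in place, correctness is immediate: if $\beta$ satisfies $C$, then extending $\beta$ independently to each of the $k+1$ copies and setting $y_j = T, z_j = F, t_j = T$ for every $j$ satisfies the whole construction; conversely, any satisfying assignment of the new instance restricts on $V^1$ to a satisfying assignment of $C^1 \cong C$. Since $k$ is a fixed constant, the total number of added variables and clauses is $O(n)$, so the reduction is polynomial.
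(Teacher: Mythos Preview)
Your overall strategy---take $k+1$ disjoint copies of the instance and add the positive padding clauses $\{x_j^i, y_j, z_j\}$ for all $i,j$---is exactly what the paper does. The divergence is only in how the auxiliary variables pick up their single negated appearance.

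You introduce a third family $t_j$, add the clauses $\{\overline{y_j}, \overline{z_j}, \overline{t_j}\}$, and then have to manufacture $k+1$ positive occurrences for each $t_j$ via a $(k+1)$-regular $3$-uniform \emph{simple} hypergraph on $\{t_1,\ldots,t_n\}$. This is fine for large $n$, but the ``padding with dummy variables and a small sub-gadget'' step for small $n$ is not worked out: you would need a concrete gadget that adds fresh variables already at profile $(k+1,1)$ without disturbing existing profiles, and you have not exhibited one. The gap is fillable, but it is a gap.

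The paper sidesteps all of this with one observation you did not exploit: in any instance of \textsc{Monotone 3-Sat-$(k,1)$} the number of variables $n$ is automatically divisible by $3$, because each variable appears exactly once negated and every negative clause contains three distinct variables. Writing $n=3q$, the paper simply adds the negative clauses
\[
\{\overline{y_{3i-2}},\overline{y_{3i-1}},\overline{y_{3i}}\}\quad\text{and}\quad\{\overline{z_{3i-2}},\overline{z_{3i-1}},\overline{z_{3i}}\}\qquad(1\le i\le q),
\]
giving every $y_j$ and every $z_j$ its single negated appearance directly---no third family, no hypergraph construction, no edge cases. Satisfaction of $C_{\mathrm{inc}}$ is witnessed by alternating truth values on the $y_j$'s and $z_j$'s according to the parity of $j$. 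So your argument is essentially correct but takes a detour that the paper's direct use of $3\mid n$ avoids entirely.
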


\begin{proof}
We present a polynomial reduction from \textsc{Monotone 3-Sat-($k,1$)} to \textsc{Monotone 3-Sat-($k+1,1$)}. Given an instance of \textsc{Monotone 3-Sat-($k,1$)} with a set of clauses $C = \{c_1, \ldots, c_m\}$ over variables $V = \{x_1, \ldots, x_n\}$, we make $k + 1$ copies of $C$ and $V$, respectively, such that each copy has only new variables that are not shared with other copies: For $i \in \{1, \ldots, k+1\}$ let
\[
C_i = \{c_1^i, \ldots, c_m^i\}, \quad V_i = \{x_1^i, \ldots, x_n^i\}.
\]  
Note that the number of variables $n$ is divisible by 3 since each variable appears exactly once negated and all negative 3-clauses contain exactly three negated variables. Hence, $n = 3q$ for some positive integer $q$. We introduce the clauses 
\[
C_\text{inc} = \bigcup_{i = 1}^{k+1}\bigcup_{j = 1}^n \{\{x_j^i, y_j, z_j\}\} \cup \bigcup_{i = 1}^q \{\{\overline{y_{3i-2}}, \overline{y_{3i-1}}, \overline{y_{3i}}\}, \{\overline{z_{3i-2}}, \overline{z_{3i-1}}, \overline{z_{3i}}\}\}, 
\]
where $y_j, z_j$ with $j \in \{1, \ldots, n\}$ are new variables. Now, consider the instance
\[
C' = C_\text{inc} \cup \bigcup_{i=1}^{k+1}C_i, \quad V' = \{y_j, z_j \mid 1 \leq j \leq n\} \cup \bigcup_{i=1}^{k+1}V_i.   
\]
Observe that each variable in $V'$ appears exactly once negated and $k+1$ times unnegated in $C'$. Further, each clause in $C'$ contains exactly three distinct literals, either all of them or none of them negated. Hence, we constructed an instance of \textsc{Monotone 3-Sat-($k+1,1$)}. Now, by setting $y_j$ true and $z_j$ false if $j$ is even; and $y_j$ false and $z_j$ true if $j$ is odd, respectively, we satisfy all clauses in $C_\text{inc}$. Thus, $C'$ is satisfiable if and only if $\bigcup_{i=1}^{k+1}C_i$ is satisfiable. Since the latter set of clauses is a union of disjoint copies of $C$, we conclude: $C'$ is satisfiable if and only if $C$ is satisfiable. Finally, we have
\[
|C'| = (k+1)(m + n) + 2q, \quad |V'| = (k + 1)n + 2n = (k+3)n. 
\]
Therefore, since $k$ is a fixed positive integer, the transformation is polynomial.  
\end{proof}

\begin{cor}
\textsc{Monotone 3-Sat-($k,1$)} is NP-complete for all $k \geq 5$. 
\end{cor}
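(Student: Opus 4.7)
The plan is a short induction on $k$, with $k=5$ as the base case. For the base case, Theorem~\ref{thm:Mon3Sat-(1,5)} has already established NP-hardness of \textsc{Monotone 3-Sat-(5,1)}, and membership in NP is immediate since any satisfying truth assignment for $V$ is a polynomial-size certificate that can be verified by evaluating all clauses in linear time; thus \textsc{Monotone 3-Sat-(5,1)} is NP-complete.

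For the inductive step, I would assume that \textsc{Monotone 3-Sat-($k,1$)} is NP-complete for some fixed $k \geq 5$ and invoke the immediately preceding lemma, which provides a polynomial-time many-one reduction from \textsc{Monotone 3-Sat-($k,1$)} to \textsc{Monotone 3-Sat-($k+1,1$)}. Combined with NP-membership of \textsc{Monotone 3-Sat-($k+1,1$)} (again by the standard certificate argument), this yields NP-completeness for $k+1$, closing the induction.

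Since every statement needed for this argument has already been proved, there is no genuine obstacle here: the base case is Theorem~\ref{thm:Mon3Sat-(1,5)}, and the step is the lemma just stated. The corollary is simply a consolidation of these two results via an elementary induction on $k$.
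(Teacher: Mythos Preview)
Your proposal is correct and mirrors exactly what the paper intends: the corollary is an immediate consequence of Theorem~\ref{thm:Mon3Sat-(1,5)} (base case $k=5$) together with the preceding lemma (inductive step), with NP-membership being obvious.
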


\subsubsection{On \textsc{Monotone 3-Sat-($3,1$)} and \textsc{Monotone 3-Sat-($4,1$)}}

We now discuss some properties of \textsc{Monotone 3-Sat-($k,1$)}, and conclude the section with corollaries stating that for certain ``small'' numbers of variable appearances each instance of \textsc{Monotone 3-Sat-($3,1$)} and \textsc{Monotone 3-Sat-($4,1$)} is satisfiable.

Let $V = \{x_1, x_2, \ldots, x_n\}$ be a set of variables.  First, the number of variables $n = |V|$ is divisible by 3 since otherwise there is a negative clause containing less than three variables. Next, we expand on ideas presented in Lemma~\ref{lem:D(X)}. Recall that we can restrict our attention to truth assignments that set exactly one literal in each negative clause to true (i.e., the corresponding variable to false), since we can simply modify any satisfying truth assignment to meet that requirement. Moreover, we can assume that (after relabeling) the negative clauses are
\[
\{\overline{x_1}, \overline{x_2}, \overline{x_3}\}, \{\overline{x_4}, \overline{x_5}, \overline{x_6}\}  \ldots, \{\overline{x_{n-2}}, \overline{x_{n-1}}, \overline{x_n}\}. 
\] 
Hence, we can represent any truth assignment of interest by a tuple  
\[
(x_{i_1}, \ldots, x_{i_{\frac{n}{3}}}) \in \{x_1, x_2, x_3\} \times \{x_4, x_5, x_6\} \times \ldots \times \{x_{n-2}, x_{n-1}, x_n\},
\]
such that the corresponding truth assignment $\beta \colon V \rightarrow \{T, F\}$ is defined as $\beta(x_j) = F$ if and only if $j \in \{i_1, i_2,\ldots, i_{\frac{n}{3}}\}$. It is convenient to define 
\[
\mathcal{M}_n = \{\{x_{i_1}, \ldots, x_{i_{\frac{n}{3}}}\} \mid (x_{i_1}, \ldots, x_{i_{\frac{n}{3}}}) \in \{x_1, x_2, x_3\} \times \ldots \times \{x_{n-2}, x_{n-1}, x_n\}\}
\] 
which represents the truth assignments that set exactly one literal in each negative clause to true (in an instance of \textsc{Monotone 3-Sat-($k,1$)} with $n$ variables).

Next, we define the family $\mathcal{U}_n$ of sets which is made up of 3-clauses that correspond to subsets of elements of $\mathcal{M}_n$: 
\[
\mathcal{U}_n = \bigcup_{X \in \mathcal{M}_n} \{\{S \subseteq X \mid |S| = 3\}\}
\] 
Intuitively, each element of $\mathcal{U}_n$ is the set of all 3-clauses that are not satisfied by the corresponding truth assignment (with respect to the same set of variables). Recall that every $X \in \mathcal{M}_n$ corresponds to a truth assignment.

We are now ready to state the following lemma.

\begin{lem}\label{thm:sat_nec_cond} 
Let $V = \{x_1, x_2, \ldots, x_n\}$ be a set of variables. An instance of \textsc{Monotone 3-Sat} with a collection of clauses \[
C = \{\{\overline{x_1}, \overline{x_2}, \overline{x_3}\}, \{\overline{x_4}, \overline{x_5}, \overline{x_6}\}  \ldots, \{\overline{x_{n-2}}, \overline{x_{n-1}}, \overline{x_n}\}\} \cup C^+,  
\]       
where $C^+$ is a collection of positive 3-clauses, is satisfiable if and only if there exists a $U \in \mathcal{U}_n$ such that
$
C^+ \cap U = \emptyset. 
$
\end{lem}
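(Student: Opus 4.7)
The plan is a direct unpacking of the setup that precedes the statement. The key dictionary is: each $X\in\mathcal{M}_n$ encodes a truth assignment $\beta_X$ (set $\beta_X(x_j)=F$ iff $x_j\in X$) that satisfies every negative clause, and the corresponding $U_X=\{S\subseteq X\mid |S|=3\}\in\mathcal{U}_n$ is precisely the set of positive 3-clauses \emph{over $V$} that would \emph{fail} under $\beta_X$. So the lemma is the translation of satisfiability into ``some such transversal avoids the forbidden triples in $C^+$.''

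For the ``only if'' direction, I would start from an arbitrary satisfying assignment $\beta$ of $C$ and reduce to the canonical form. By the observation made just before the lemma (which uses that each variable appears exactly once negated, so flipping a variable from $F$ to $T$ only affects one negative clause), if some negative clause has more than one $F$-variable, we can flip all but one of them to $T$ without losing satisfaction: positive clauses can only gain true literals, and the chosen negative clause still has one $F$-variable. Iterating yields a satisfying $\beta'$ with exactly one $F$-variable per group $\{x_{3k-2},x_{3k-1},x_{3k}\}$. Then $X:=\{x_j\mid \beta'(x_j)=F\}\in\mathcal{M}_n$, and because $\beta'$ satisfies every $c\in C^+$, no such $c$ can be entirely contained in $X$; equivalently, $c\notin U_X$. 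Hence $C^+\cap U_X=\emptyset$, and $U_X\in\mathcal{U}_n$ is the required witness.

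For the ``if'' direction, given $U\in\mathcal{U}_n$ with $C^+\cap U=\emptyset$, pick the $X\in\mathcal{M}_n$ with $U=U_X$ and define $\beta_X$ as above. Each negative clause $\{\overline{x_{3k-2}},\overline{x_{3k-1}},\overline{x_{3k}}\}$ contains exactly the single variable of $X$ lying in that group, which is set to $F$, so the corresponding negated literal evaluates to $T$ and the clause is satisfied. For any positive clause $c=\{a,b,c'\}\in C^+$, the hypothesis $c\notin U_X$ means $c\not\subseteq X$, so at least one variable of $c$ lies outside $X$ and is therefore set to $T$ by $\beta_X$, satisfying $c$. Thus $\beta_X$ satisfies all of $C$.

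The only non-routine point is the WLOG reduction in the forward direction, and this is already stated in the paragraph immediately preceding the lemma, so I would simply cite it rather than reprove it. Everything else is bookkeeping between the set-theoretic description of $\mathcal{M}_n$ and $\mathcal{U}_n$ and the truth-assignment semantics.
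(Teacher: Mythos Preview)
Your proposal is correct and follows essentially the same approach as the paper's proof. The only cosmetic differences are that the paper treats the ``if'' direction first and argues the ``only if'' direction by contrapositive (assuming $C^+\cap U\neq\emptyset$ for all $U\in\mathcal{U}_n$ and concluding unsatisfiability), whereas you argue it directly; both rely on the same canonical-form reduction stated just before the lemma.
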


\begin{proof}
First, suppose there is a $U \in \mathcal{U}_n$ such that $C^+ \cap U = \emptyset$. Then, we set all variables in $X \in \mathcal{M}_n$ corresponding to $U = \{S \subseteq X \mid |S| = 3\}$ false and the other variables in $V\setminus X$ true, respectively. By construction, all negative clauses of $C$ are satisfied if we set all variables false for any $X \in \mathcal{M}_n$. Now, assume towards a contradiction that a clause $c \in C^+$ is not satisfied. Then $c \subseteq X$ with $|c| = 3$, and thus $\{c\}\subseteq C^+ \cap U$, a contradiction. Hence, all positive clauses are satisfied. Second, let $C^+ \cap U \neq \emptyset$ for all $U \in \mathcal{U}_n$. By construction, the truth assignment corresponding to $U \in \mathcal{U}_n$ does not satisfy any clause in $C^+ \cap U$. Since every satisfying truth assignment can be modified such that exactly one literal in each negative clause is set true, we conclude that no satisfying assignment for $C$ exists. 
\end{proof}

\begin{thm}
Let $V = \{x_1, x_2, \ldots, x_n\}$ be a set of variables. An instance of \textsc{Monotone 3-Sat} with a collection of clauses \[
C = \{\{\overline{x_1}, \overline{x_2}, \overline{x_3}\}, \{\overline{x_4}, \overline{x_5}, \overline{x_6}\}  \ldots, \{\overline{x_{n-2}}, \overline{x_{n-1}}, \overline{x_n}\}\} \cup C^+,  
\]       
where $C^+$ is a collection of positive 3-clauses, is satisfiable if each variable appears unnegated less than $\frac{81}{n}$ times.  
\end{thm}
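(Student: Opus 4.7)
The plan is to invoke Lemma~\ref{thm:sat_nec_cond} and then establish the existence of a good $U\in\mathcal{U}_n$ by a counting argument on $\mathcal{M}_n$. By that lemma, it suffices to exhibit some $X\in\mathcal{M}_n$ (equivalently, the induced $U=\{S\subseteq X\mid |S|=3\}\in\mathcal{U}_n$) such that no positive clause $c\in C^+$ satisfies $c\subseteq X$. I will first handle the degenerate cases $n\in\{3,6\}$ separately: here $|X|<3$, so $U=\emptyset$ and any $X\in\mathcal{M}_n$ trivially works.

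For the main case $n\geq 9$, I would translate the degree hypothesis into a bound on $|C^+|$. Since each positive clause contributes exactly three unnegated appearances, $3|C^+|=\sum_{i=1}^n a^+(x_i)$, and the assumption $a^+(x_i)<81/n$ for every $i$ gives $3|C^+|<81$, i.e., $|C^+|\leq 26$. Next I would count, for a fixed $c\in C^+$, how many $X\in\mathcal{M}_n$ contain $c$: if the three variables of $c$ lie in three distinct negative clauses, there are exactly $3^{n/3-3}$ such $X$ (free choice of one variable from each of the remaining $n/3-3$ negative triples); otherwise no $X$ contains $c$ at all. Hence $c$ rules out at most $3^{n/3-3}$ elements of $\mathcal{M}_n$.

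The conclusion follows by a union bound: the number of ``bad'' $X$ (those containing at least one positive clause as a subset) is at most
\[
|C^+|\cdot 3^{n/3-3}\;\leq\;26\cdot 3^{n/3-3}\;<\;27\cdot 3^{n/3-3}\;=\;3^{n/3}\;=\;|\mathcal{M}_n|.
\]
Therefore some $X\in\mathcal{M}_n$ is not bad, its associated $U$ satisfies $C^+\cap U=\emptyset$, and Lemma~\ref{thm:sat_nec_cond} yields satisfiability of $C$. The only mildly subtle step is the overcounting remark that clauses whose literals straddle a single negative triple contribute $0$ rather than $3^{n/3-3}$ bad assignments, which only strengthens the bound; apart from that, the argument is a clean double-counting combined with the numerical miracle that the hypothesis $a^+(x_i)<81/n$ is precisely what forces $|C^+|<27=3^3$, matching the factor $3^{-3}$ lost when one positive clause fixes three coordinates of the tuple defining $X$.
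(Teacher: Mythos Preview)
Your argument is correct and essentially identical to the paper's: both rest on the counting observation that each positive 3-clause lies in at most $3^{n/3-3}$ of the $3^{n/3}$ sets in $\mathcal{M}_n$ (equivalently $\mathcal{U}_n$), so fewer than $27$ positive clauses cannot hit them all, and the hypothesis $a^+(x_i)<81/n$ forces $|C^+|<27$. You are in fact slightly more careful than the paper, explicitly treating the degenerate cases $n\in\{3,6\}$ and noting that a clause with two variables in the same negative triple rules out no $X$ at all.
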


\begin{proof}
Let $n = 3k$ (this is not a restriction since $n$ must be a multiple of 3 such that all negative clauses contain exactly three literals). The family $\mathcal{U}_n$ contains exactly $|\mathcal{M}_n| = 3^k$ sets of 3-clauses. Further, any 3-clause contained in some $U \in \mathcal{U}_n$ appears in $3^{k-3}$ elements of $\mathcal{U}_n$ (since three elements in the Cartesian product used in the definition of $\mathcal{M}_n$ are fixed and there are $3^{k-3}$ ways to choose the remaining elements). Now, an unsatisfying instance contains at least one clause of each $U \in \mathcal{U}_n$ (Lemma~\ref{thm:sat_nec_cond}). Since one clause covers exactly $3^{k-3}$ elements of $\mathcal{U}_n$, we need at least $\frac{3^k}{3^{k-3}} = 3^3 = 27$ clauses to cover all elements of $\mathcal{U}_n$. Note that some elements may be covered more than once (hence, we may need more 3-clauses in case this is unavoidable). As we need 27 3-clauses, we have $3 \cdot 27 = 81$ unnegated variable appearances. Thus, at least one variable appears unnegated at least $\frac{81}{n}$ times. 
\end{proof}

The above theorem allows us to derive several corollaries, most directly Corollaries~\ref{cor:Mon3Sat(1,4)} and \ref{cor:Mon3Sat(1,3)} on \textsc{Monotone 3-Sat-($4,1$)} and \textsc{Monotone 3-Sat-($3,1$)} respectively. 

\begin{cor}
Let $V = \{x_1, x_2, \ldots, x_n\}$ be a set of variables. An instance of \textsc{Monotone 3-Sat} with a collection of clauses \[
C = \{\{\overline{x_1}, \overline{x_2}, \overline{x_3}\}, \{\overline{x_4}, \overline{x_5}, \overline{x_6}\}  \ldots, \{\overline{x_{n-2}}, \overline{x_{n-1}}, \overline{x_n}\}\} \cup C^+,  
\]      
where $C^+$ is a minimum hitting set\footnote{See, e.g., Garey and Johnson~\cite[p.\,222]{garey79} for a formal definition of the hitting set problem.} for $\mathcal{U}_n$ is unsatisfiable. Here, a minimum hitting set is a set $C^+$ of positive 3-clauses of smallest size such that $C^+ \cap U \neq \emptyset$ for each $U \in \mathcal{U}_n$.
Further, every instance of \textsc{Monotone 3-Sat}, where each variable appears negated at most once, and that has at most $|C^+|-1$ clauses is satisfiable. 
\end{cor}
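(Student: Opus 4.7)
The two assertions are essentially the two directions of Lemma~\ref{thm:sat_nec_cond}. For the first, I would note that by definition any minimum hitting set $C^+$ for $\mathcal{U}_n$ satisfies $C^+ \cap U \neq \emptyset$ for every $U \in \mathcal{U}_n$; Lemma~\ref{thm:sat_nec_cond} then yields unsatisfiability of $C$ immediately.

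For the second assertion, I would take an arbitrary Monotone 3-Sat instance with at most $|C^+|-1$ clauses in which each variable appears negated at most once, and first pre-process it into the shape required by Lemma~\ref{thm:sat_nec_cond}. Any variable that does not appear negated I set to $T$, which satisfies every positive 3-clause that contains it; after removing those satisfied clauses, every remaining variable is negated exactly once, so the remaining negative 3-clauses partition the surviving variables into triples, and after relabeling the pre-processed instance takes precisely the syntactic form considered in Lemma~\ref{thm:sat_nec_cond}.

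I would then invoke Lemma~\ref{thm:sat_nec_cond} contrapositively. The residual positive 3-clauses $\tilde{C}^+$ number at most $|C^+|-1$ and thus cannot form a hitting set for $\mathcal{U}_n$; consequently some $U \in \mathcal{U}_n$ is disjoint from $\tilde{C}^+$, and the lemma supplies a satisfying assignment for the pre-processed instance. Combined with the $T$-assignment on the removed variables, this yields a satisfying assignment for the original instance.

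The main obstacle I foresee is purely a bookkeeping one: the pre-processed instance lives on the subset $V'_- \subseteq V$ of variables that actually appear negated, of size $n' = |V'_-|$, and strictly speaking the relevant family is $\mathcal{U}_{n'}$ rather than $\mathcal{U}_n$. Under the natural reading of the corollary in which the $n$ of the second part coincides with that of the first (so $n' = n$), the argument above applies directly; otherwise I would additionally compare the minimum hitting set sizes of $\mathcal{U}_n$ and $\mathcal{U}_{n'}$ via a padding argument, observing that extending the instance by fresh negative triples and unused positives does not increase the minimum hitting set size, which again allows Lemma~\ref{thm:sat_nec_cond} to be invoked at the common index.
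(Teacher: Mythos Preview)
Your approach is correct and is essentially the intended derivation: the paper states this corollary without proof, relying on it as a direct consequence of Lemma~\ref{thm:sat_nec_cond}, and your two steps (immediate application of the lemma for the first assertion; preprocessing plus minimality of the hitting set for the second) are precisely how one fills in those details.

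One remark on the bookkeeping you flag. Under the natural reading of the corollary the instance in the second assertion is over the same variable set $V$ with $|V|=n$, so your preprocessing can only shrink the set of relevant variables, i.e.\ $n'\le n$. In that case your padding argument is exactly right and deserves to be stated as you do: adjoin $(n-n')/3$ fresh negative triples to the reduced instance, keep $\tilde C^{+}$ unchanged, and observe that since $|\tilde C^{+}|<|C^{+}|$ and $|C^{+}|$ is the minimum size of a hitting set for $\mathcal U_n$, the set $\tilde C^{+}$ cannot hit every $U\in\mathcal U_n$; Lemma~\ref{thm:sat_nec_cond} then yields a satisfying assignment for the padded instance, whose restriction to the first $n'$ variables satisfies the reduced instance. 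Combined with the $T$-values on the removed variables this completes the argument. If one instead reads the second assertion as allowing instances over an arbitrary (possibly larger) variable set, a bit more is needed: one should observe, via the lower bound in the preceding theorem together with the obvious size-$27$ hitting set supported on any three negative triples, that the minimum hitting-set size of $\mathcal U_m$ is in fact $27$ for every $m\ge 9$, whence $|C^{+}|$ does not depend on $n$ and your argument applies uniformly.
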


\begin{cor}
Let $V = \{x_1, x_2, \ldots, x_n\}$ be a set of variables. An instance of \textsc{Monotone 3-Sat} with a collection of clauses \[
C = \{\{\overline{x_1}, \overline{x_2}, \overline{x_3}\}, \{\overline{x_4}, \overline{x_5}, \overline{x_6}\}  \ldots, \{\overline{x_{n-2}}, \overline{x_{n-1}}, \overline{x_n}\}\} \cup C^+,  
\]       
where $C^+$ is a collection of positive 3-clauses and $|C^+| < 27$ is satisfiable. 
\end{cor}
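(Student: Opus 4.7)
The plan is to deduce this corollary directly from Lemma~\ref{thm:sat_nec_cond} together with the counting argument used in the preceding theorem. By Lemma~\ref{thm:sat_nec_cond} the instance is satisfiable if and only if there exists some $U \in \mathcal{U}_n$ with $C^+ \cap U = \emptyset$; equivalently, the instance is unsatisfiable if and only if $C^+$ is a hitting set for the family $\mathcal{U}_n$. So the task reduces to showing that no collection $C^+$ of fewer than $27$ positive $3$-clauses can form a hitting set for $\mathcal{U}_n$.

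Writing $n = 3k$, one first observes that $|\mathcal{U}_n| = |\mathcal{M}_n| = 3^k$. Next, for any given positive $3$-clause $c$ over $V$, count how many elements of $\mathcal{U}_n$ it hits: if $c$ is a subset of some $X \in \mathcal{M}_n$, then the three variables of $c$ occupy three distinct blocks of the partition $\{x_1,x_2,x_3\}, \{x_4,x_5,x_6\},\ldots,\{x_{n-2},x_{n-1},x_n\}$, and the other $k-3$ blocks can be chosen freely, giving $3^{k-3}$ choices. Hence each positive $3$-clause covers exactly $3^{k-3}$ elements of $\mathcal{U}_n$ (and a clause not of this form covers none).

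Therefore a hitting set for $\mathcal{U}_n$ must have size at least
\[
\frac{|\mathcal{U}_n|}{3^{k-3}} \;=\; \frac{3^k}{3^{k-3}} \;=\; 3^3 \;=\; 27.
\]
Consequently, if $|C^+| < 27$, then $C^+$ cannot hit every member of $\mathcal{U}_n$; by Lemma~\ref{thm:sat_nec_cond} the instance admits a satisfying assignment, namely the one corresponding to any $U \in \mathcal{U}_n$ with $C^+ \cap U = \emptyset$.

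There is no real obstacle here, since both ingredients have already been set up: the only minor point to verify carefully is the counting that each positive $3$-clause covers exactly $3^{k-3}$ sets in $\mathcal{U}_n$, which is essentially the same observation made in the proof of the preceding theorem. The corollary is therefore an immediate packaging of that theorem together with Lemma~\ref{thm:sat_nec_cond}.
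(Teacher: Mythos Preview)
Your argument is correct and is exactly the route the paper intends: the corollary is stated without proof because it is an immediate consequence of Lemma~\ref{thm:sat_nec_cond} together with the covering count $3^{k}/3^{k-3}=27$ carried out in the proof of the preceding theorem. Your write-up simply makes that implication explicit, including the observation that clauses whose three variables do not lie in three distinct blocks hit no member of $\mathcal{U}_n$, so $27$ is a lower bound on the size of any hitting set.
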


\begin{cor}\label{cor:Mon3Sat(1,4)}
Each instance of \textsc{Monotone 3-Sat-($4,1$)} with less than 21 variables is satisfiable. 
\end{cor}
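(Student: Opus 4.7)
The plan is to apply the preceding theorem directly. First I would observe that in any instance of \textsc{Monotone 3-Sat-($4,1$)} the negated appearances of the variables in $V = \{x_1,\ldots,x_n\}$ partition $V$ into triples: since each variable appears negated exactly once and every negative clause contains exactly three distinct negated variables, the negative clauses are pairwise disjoint (as variable sets) and cover $V$. In particular $n$ is divisible by $3$, and after relabeling the variables the collection of negative clauses takes precisely the canonical form
\[
\{\{\overline{x_1},\overline{x_2},\overline{x_3}\},\{\overline{x_4},\overline{x_5},\overline{x_6}\},\ldots,\{\overline{x_{n-2}},\overline{x_{n-1}},\overline{x_n}\}\}
\]
assumed in the preceding theorem. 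Thus the instance fits the hypothesis of that theorem with the set $C^+$ of positive $3$-clauses being the remaining clauses of the formula.

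Next I would invoke the theorem itself: the instance is satisfiable whenever each variable appears unnegated strictly fewer than $\tfrac{81}{n}$ times. In \textsc{Monotone 3-Sat-($4,1$)} each variable appears unnegated exactly four times, so it is enough to verify the inequality $4 < \tfrac{81}{n}$, i.e.\ $n < \tfrac{81}{4} = 20.25$. Hence the hypothesis is satisfied for every admissible $n$ with $n < 21$.

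As a sanity check, the admissible values of $n$ below $21$ are $n \in \{3,6,9,12,15,18\}$ (those divisible by $3$), and for the extremal case $n = 18$ one has $\tfrac{81}{18} = 4.5 > 4$, with strict inequality to spare. The corollary therefore follows, and there is no real obstacle — the only thing to make sure of is that the canonical form of negative clauses in the theorem is actually forced by the $(4,1)$-restriction, which is exactly the partition observation above.
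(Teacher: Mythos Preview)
Your proposal is correct and is exactly the intended argument: the paper states the corollary without an explicit proof, as a direct application of the preceding theorem via the inequality $4 < \tfrac{81}{n}$, and your additional remark that the $(4,1)$-restriction forces the canonical form of the negative clauses (with $3 \mid n$) is precisely the justification needed to invoke that theorem.
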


\begin{cor}\label{cor:Mon3Sat(1,3)}
Each instance of \textsc{Monotone 3-Sat-($3,1$)} with less than 27 variables is satisfiable. 
\end{cor}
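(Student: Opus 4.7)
The plan is to apply the preceding theorem directly. In an instance of \textsc{Monotone 3-Sat-($3,1$)} each variable appears exactly once negated, so the total number of negated literal occurrences equals $n$; since every negative clause is monotone and contains exactly three distinct variables, there are exactly $n/3$ negative clauses, and each variable occurs in exactly one of them. In particular $n$ is a multiple of $3$, and after relabeling the variables we may assume the negative clauses are precisely $\{\overline{x_{3i-2}}, \overline{x_{3i-1}}, \overline{x_{3i}}\}$ for $i=1,\ldots,n/3$, so the instance is exactly of the form considered in the preceding theorem, with $C^+$ the set of positive $3$-clauses.

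Next I would plug in the numerical hypothesis. Each variable appears unnegated exactly $3$ times. If $n<27$, then $\frac{81}{n}>\frac{81}{27}=3$, so in particular each variable appears unnegated strictly fewer than $\frac{81}{n}$ times. Invoking the previous theorem, the instance is satisfiable, which is precisely what the corollary asserts.

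There is essentially no obstacle: the statement is an immediate specialization of the theorem to the parameters $(p,q)=(3,1)$, and the only thing that needs a word of justification is the bookkeeping that the instance fits the hypotheses of the theorem (that $n$ is a multiple of $3$ and that the negative clauses can be listed in the required form), which follows from the defining properties of \textsc{Monotone 3-Sat-($3,1$)}. Hence the proof is a single-line application and does not require any new construction.
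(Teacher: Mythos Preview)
Your proof is correct and follows exactly the intended route: the corollary is an immediate specialization of the preceding theorem, and your brief verification that an instance of \textsc{Monotone 3-Sat-$(3,1)$} has the required shape (the $n/3$ disjoint negative clauses) together with the numerical check $3<81/n$ for $n<27$ is all that is needed.
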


\subsection{Dichotomy for exactly six appearances per variable}\label{sub:six}

In this section, we settle the computational complexity status for \textsc{Monotone 3-Sat-E6}, i.e., with exactly $6$ variable appearances. From the previous section we know that \textsc{Monotone 3-Sat-($5,1$)}, and hence \textsc{Monotone 3-Sat-($1,5$)} are NP-complete (Theorem~\ref{thm:Mon3Sat-(1,5)}). From Section~\ref{sub:balanced} we can conclude that \textsc{Monotone 3-Sat-(3,3)} is NP-complete (Theorem~\ref{thm:Mon3Sat-(k,k)}). Clearly, if a variable appears only unnegated or only negated the problem becomes trivial. Therefore, it remains to establish the the computational complexity status for \textsc{Monotone 3-Sat-($4,2$)} (and hence for \textsc{Monotone 3-Sat-($2,4$)}). 

In order to do so, we state two lemmata that allow, in an intermediate step, to prove hardness of \textsc{Monotone 3-Sat-($3,2$)}. That result, in turn, will be used to show hardness of \textsc{Monotone 3-Sat-($4,2$)}. 

\begin{lem} Let $\mathcal{G}(x, y, z)$ be the following set of clauses, where $V_\text{aux} = \{a, b, \ldots, f\}$ are new variables.  

\begin{multicols}{3} 
\begin{enumerate}
\item $\{\overline{a}, \overline{b}, \overline{f}\}$
\item $\{\overline{a}, \overline{c}, \overline{d}\}$
\item $\{\overline{b}, \overline{c}, \overline{e}\}$
\item $\{\overline{d}, \overline{e}, \overline{f}\}$
\item $\{a, b, f\}$
\item $\{a, c, d\}$
\item $\{b, c, e\}$
\item $\{d, e, f\}$
\item $\{a, e, x\}$
\item $\{b, d, y\}$
\item $\{c, f, z\}$
\end{enumerate}
\end{multicols}
Then, a truth assignment $\beta$ for $\{x, y, z\}$ can be extended to a truth assignment $\beta'$ for $\{x, y, z\} \cup V_{\text{aux}}$ that satisfies $\mathcal{G}(x, y, z)$ if and only if $\beta(v) = T$ for at least one $v \in \{x, y, z\}$. 
\end{lem}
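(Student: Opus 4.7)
The plan is to dispose of the ``if'' direction by exhibiting explicit extensions and to tackle the ``only if'' direction by a parity argument followed by a short case analysis.

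For the ``if'' direction I would verify that, when $\beta(x)=T$, the extension given by $\beta'(a)=\beta'(e)=F$ and $\beta'(b)=\beta'(c)=\beta'(d)=\beta'(f)=T$ satisfies all eleven clauses of $\mathcal{G}(x,y,z)$; this is a direct check. The cases $\beta(y)=T$ and $\beta(z)=T$ then follow from the natural $S_3$-symmetry of $\mathcal{G}(x,y,z)$: swapping the pair $\{a,e\}$ with $\{b,d\}$ (respectively with $\{c,f\}$), together with the corresponding literal swap $x\leftrightarrow y$ (respectively $x\leftrightarrow z$), sends the clause set to itself, so analogous extensions work.

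For the converse, suppose an extension $\beta'$ satisfies $\mathcal{G}(x,y,z)$ while $\beta(x)=\beta(y)=\beta(z)=F$. Then clauses~9--11 reduce to the $2$-clauses $\{a,e\}$, $\{b,d\}$, $\{c,f\}$, so each of the three pairs $P_1=\{a,e\}$, $P_2=\{b,d\}$, $P_3=\{c,f\}$ contains at least one variable set to $T$. The key observation is that clauses~1--4 paired with~5--8 jointly enforce, for each triple in the collection $\mathcal{T}=\{\{a,b,f\},\{a,c,d\},\{b,c,e\},\{d,e,f\}\}$, that its three variables do not all share the same truth value. Labelling $a,b,c$ as $0$ and $e,d,f$ as $1$, the triples in $\mathcal{T}$ are precisely the four ``odd-parity selectors'' of one element from each pair, namely those indexed by the binary strings $001$, $010$, $100$, $111$.

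Suppose towards a contradiction that every pair contains both a $T$ and an $F$; then the full assignment to $\{a,\dots,f\}$ is determined by the vector $p\in\{0,1\}^3$ indicating, for each pair, which element is $T$. Among the eight possible selector triples exactly the one indexed by $p$ is all-true and exactly the one indexed by the bitwise complement $\bar{p}$ is all-false; since $p$ and $\bar{p}$ have opposite parities, one of these two monochromatic triples has odd parity and therefore lies in $\mathcal{T}$, contradicting the previous paragraph. Hence some pair is constant, and since it contains at least one $T$, it must be $(T,T)$; by the $S_3$-symmetry permuting the three pairs (with the matching permutation of $x,y,z$), we may assume $\beta'(a)=\beta'(e)=T$. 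Now clauses~5--8 are automatically satisfied, so clauses~1--4 reduce to the constraints that at least one of $\{b,f\}$, of $\{c,d\}$, of $\{b,c\}$, and of $\{d,f\}$ is false, while clauses~10 and~11 require $b\vee d$ and $c\vee f$. A case split on $\beta'(b)$ closes the argument: if $b=T$, then $f=F$, whence $c=T$ to satisfy $c\vee f$, contradicting ``not both of $b,c$ true''; if $b=F$, then $d=T$, so $f=F$ (from ``not both of $d,f$''), and then $c=T$, contradicting ``not both of $c,d$''. The main subtlety is the parity observation, which sidesteps an otherwise bulky brute-force enumeration over all $64$ assignments to the auxiliary variables.
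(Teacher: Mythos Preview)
Your proof is correct. The ``if'' direction coincides with the paper's: both set $a=e=F$ and $b=c=d=f=T$ when $\beta(x)=T$, and your explicit verification of the $S_3$-symmetry (via $(a\,b)(e\,d)(x\,y)$ and $(a\,c)(e\,f)(x\,z)$) makes rigorous what the paper only gestures at with ``by using the same approach''.

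For the ``only if'' direction the two arguments diverge. The paper picks a triple $(u,v,w)\in\{a,e\}\times\{b,d\}\times\{c,f\}$ of true variables (guaranteed by clauses~9--11) and runs through the eight possibilities directly, eliminating four of them immediately via clauses~1--4 and dispatching the remaining four one at a time by propagating forced values until some positive clause among~5--8 fails. Your route is structurally different: you first observe that clauses~1--8 force each of the four triples $\{a,b,f\},\{a,c,d\},\{b,c,e\},\{d,e,f\}$ to be non-monochromatic, identify these as exactly the odd-parity selectors from the three pairs, and use the parity observation (that an assignment with one true per pair makes both the selector $p$ and its complement $\bar p$ monochromatic, one of which must be odd) to rule out the ``exactly one true per pair'' case in a single stroke. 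Only then do you do a short case analysis for the residual case of a doubly-true pair, with symmetry reducing it to $a=e=T$ and a two-way split on $b$.

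What this buys you is a genuinely shorter and more conceptual argument: the paper's proof is essentially an eight-case exhaustion, while yours replaces six of those cases by one parity line and exploits the automorphism group to collapse the remaining work. The paper's approach, on the other hand, is more self-contained and requires no preliminary structural observation---a reader can verify it mechanically without first understanding the pair/parity picture.
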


\begin{proof}
First, assume towards a contradiction that there is a truth assignment $\beta' \colon \{x, y, z, a, \ldots, f\} \rightarrow \{T, F\}$ with $\beta'(x) = \beta'(y) = \beta'(z) = F$ that satisfies $\mathcal{G}(x, y, z)$. Then, there is a triple
\[
(u, v, w) \in \{a, e\} \times \{b, d\} \times \{c, f\}
\]
such that each variable in $\{u, v, w\}$ is set true (by clauses 9, 10 and 11). Now, we show that $u \neq a$. Since $(u, v, w) \not \in \{(a, b, f), (a, d, c)\}$ by clauses 1 and 2, it suffices to consider the cases $(u, v, w) \in \{(a, b, c), (a, d, f)\}$. 
\begin{itemize}
\item If $(u, v, w) = (a, b, c)$, then $\beta'(f) = \beta'(d) = \beta'(e) = F$ by clauses 1, 2 and 3, respectively. Hence, clause 8 is not satisfied which is a contradiction to the assumption that $\beta'$ satisfies $\mathcal{G}(x, y, z)$.  
\item If $(u, v, w) = (a, d, f)$, then $\beta'(b) = \beta'(c) = \beta'(e) = F$ by clauses 1, 2 and 4, respectively. Hence, clause 7 is not satisfied, which is again a contradiction to the assumption that $\beta'$ satisfies $\mathcal{G}(x, y, z)$.  
\end{itemize}
By an analogous argument, we can show that $u \neq e$ which is a contradiction since $u \in \{a, e\}$. Hence, there is no satisfying truth assignment $\beta'$ with $\beta'(x) = \beta'(y) = \beta'(z) = F$. We deduce that no extension of a truth assignment $\beta$ for $\{x, y, z\}$ with $\beta(x) = \beta(y) = \beta(z) = F$ satisfies $\mathcal{G}(x, y, z)$. Second, let $\beta$ be a truth assignment for $\{x, y, z\}$ with $\beta(x) = T$, $\beta(y) = b_y$ and $\beta(z) = b_z$ where $b_y, b_z \in \{T,F\}$. Then, we extend $\beta$ to a truth assignment $\beta'$ that satisfies $\mathcal{G}(x, y, z)$ by setting $\beta'(a) = \beta'(e) = F$ and $\beta'(v) = T$ for all $v \in V_\text{aux}\setminus\{a, e\}$. It is easy to verify that $\mathcal{G}(x, y, z)$ is satisfied for this assignment even if $b_y = b_z = F$. By using the same approach, we can show that if $\beta(y) = T$ or $\beta(z) = T$,  we can assign truth values to the remaining variables such that $\mathcal{G}(x, y, z)$ is satisfied.     

\end{proof}

\begin{lem}
Let $\mathcal{H}(\bar{x}, \bar{y}, \bar{z})$ be the following set of clauses, where $V_\text{aux} = \{a, b, \ldots, i\}$ are new variables.

\begin{multicols}{4} 
\begin{enumerate}
\item $\{\bar{a}, \bar{d}, \bar{x}\}$
\item $\{\bar{b}, \bar{g}, \bar{y}\}$
\item $\{\bar{f}, \bar{i}, \bar{z}\}$
\item $\{\bar{a}, \bar{b}, \bar{e}\}$
\item $\{\bar{c}, \bar{e}, \bar{i}\}$
\item $\{\bar{c}, \bar{g}, \bar{h}\}$
\item $\{\bar{d}, \bar{f}, \bar{h}\}$
\item $\{a, c, f\}$
\item $\{a, f, g\}$
\item $\{a, g, h\}$
\item $\{b, c, d\}$
\item $\{b, e, h\}$
\item $\{b, h, i\}$
\item $\{c, e, i\}$
\item $\{d, e, f\}$
\item $\{d, g, i\}$
\end{enumerate}
\end{multicols}
Then, a truth assignment $\beta$ for $\{x, y, z\}$ can be extended to a truth assignment $\beta'$ for $\{x, y, z\} \cup V_{\text{aux}}$ that satisfies $\mathcal{H}(\bar{x}, \bar{y}, \bar{z})$ if and only if $\beta(v) = F$ for at least one $v \in \{x, y, z\}$.
\end{lem}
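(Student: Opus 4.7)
The plan is to prove both directions of the equivalence in the same style as the proofs of the $\mathcal{D}$ and $\mathcal{G}$ gadgets above.

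For the ``if'' direction I would exhibit three explicit extensions of $\beta$, one for each case where at least one of $x,y,z$ is mapped to $F$. Concretely, setting the true-set of auxiliaries to $\{a,d,e,h\}$ works whenever $\beta(x)=F$; $\{b,c,e,g\}$ works whenever $\beta(y)=F$; and $\{c,f,h,i\}$ works whenever $\beta(z)=F$. In each case a direct check verifies that all 16 clauses of $\mathcal{H}(\bar{x},\bar{y},\bar{z})$ are satisfied irrespective of the truth values of the other two external variables: the relevant one among clauses 1, 2, 3 is satisfied by its negated external literal, while the remaining two of these clauses are handled because the corresponding pair among $\{a,d\}, \{b,g\}, \{f,i\}$ is not entirely contained in the chosen true-set.

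For the ``only if'' direction, I would suppose towards a contradiction that $\beta(x)=\beta(y)=\beta(z)=T$ and that some $\beta'$ extends $\beta$ to a satisfying assignment. Under this hypothesis clauses 1, 2, 3 collapse to the $2$-clauses $\{\bar a,\bar d\},\{\bar b,\bar g\},\{\bar f,\bar i\}$, so $\beta'$ must set at least one variable of each of the pairs $\{a,d\}, \{b,g\}, \{f,i\}$ to $F$. Branching on the choice of false variable in each pair yields $2^3=8$ cases (with overlap when both elements of a pair are false), and in each case the remaining negative clauses 4--7 together with the positive clauses 8--16 admit a short implication chase that ends in an unsatisfied clause. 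For example, if $\beta'(a)=\beta'(b)=\beta'(f)=F$, then clauses 8 and 9 force $\beta'(c)=\beta'(g)=T$; clause 6 then forces $\beta'(h)=F$; clauses 12 and 13 force $\beta'(e)=T$ and $\beta'(i)=T$; and clause 5 $\{\bar c,\bar e,\bar i\}$ is violated. The other seven cases are analogous.

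The main obstacle is the length of this second part: although each of the eight cases admits only a short derivation, no obvious global symmetry of $\mathcal{H}$ collapses them into a single unified argument, so all eight must be carried out individually. A minor subtlety in the ``if'' direction is finding the three 4-sets; here any size-$4$ hitting set of the nine positive triples that contains none of the negative triples $\{a,b,e\}, \{c,e,i\}, \{c,g,h\}, \{d,f,h\}$ as a subset and that avoids the ``active'' pair among $\{a,d\}, \{b,g\}, \{f,i\}$ (i.e., the pair corresponding to an external variable still set to $T$) suffices, and a brief enumeration locates the three sets above.
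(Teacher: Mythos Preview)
Your proposal is correct and follows essentially the same approach as the paper: explicit satisfying extensions for each of the three cases $\beta(x)=F$, $\beta(y)=F$, $\beta(z)=F$, and a case analysis on which element of each pair $\{a,d\},\{b,g\},\{f,i\}$ is forced false when $\beta(x)=\beta(y)=\beta(z)=T$. The only cosmetic differences are that the paper uses the true-set $\{a,c,d,e,h\}$ for the case $\beta(x)=F$ (yours also works) and trims the eight cases to six up front by noting that clauses~9 and~16 immediately exclude the choices $(a,g,f)$ and $(d,g,i)$.
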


\begin{proof}
First, assume towards a contradiction that there is a truth assignment $\beta' \colon \{x, y, z, a, \ldots, i\} \rightarrow \{T, F\}$ with $\beta'(x) = \beta'(y) = \beta'(z) = T$ that satisfies $\mathcal{H}(\bar{x}, \bar{y}, \bar{z})$. Then, there is a triple
\[
(u, v, w) \in \{a, d\} \times \{b, g\} \times \{f, i\}
\]
such that each variable in $\{u, v, w\}$ is set false (by clauses 1, 2 and 3). By clauses 9 and 16 we have $(u,v,w) \not \in \{(a, g, f), (d, g, i)\}$. Hence, 
\[
(u,v,w) \in \{(a, b, f), (a, b, i), (a, g, i), (d, b, f), (d, b, i), (d, g, f)\}
\]
Let us consider each of these cases. 
\begin{itemize}
\item If $(u, v, w) = (a, b, f)$, then $\beta'(c) = \beta'(g) = T$ by clauses 8 and 9, respectively. Now, by clause 6 we have $\beta'(h) = F$. By clauses 12 and 13, respectively, we have $\beta'(e) = \beta'(i) = T$. Thus, clause 5 is not satisfied which is a contradiction to the assumption that $\beta'$ satisfies $\mathcal{H}(\bar{x}, \bar{y}, \bar{z})$. 
\item If $(u, v, w) = (a, b, i)$, then $\beta'(h) = T$ by clause 13. By clause 6, we have $\beta'(c) = F$ or $\beta'(g) = F$. First, let $\beta'(c) = F$. By clause 8 and 11, respectively, we get $\beta'(f) = \beta'(d) = T$. Hence, clause 7 is not satisfied. Second, let $\beta'(g) = F$. By clauses 9 and 16, respectively, we have $\beta'(f) = \beta'(d) = T$. Again, clause 7 is not satisfied which is a contradiction to the assumption. 
\item  If $(u, v, w) = (a, g, i)$, then $\beta'(f) = \beta'(h) = \beta'(d) = T$ by clauses 9, 10 and 16, respectively. Thus, clause 7 is not satisfied, a contradiction.
\item If $(u, v, w) = (d, b, f)$, then $\beta'(c) = \beta'(e) = T$ by clause 11 and 15, respectively. Now, by clause 5 we have $\beta'(i) = F$. By clause 13 and 16, respectively, we have $\beta'(h) = \beta'(g) = T$. Thus, clause 6 is not satisfied, a contradiction. 
\item If $(u, v, w) = (d, b, i)$, then $\beta'(c) = \beta'(h) = T$ by clause 11 and 13, respectively. Now, by clause 6 we have $\beta'(g) = F$. Hence, clause 16 is not satisfied, a contradiction.
\item If $(u, v, w) = (d, g, f)$, then $\beta'(a) = \beta'(e) = \beta'(i) = T $ by clause 9, 15 and 16, respectively. By clause 4 and 5, respectively, we have $\beta'(b) = \beta'(c) = F$. Thus, clause 11 is not satisfied, a contradiction. 
\end{itemize}
Hence, there is no truth assignment $\beta'$ with $\beta'(x) = \beta'(y) = \beta'(z) = T$ that satisfies $\mathcal{H}(\bar{x}, \bar{y}, \bar{z})$. We deduce that no extension of a truth assignment $\beta$ for $\{x, y, z\}$ with $\beta(x) = \beta(y) = \beta(z) = T$ satisfies $\mathcal{H}(\bar{x}, \bar{y}, \bar{z})$. Second, let $\beta$ be a truth assignment for $\{x, y, z\}$ with $\beta(x) = b_x$, $\beta(y) = b_y$ and $\beta(z) = b_z$ where $b_x, b_y, b_z \in \{T,F\}$ and $b_v = F$ for at least one variable $v \in \{x, y, z\}$. We extend $\beta$ to a truth assignment $\beta'$ that satisfies $\mathcal{H}(\bar{x}, \bar{y}, \bar{z})$ by setting the variables in $V_\text{aux} = \{a, b, \ldots, i\}$ as follows:
\begin{itemize}
\item If $b_x = F$, we set $\beta'(v) = T$ for all $v \in \{a, c, d, e, h\}$ and $\beta'(w) = F$ for all $w \in \{b, f, g, i\}$. 
\item If $b_y = F$, we set $\beta'(v) = T$ for all $v \in \{b, c, e, g\}$ and $\beta'(w) = F$ for all $w \in \{a, d, f, h, i\}$. 
\item If $b_z = F$, we set $\beta'(v) = T$ for all $v \in \{c,f,h,i\}$ and $\beta'(w) = F$ for all $w \in \{a, b, d, e, g\}$. 
\end{itemize}  
It is easy to verify that these truth assignment satisfy $\mathcal{H}(\bar{x}, \bar{y}, \bar{z})$. Note that we did only specify the truth value of one variable in $\{x, y, z\}$ in each case, e.g., in the case $b_x = F$ the given assignment satisfies $\mathcal{H}(\bar{x}, \bar{y}, \bar{z})$ for any truth values $b_y, b_z$ assigned to $y$ and $z$, respectively.     
\end{proof}

With the help of the gadgets introduced in the two lemmata above we are now able to prove NP-completeness of \textsc{Monotone 3-Sat-($3,2$)}.

\begin{propo}\label{prop:Mon3SAT(2,3)}
\textsc{Monotone 3-Sat-($3,2$)} is NP-complete.
\end{propo}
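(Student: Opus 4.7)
The plan is to show NP-hardness by reduction from \textsc{3-Sat-(2,2)}, which Berman et al.~\cite{Berman2003} showed to be NP-hard. Given an instance $(V,C)$ with $V = \{x_1, \ldots, x_n\}$, I will first split each variable $x_i$ into two fresh variables $x_i^+$ and $x_i^-$, replacing every positive literal $x_i$ in $C$ by $x_i^+$ and every negative literal $\overline{x_i}$ by $x_i^-$, then dropping all negation signs. The resulting clauses are monotone positive 3-clauses and each of $x_i^+, x_i^-$ appears exactly twice. To recover the intended semantics (so that $x_i^+$ plays the role of $x_i$ and $x_i^-$ of $\overline{x_i}$) I will enforce $x_i^+ \ne x_i^-$ by adding, for each $i$, one positive 3-clause $\{x_i^+, x_i^-, w_i\}$ (with a fresh auxiliary $w_i$ intended to be false) and two negative 3-clauses $\{\overline{x_i^+}, \overline{x_i^-}, \overline{u_i^{(j)}}\}$, $j=1,2$ (with fresh auxiliaries $u_i^{(j)}$ intended to be true). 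After this step, each $x_i^+$ and $x_i^-$ has exactly three positive and two negative appearances.

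Next I will replace every positive 3-clause by a fresh copy of the $\mathcal{G}$-gadget and every negative 3-clause by a fresh copy of the $\mathcal{H}$-gadget from the two preceding lemmata. These substitutions preserve satisfiability, and every internal auxiliary introduced has the desired $(3,2)$ appearance count. It only remains to complete the linking auxiliaries $w_i, u_i^{(j)}$ to $(3,2)$ appearances while simultaneously forcing them to their intended truth values. For each $u_i^{(j)}$ I will use a copy of $\mathcal{G}(u_i^{(j)}, u_i^{(j)}, u_i^{(j)})$, which forces $u_i^{(j)} = T$ and contributes three positive appearances; one missing negative appearance of $u_i^{(j)}$ is then supplied by placing $u_i^{(j)}$ negatively into an $\mathcal{H}$-gadget whose remaining external slots are occupied by forced-$F$ helpers. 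Symmetrically, each $w_i$ is forced to $F$ and obtains its two missing negative appearances from two $\mathcal{H}$-gadgets pairing it with forced-$T$ helpers, while its two missing positive appearances come from two $\mathcal{G}$-gadgets pairing it with forced-$T$ helpers (which are satisfied because the helpers evaluate to $T$). Since $n$ is divisible by $3$ in \textsc{3-Sat-(2,2)} (from $4n = 3|C|$), these helpers can be grouped into matching triples so that the padding closes cleanly.

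The hardest part will be the consistent bookkeeping of the forcing helpers themselves: the forced-$T$ and forced-$F$ auxiliaries introduced to balance $w_i$ and $u_i^{(j)}$ are also variables of the final instance, and they too must end up with exactly $(3,2)$ appearances. I plan to close this recursion by linking forced-$T$ and forced-$F$ helpers mutually (pairing forced-$T$ helpers in the negative slots of $\mathcal{H}$-gadgets whose singleton negative external is a $w_i$ or a forced-$F$ helper, and dually for $\mathcal{G}$-gadgets), using divisibility of the helper counts by $3$ to match everything up. Once this bookkeeping is verified, correctness of the reduction is straightforward: any satisfying assignment $\beta$ of the final instance yields a satisfying assignment of $(V,C)$ via $\beta(x_i) := \beta(x_i^+)$, and conversely any satisfying assignment of $(V,C)$ extends in the obvious way to the constructed instance. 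The transformation is clearly polynomial.
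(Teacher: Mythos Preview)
Your overall strategy matches the paper's: reduce from \textsc{3-Sat-(2,2)}, split each $x_i$ into a ``positive copy'' and a ``negative copy'', remove negations, and enforce that the two copies take opposite values by padding with auxiliaries whose truth values are forced. Two points, however, need attention.

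First, the step ``replace every positive 3-clause by a fresh copy of the $\mathcal{G}$-gadget and every negative 3-clause by a fresh copy of the $\mathcal{H}$-gadget'' is entirely unnecessary. The clauses you have at that stage are already monotone 3-clauses, and each external argument of $\mathcal{G}$ (resp.\ $\mathcal{H}$) appears exactly once positively (resp.\ negatively), so the replacement changes neither satisfiability nor the appearance counts of the external variables. You can simply drop this step.

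Second, and more seriously, the part you yourself flag as ``the hardest part'' is not actually carried out. You introduce, per original variable, one auxiliary $w_i$ (intended false, currently with one positive occurrence) and two auxiliaries $u_i^{(1)},u_i^{(2)}$ (intended true, currently with one negative occurrence each), and then propose to balance these by inserting them into further $\mathcal{G}$- and $\mathcal{H}$-gadgets together with \emph{new} forced-$T$ and forced-$F$ helpers, which in turn must themselves be balanced, and so on. You assert that ``divisibility of the helper counts by $3$'' lets this recursion close, but you never exhibit a concrete closed configuration. Note, for instance, that you cannot simply group three forced-$T$ variables into a single $\mathcal{H}(\bar t_1,\bar t_2,\bar t_3)$ to give each one more negative appearance, since that gadget would then be unsatisfiable; each such balancing step genuinely needs a variable of the opposite forced value, so the recursion does not obviously terminate. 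The paper avoids this entirely: it uses a positive \emph{2-clause} $\{x_{i,1},x_{i,2}\}$ together with $\{\overline{x_{i,1}},\overline{x_{i,2}},\overline{y_i}\}\cup\mathcal{G}(y_i,y_i,y_i)\cup\mathcal{H}(\overline{y_i},\overline{x_{i,1}},\overline{x_{i,2}})$, after which $x_{i,1},x_{i,2},y_i$ and all internal auxiliaries are already at $(3,2)$. This leaves exactly $n$ positive 2-clauses to pad; since $n=3k$, one then builds $k$ forced-false variables $u_j$ via the \emph{closed} block $\mathcal{H}(\overline{u_j},\overline{v_j},\overline{w_j})\cup\mathcal{H}(\overline{u_j},\overline{v_j},\overline{w_j})\cup\mathcal{G}(v_j,v_j,v_j)\cup\mathcal{G}(w_j,w_j,w_j)$ (so $v_j,w_j$ are already $(3,2)$ and $u_j$ has two negative occurrences), and uses each $u_j$ to pad three 2-clauses. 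No open-ended recursion is needed. To complete your proof you must either exhibit an explicit, finite, closed padding scheme for your $w_i,u_i^{(j)}$, or adopt the cleaner construction just described.
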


\begin{proof}
By reduction from \textsc{3-Sat-(2,2)}, for which NP-hardness was established by Berman et al.~\cite[Theorem 1]{Berman2003}. Given an instance of the latter with a set $V$ of variables and a set $C$ of clauses over $V$, let $n:=|V|$. Recall that $n$ is a multiple of $3$ due to $4n = 3|C|$. For each variable $x_i \in V$, we introduce two new variables $x_{i,1}$, $x_{i,2}$ and replace the two negated appearances with $x_{i, 1}$ and the two unnegated appearances with $x_{i, 2}$, respectively. Then, we remove all negations and introduce the following clauses for $i \in \{1,2, \ldots, n\}$:
\begin{align*}
\{\{x_{i,1},\,x_{i,2}\}, \{\overline{x_{i,1}},\,\overline{x_{i,2}}, \overline{y_i}\}\} \cup \mathcal{G}(y_i, y_i, y_i) \cup \mathcal{H}(\overline{y_i}, \overline{x_{i,1}},\,\overline{x_{i,2}}),
\end{align*}
where $y_i$, $1 \leq i \leq n$, are new variables. Note that each variable appears exactly  three times unnegated and twice negated ($x_{i,1}$ and $x_{i,2}$ each appear twice unnegated in the original clauses). It is easy to see that the introduced clauses are satisfiable if and only if we assign different truth values to $x_{i,1}$ and $x_{i,2}$ for all $i \in \{1, \ldots, n\}$. Hence, the constructed formula is satisfiable if and only if the original formula is satisfiable. Now, observe that the number of positive 2-clauses is equal to $n=3k$, for some $k \in \mathbb{N}$. Next, for $j\in \{1,\ldots,k\}$, we 
 introduce the new variables $u_{j},v_{j},w_{j}$, and the clauses 
\[
\mathcal{H}(\overline{u_j}, \overline{v_j}, \overline{w_j}) \cup \mathcal{H}(\overline{u_j}, \overline{v_j}, \overline{w_j}) \cup \mathcal{G}(v_j, v_j, v_j) \cup \mathcal{G}(w_j, w_j, w_j).
\]
Note that each variable $u_j$ has the forced truth value false and appears exactly twice negated, whereas each of the variables $v_j$, $w_j$ appears exactly three times unnegated and twice negated. 

Finally, for $i=1,\ldots, n$  we replace the $2$-clause $\{x_{i,1},x_{i,2}\}$ with the $3$-clause $\{x_{i,1},x_{i,2},u_{\ell}\}$, 
where $3(\ell-1)<i\leq3\ell$. Observe that the resulting formula is indeed an instance of \textsc{Monotone 3-Sat-($3,2$)} and satisfiable if and only if the original formula is satisfiable since adding $u_j$ to the positive 2-clauses has no effect on the satisfiability. We conclude the proof by remarking that the transformation is polynomial. 
\end{proof}

\textbf{Remark.} Observe that the number of variables in each instance of \textsc{Monotone 3-Sat-($3,2$)} is divisible by 3. Now, we can increase the appearances of three variables $x, y, z$ by exactly one each by introducing the following clauses:
\begin{multicols}{5} 
\begin{enumerate}
\item $\{a, b, x\}$
\item $\{c, d, y\}$
\item $\{e, f, z\}$
\item $\{a, b, c\}$
\item $\{a, b, d\}$
\item $\{a, e, f\}$
\item $\{b, e, f\}$
\item $\{c, d, e\}$
\item $\{c, d, f\}$
\item $\{\overline{a}, \overline{b}, \overline{d}\}$
\item $\{\overline{a}, \overline{b}, \overline{f}\}$
\item $\{\overline{c}, \overline{d}, \overline{e}\}$
\item $\{\overline{c}, \overline{e}, \overline{f}\}$
\end{enumerate}
\end{multicols}

Here, $V_\text{aux} = \{a, b, c, d, e, f\}$ are new variables. Note that each introduced variable $v \in V_\text{aux}$ appears exactly four times unnegated and twice negated. Since setting all variables in $\{a, c, e\}$ true and all variables in $\{b, d, f\}$ false satisfies the above collection of clauses (i.e., the truth values of $x, y, z$ are irrelevant for the satisfiability of the introduced clauses).

With Proposition~\ref{prop:Mon3SAT(2,3)} and the above remark on how to increase the number of unnegated variable appearances in an instance of \textsc{Monotone 3-Sat-($3,2$)} we get the following corollary. 

\begin{cor}
\textsc{Monotone 3-Sat-($4,2$)} is NP-complete.
\end{cor}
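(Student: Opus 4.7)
The plan is to reduce from \textsc{Monotone 3-Sat-($3,2$)}, whose NP-hardness is established in Proposition~\ref{prop:Mon3SAT(2,3)}, by applying the auxiliary gadget described in the remark preceding the corollary in order to bump each variable's number of unnegated appearances from three to four without touching its number of negated appearances.

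Given an instance $\mathcal{I} = (V, C)$ of \textsc{Monotone 3-Sat-($3,2$)}, I will first recall from the remark that $|V|$ is divisible by three, and partition $V$ into triples $\{x^{(k)}, y^{(k)}, z^{(k)}\}$ for $k = 1, \dots, |V|/3$. For each such triple I then introduce a fresh copy of the $13$-clause gadget from the remark, with new auxiliary variables $a^{(k)}, b^{(k)}, c^{(k)}, d^{(k)}, e^{(k)}, f^{(k)}$. In the resulting formula, each $x^{(k)}, y^{(k)}, z^{(k)}$ gains exactly one additional unnegated appearance through clauses 1, 2, 3 of its gadget and no new negated appearance, so the total appearances of every original variable become exactly four unnegated and two negated. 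The auxiliary variables, by construction of the gadget listed in the remark, each appear exactly four times unnegated and twice negated, so the overall formula is an instance of \textsc{Monotone 3-Sat-($4,2$)}.

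For correctness, the key observation, already noted in the remark, is that the gadget can be satisfied independently of the truth values assigned to $x^{(k)}, y^{(k)}, z^{(k)}$: setting $a^{(k)}, c^{(k)}, e^{(k)}$ to true and $b^{(k)}, d^{(k)}, f^{(k)}$ to false satisfies all $13$ clauses regardless of $x^{(k)}, y^{(k)}, z^{(k)}$. Consequently, the constructed formula is satisfiable if and only if $\mathcal{I}$ is satisfiable: any satisfying assignment for $\mathcal{I}$ extends to a satisfying assignment by the above fixed choice on the auxiliary variables, and conversely, restricting any satisfying assignment of the constructed formula to $V$ satisfies $C$.

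I do not foresee a genuine obstacle here; the verification that the gadget is indeed satisfied by the given assignment is a routine check of the $13$ listed clauses, and the only bookkeeping to be careful about is that fresh auxiliary variables are introduced for each triple so that their appearance counts are all exactly $(4,2)$ and no gadget interferes with another. The transformation is clearly polynomial, and membership in NP is immediate, completing the reduction.
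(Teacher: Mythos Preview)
Your proposal is correct and follows exactly the approach the paper intends: reduce from \textsc{Monotone 3-Sat-($3,2$)} (Proposition~\ref{prop:Mon3SAT(2,3)}), partition the variables into triples using the divisibility observation from the remark, and attach one fresh copy of the $13$-clause gadget to each triple so that every original variable gains one extra unnegated occurrence while each auxiliary variable has profile $(4,2)$. The paper itself states the corollary without further detail, so your write-up is in fact a faithful and slightly more explicit version of the same argument.
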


Therewith, the dichotomy for \textsc{Monotone 3-Sat-E6} is set as follows. 

\begin{thm}
\textsc{Monotone 3-Sat-($p$,$q$)} with $p+q=6$ is NP-complete if $p \not \in \{0, 6\}$ and trivial otherwise. 
\end{thm}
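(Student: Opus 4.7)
The plan is to settle the dichotomy by combining the results already established in Section~\ref{sub:balanced}, Section~\ref{sub:once-negated} and the preceding part of Section~\ref{sub:six}, and to observe that by symmetry only the cases with $p \leq q$ need to be analyzed. Indeed, given any instance of \textsc{Monotone 3-Sat-($p,q$)}, replacing every variable $x$ by a fresh variable $x'$ that plays the role of $\overline{x}$ (i.e., negating every literal in the formula) yields an equisatisfiable instance of \textsc{Monotone 3-Sat-($q,p$)}, so hardness of one of the two problems immediately entails hardness of the other.

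First I would dispose of the trivial boundary cases $p\in\{0,6\}$: in either case, every clause is either fully positive or fully negative, and since no variable appears in both kinds of clauses at once, setting all variables in $V$ uniformly to true (resp.~false) trivially satisfies every clause. Hence \textsc{Monotone 3-Sat-($0,6$)} and \textsc{Monotone 3-Sat-($6,0$)} admit a one-line polynomial-time algorithm.

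Next I would collect the NP-completeness results for the remaining pairs $(p,q)$ with $p+q=6$ and $p\notin\{0,6\}$. The case $(p,q)=(3,3)$ is covered by Theorem~\ref{thm:Mon3Sat-(3,3)} (or equivalently by Corollary~\ref{thm:Mon3Sat-(k,k)} with $k=3$). The case $(p,q)=(5,1)$ is Theorem~\ref{thm:Mon3Sat-(1,5)}, which by the symmetry argument above also yields NP-completeness for $(p,q)=(1,5)$. Finally, the case $(p,q)=(4,2)$ is the corollary obtained immediately before this theorem (from Proposition~\ref{prop:Mon3SAT(2,3)} and the remark on how to raise the number of unnegated variable appearances by one), and symmetry again gives NP-completeness for $(p,q)=(2,4)$.

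There is essentially no main obstacle here: every nontrivial subcase has already been dealt with in the preceding sections, so the proof reduces to a clean case distinction plus an invocation of the negate-all-variables symmetry. All that remains is to explicitly state membership in NP, which is immediate because any candidate truth assignment can be verified in polynomial time, and to list the four nontrivial cases together with the theorem or corollary that covers each of them, thereby completing the dichotomy.
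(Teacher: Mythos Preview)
Your proposal is correct and matches the paper's approach exactly: the theorem is stated as a summary after all the nontrivial cases $(5,1),(1,5),(3,3),(4,2),(2,4)$ have been handled in the preceding results, with the same symmetry observation for swapping $p$ and $q$ and the same trivial argument for $p\in\{0,6\}$. The paper does not even spell out a separate proof block for this theorem, as the section's introductory paragraph already lays out precisely the case distinction you give.
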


\subsection{On a restricted variant of \textsc{Monotone 3-Sat-4}}\label{sub:a-restr}

Finally, we consider \textsc{Monotone 3-Sat-E4}, i.e., with exactly four appearances of each variable. We begin this short section with the following lemma.

\begin{lem}\label{lem:CforMon3SATE4}
Consider the following collection $\mathcal{C}(x, y)$ of monotone clauses, where $V_\text{aux} = \{a, b, \ldots, h\}$ are new variables.

\begin{multicols}{3}
\begin{enumerate}
\item $\{\bar{a},\, \bar{c},\, \bar{e}\}$
\item $\{\bar{a},\, \bar{c},\, \bar{f}\}$
\item $\{\bar{a},\, \bar{d},\, \bar{g}\}$
\item $\{\bar{b},\, \bar{c},\, \bar{h}\}$
\item $\{\bar{b},\, \bar{e},\, \bar{g}\}$
\item $\{\bar{b},\, \bar{f},\, \bar{g}\}$
\item $\{\bar{d},\, \bar{e},\, \bar{h}\}$
\item $\{\bar{d},\, \bar{f},\, \bar{h}\}$
\item $\{a,\,b,\,x\}$
\item $\{c,\,d,\,x\}$
\item $\{e,\,f,\,x\}$
\item $\{g,\,h,\,y\}$
\end{enumerate}
\end{multicols}
Then, a truth assignment $\beta$ for $\{x, y\}$ can be extended to a truth assignment $\beta'$ for $\{x, y\} \cup V_{\text{aux}}$ that satisfies $\mathcal{C}(x, y)$ if and only if $\beta(v) = T$ for at least one $v \in \{x, y\}$.
\end{lem}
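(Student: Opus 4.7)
The plan is to prove both directions of the biconditional by direct case analysis on the truth values of $x$ and $y$.

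For the ``if'' direction I would exhibit an explicit satisfying extension in each of the three cases where some variable of $\{x,y\}$ is true. When $\beta(x)=\beta(y)=T$, setting $a=b=\cdots=h=F$ makes every negated literal in clauses 1--8 true, while clauses 9--12 are covered by $x$ or $y$. When $\beta(x)=T$, $\beta(y)=F$, the only clause not yet covered by $x$ is clause 12, so I would set $g=T$ and leave the remaining auxiliary variables false; every negative clause still contains some variable from $\{a,b,c,d,e,f\}$ that is false and hence a true negated literal. When $\beta(x)=F$, $\beta(y)=T$, I would set $a=d=e=T$ and $b=c=f=g=h=F$; this satisfies clauses 9--11 via $a,d,e$, clause 12 via $y$, and each negative clause picks up a true literal from $\bar b, \bar c, \bar f, \bar g$, or $\bar h$.

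For the ``only if'' direction, assume towards a contradiction that $\beta(x)=\beta(y)=F$ and let $\beta'$ be any extension satisfying $\mathcal{C}(x,y)$. Clauses 9--11 degenerate into positive $2$-clauses forcing at least one true variable in each of $\{a,b\}$, $\{c,d\}$, $\{e,f\}$, and clause 12 forces $g=T$ or $h=T$. I would split into four cases according to which of $\{a,b\}$ and $\{c,d\}$ is true. In Case A ($a=c=T$), clauses 1 and 2 collapse to $\bar e$ and $\bar f$, forcing $e=f=F$ and contradicting clause 11. In Case B ($a=T$, $c=F$, $d=T$), clause 3 collapses to $\bar g$, so clause 12 gives $h=T$, and then clauses 7 and 8 force $e=f=F$, again contradicting clause 11. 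Case C ($a=F$, $b=T$, $c=T$) is symmetric via clauses 4, 5, 6. Case D ($a=F$, $b=T$, $c=F$, $d=T$) branches on $e$: if $e=T$ then clauses 5 and 7 force $g=h=F$, contradicting clause 12, while if $e=F$ then clause 11 gives $f=T$ and clauses 6 and 8 again force $g=h=F$.

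The main obstacle is organizing the case analysis in the second direction so that it is demonstrably exhaustive and each branch terminates with a visible contradiction. The key structural observation is that once a representative of each of $\{a,b\}$ and $\{c,d\}$ is fixed true, two literals drop out of several negative clauses, collapsing them into unit or near-unit constraints on $\{e,f,g,h\}$; these then clash with the $2$-clause residues of clauses 11 and 12. Only Case D is mildly nonroutine, since no single pair is attacked by three collapsed negative clauses at once, so one has to bifurcate on the truth value of $e$; the symmetric pair $(5,7)$ versus $(6,8)$ handles the two subcases in an identical manner.
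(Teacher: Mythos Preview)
Your argument is correct. Both directions go through exactly as you describe: the three explicit extensions you give for the ``if'' direction all satisfy $\mathcal{C}(x,y)$, and your four-way split on which representative of $\{a,b\}$ and $\{c,d\}$ is forced true by clauses 9 and 10 is exhaustive and each branch reaches a contradiction.

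The paper organizes the ``only if'' direction differently. It first uses clause~11 (at least one of $e,f$ is true) to collapse the pairs $(1,2)$, $(5,6)$, $(7,8)$ into the binary clauses $\{\bar a,\bar c\}$, $\{\bar b,\bar g\}$, $\{\bar d,\bar h\}$, and then does only a two-case split on the value of $g$, chasing the resulting implication chain to a violation of clause~4 (if $g=F$) or clause~3 (if $g=T$). Your decomposition instead fixes the true representatives of $\{a,b\}$ and $\{c,d\}$ first and lets $\{e,f\}$ and $\{g,h\}$ fall out, which costs four cases plus one sub-branch but avoids introducing the auxiliary implied 2-clauses. For the ``if'' direction the paper gives different witness assignments (e.g.\ for $\beta(y)=T$ it sets $\{a,g,h\}$ false and $\{b,c,d,e,f\}$ true; for $\beta(x)=T$ it sets only $h$ true), whereas you set only $g$ true in the $x=T$ case and use $a=d=e=T$ in the $y=T$ case. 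All of these work; yours are if anything easier to verify, and the additional case $x=y=T$ you treat separately is of course subsumed by either of the other two.
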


\begin{proof}
We show that this collection of clauses is unsatisfiable if $x$ and $y$ are both set false. By clause 11 at least one of $e, \,f$ has to be set true. As a consequence clauses 1,\,2; 5,\,6 and 7,\,8 imply that the additional clauses $i.\;\{\bar{a},\,\bar{c}\}$; $ii.\;\{\bar{b},\,\bar{g}\}$ and $iii.\;\{\bar{d},\,\bar{h}\}$ would have to be satisfied as well. First consider any variable assignment $\beta$ with $\beta (g) = F$:
\[
\beta (g) = F \overset{12.}{\Rightarrow} \beta(h) = T \overset{iii.}{\Rightarrow} \beta(d) = F \overset{10.}{\Rightarrow} \beta(c) = T \overset{i.}{\Rightarrow} \beta(a) = F \overset{9.}{\Rightarrow} \beta(b) = T. 
\]
Thus, clause 4 is not satisfied. Now we consider the other case $\beta(g) = T$:
\[
\beta (g) = T \overset{ii.}{\Rightarrow} \beta(b) = F \overset{9.}{\Rightarrow} \beta(a) = T \overset{i.}{\Rightarrow} \beta(c) = F \overset{10.}{\Rightarrow} \beta(d) = T. 
\]
Thus, clause 3 is not satisfied. Consequently, the collection of clauses is unsatisfiable if $\beta(x) = \beta(y) = F$. Without clause 12, there is a satisfying truth assignment: Set all variables in $\{a,\,g,\,h\}$ false and all variables in $\{b, c, d, e, f\}$ true. Hence, the collection of clauses is satisfiable if $\beta(y) = T$. Finally, if $\beta(x) = T$, we can satisfy all clauses by setting $h$ true and all variables in $\{a, b, \ldots, g\}$ false. 
\end{proof}

Lemma~\ref{lem:CforMon3SATE4} implies the following corollary, where $\mathcal{C}(\cdot,\cdot)$ refers to the respective set of clauses introduced in Lemma~\ref{lem:CforMon3SATE4}.

\begin{cor}\label{cor:B(x,y,z)}
Consider the collection of clauses $\mathcal{B}(x, y, z) = \mathcal{C}(u, x) \cup \mathcal{C}(v, y) \cup \mathcal{C}(w, z) \cup \{\{\bar{u},\,\bar{v},\,\bar{w}\}\}$, and let $V$ be its associated set of variables. Then, a truth assignment $\beta$ for $\{x, y, z\}$ can be extended to a truth assignment $\beta'$ for $V$ that satisfies $\mathcal{B}(x, y, z)$ if and only if $\beta(v) = T$ for at least one $v \in \{x, y, z\}$. 
\end{cor}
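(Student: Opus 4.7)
The plan is to apply Lemma~\ref{lem:CforMon3SATE4} to each of the three sub-gadgets $\mathcal{C}(u,x)$, $\mathcal{C}(v,y)$, $\mathcal{C}(w,z)$ separately, exploiting the fact that the three copies introduce disjoint sets of auxiliary variables. Thus a truth assignment to $\{x,y,z,u,v,w\}$ can be extended to a satisfying assignment on $V$ if and only if each of the three sub-gadgets is individually extendable, which by Lemma~\ref{lem:CforMon3SATE4} means that at least one of $u,x$, at least one of $v,y$, and at least one of $w,z$ is set to true.

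For the ``only if'' direction I would argue contrapositively: assume $\beta(x) = \beta(y) = \beta(z) = F$. Then by Lemma~\ref{lem:CforMon3SATE4} any extension satisfying $\mathcal{C}(u,x)$ must set $\beta'(u) = T$, and analogously $\beta'(v) = \beta'(w) = T$. But then the clause $\{\bar{u}, \bar{v}, \bar{w}\}$ has all three literals evaluating to false, so $\mathcal{B}(x,y,z)$ is unsatisfiable, contradicting the existence of a satisfying extension.

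For the ``if'' direction I would explicitly construct an extension. Given $\beta$ on $\{x,y,z\}$ with $\beta(r) = T$ for at least one $r \in \{x,y,z\}$, set $\beta'(u) := \neg \beta(x)$, $\beta'(v) := \neg \beta(y)$, $\beta'(w) := \neg \beta(z)$. With these choices, in each pair $\{u,x\}$, $\{v,y\}$, $\{w,z\}$ at least one variable is true, so by Lemma~\ref{lem:CforMon3SATE4} each of $\mathcal{C}(u,x)$, $\mathcal{C}(v,y)$, $\mathcal{C}(w,z)$ can be satisfied by a suitable assignment to its own auxiliary variables; since the three auxiliary variable sets are pairwise disjoint, these extensions can be combined without conflict. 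Moreover, because $\beta(r) = T$ for some $r \in \{x,y,z\}$, the corresponding variable in $\{u,v,w\}$ is set to false, so the clause $\{\bar{u}, \bar{v}, \bar{w}\}$ is satisfied as well.

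I do not expect any serious obstacle here. The only point that requires a brief word of justification is that the three copies of the gadget $\mathcal{C}(\cdot,\cdot)$ share no variables other than the explicitly exposed arguments $u,v,w,x,y,z$, so the per-gadget extensions guaranteed by Lemma~\ref{lem:CforMon3SATE4} can be glued together into a single assignment on $V$. Everything else is an immediate consequence of Lemma~\ref{lem:CforMon3SATE4} combined with a case analysis on the truth values of $u,v,w$ forced by the additional clause $\{\bar{u},\bar{v},\bar{w}\}$.
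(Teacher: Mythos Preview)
Your argument is correct and is exactly the reasoning the paper has in mind: the corollary is stated without proof as an immediate consequence of Lemma~\ref{lem:CforMon3SATE4}, and your write-up simply spells out the intended two-direction argument (forcing $u,v,w$ to true when $x,y,z$ are all false; otherwise choosing $u,v,w$ complementarily and invoking disjointness of the auxiliary variable sets). Nothing needs to be added or changed.
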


\begin{cor}\label{cor:barB(barx,bary,barz)}
Consider the collection of clauses $\bar{\mathcal{B}}(\bar{x}, \bar{y}, \bar{z})$ obtained from $\mathcal{B}(x, y, z)$ by replacing each literal with its negation, and let $V$ be its associated set of variables. Then, a truth assignment $\beta$ for $\{x, y, z\}$ can be extended to a truth assignment $\beta'$ for $V$ that satisfies $\bar{\mathcal{B}}(\bar{x}, \bar{y}, \bar{z})$ if and only if $\beta(v) = F$ for at least one $v \in \{x, y, z\}$.
\end{cor}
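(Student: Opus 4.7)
The plan is to obtain this corollary as an immediate consequence of Corollary~\ref{cor:B(x,y,z)} by exploiting the symmetry between a formula and its literal-wise negation under complementation of truth assignments. Concretely, for any truth assignment $\beta'$ on the variable set $V$ of $\mathcal{B}(x,y,z)$ (which equals that of $\bar{\mathcal{B}}(\bar x,\bar y,\bar z)$), let $\overline{\beta'}$ denote the complementary assignment defined by $\overline{\beta'}(v) = T$ iff $\beta'(v) = F$ for every $v \in V$. A standard observation is that a monotone clause $c$ is satisfied by $\beta'$ if and only if the clause obtained from $c$ by negating every literal is satisfied by $\overline{\beta'}$. Since $\bar{\mathcal{B}}(\bar x,\bar y,\bar z)$ is built from $\mathcal{B}(x,y,z)$ by negating every literal, this gives the equivalence that $\beta'$ satisfies $\mathcal{B}(x,y,z)$ if and only if $\overline{\beta'}$ satisfies $\bar{\mathcal{B}}(\bar x,\bar y,\bar z)$.

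Given this bijection, I would argue as follows. Suppose $\beta$ is a truth assignment on $\{x,y,z\}$ with $\beta(v)=F$ for some $v \in \{x,y,z\}$. Then $\overline{\beta}$ on $\{x,y,z\}$ satisfies $\overline{\beta}(v) = T$ for that same $v$, so by Corollary~\ref{cor:B(x,y,z)} it extends to an assignment $\beta''$ on $V$ satisfying $\mathcal{B}(x,y,z)$. Taking complements, $\overline{\beta''}$ is an extension of $\beta$ to $V$ that satisfies $\bar{\mathcal{B}}(\bar x,\bar y,\bar z)$. Conversely, if some extension $\beta'$ of $\beta$ satisfies $\bar{\mathcal{B}}(\bar x,\bar y,\bar z)$, then $\overline{\beta'}$ is an extension of $\overline{\beta}$ that satisfies $\mathcal{B}(x,y,z)$; by Corollary~\ref{cor:B(x,y,z)} applied to $\overline{\beta}$, we must have $\overline{\beta}(v)=T$ for at least one $v \in \{x,y,z\}$, i.e., $\beta(v)=F$ for that $v$.

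There is essentially no technical obstacle here, as the entire content has been packaged into Corollary~\ref{cor:B(x,y,z)}. The only point worth double-checking while writing is the dual reading of the symmetry principle: the complementation bijection is on assignments over the \emph{full} variable set $V$ (including the auxiliary variables $u,v,w$ of $\mathcal{B}$), and what is being asserted is only about the restriction to $\{x,y,z\}$. Since complementation commutes with restriction, the extension argument in both directions goes through without modification, and the proof fits in a few lines.
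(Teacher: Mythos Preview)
Your proposal is correct and matches the paper's intended approach: the paper states this corollary without proof, treating it as immediate from Corollary~\ref{cor:B(x,y,z)} via the standard complementation symmetry you describe. One inessential remark: the observation that $\beta'$ satisfies $c$ iff $\overline{\beta'}$ satisfies the literal-wise negation of $c$ holds for arbitrary clauses, not just monotone ones, so the monotonicity qualifier is unnecessary (though harmless here).
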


\textbf{Remark.} Each instance of an gadget $\mathcal{B}(x, y, z)$ (resp. $\bar{\mathcal{B}}(\bar{x}, \bar{y}, \bar{z})$) has its own new auxiliary variables (i.e., the variables that are not in $\{x, y, z\}$).\\

The above corollaries will be useful to prove that \textsc{Monotone 3-Sat-E4} is NP-complete even when restricted to instances in which each variable appears either three times unnegated and once negated or three times negated and once unnegated.

\begin{thm}\label{thm:Mon3SATE4}
\textsc{Monotone 3-Sat-E4} is NP-complete even if each variable appears three times unnegated and once negated or three times negated and once unnegated. 
\end{thm}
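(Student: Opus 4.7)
The plan is to show NP-hardness by reduction from \textsc{3-Sat-(2,2)}, which Berman et al.~\cite{Berman2003} showed to be NP-complete, in the same splitting spirit as the proofs of Theorem~\ref{thm:Mon3Sat-(3,3)} and Proposition~\ref{prop:Mon3SAT(2,3)}. Given an instance $(V,C)$ with $V=\{x_1,\ldots,x_n\}$, I would first split each variable $x_i$ into two fresh variables $x_{i,1}$ and $x_{i,2}$, replacing each of the two negated occurrences of $x_i$ by the (unnegated) literal $x_{i,1}$ and each of the two unnegated occurrences by $x_{i,2}$, and then deleting all negations. The resulting clause set $C'$ consists entirely of monotone positive $3$-clauses with three distinct literals, and every $x_{i,k}$ appears unnegated exactly twice in $C'$.

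Next, for each $i$ I would enforce $x_{i,1}\neq x_{i,2}$ (so that $x_{i,1}$ really plays the role of $\overline{x_i}$) using an NAE-style gadget built from Corollaries~\ref{cor:B(x,y,z)} and~\ref{cor:barB(barx,bary,barz)}. Concretely, with fresh auxiliaries $a_i$ and $b_i$ I would append
\[
\mathcal{B}(x_{i,1},x_{i,2},a_i)\,\cup\,\bar{\mathcal{B}}(\overline{x_{i,1}},\overline{x_{i,2}},\overline{b_i}),
\]
together with a small self-contained enforcer (built from further $\mathcal{B}$ and $\bar{\mathcal{B}}$ instances applied to yet more fresh auxiliaries, in the flavor of the value-forcing arguments used in Proposition~\ref{prop:Mon3SAT(2,3)} and in Theorem~\ref{thm:Mon3Sat-(1,5)}) that pins $a_i$ to $F$ and $b_i$ to $T$. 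Under these forcings, the first gadget collapses to ``at least one of $x_{i,1},x_{i,2}$ is true'' and the second to ``at least one of them is false'', so together they realize $x_{i,1}\neq x_{i,2}$. The overall formula is then satisfiable if and only if the original \textsc{3-Sat-(2,2)} instance is.

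The last step is polarity bookkeeping. Each $x_{i,k}$ receives two unnegated appearances from $C'$, one more unnegated appearance inside $\mathcal{B}(x_{i,1},x_{i,2},a_i)$, and one negated appearance inside $\bar{\mathcal{B}}(\overline{x_{i,1}},\overline{x_{i,2}},\overline{b_i})$, landing exactly in pattern $(3,1)$. By construction of $\mathcal{B}$ and $\bar{\mathcal{B}}$ via the $\mathcal{C}$-gadget of Lemma~\ref{lem:CforMon3SATE4}, every inner auxiliary of $\mathcal{B}$ and $\bar{\mathcal{B}}$ already has pattern $(3,1)$ or $(1,3)$. The auxiliary $a_i$ picks up one positive appearance in $\mathcal{B}$ and, by the design of its enforcer, exactly three negated appearances, yielding $(1,3)$; dually $b_i$ ends up with pattern $(3,1)$, and the enforcer can be laid out so that its own fresh variables do likewise. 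Since every gadget has constant size, the reduction is polynomial, and correctness follows from the semantics of $\mathcal{B}$ and $\bar{\mathcal{B}}$.

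The main technical obstacle lies in the enforcer that pins $a_i$ to $F$ and $b_i$ to $T$: it must use only monotone $3$-clauses with three distinct literals, and it must give every newly introduced variable pattern exactly $(3,1)$ or $(1,3)$, leaving essentially no slack in the positive/negative accounting. Getting the bookkeeping to close up tightly---avoiding any leftover variable with pattern $(4,0)$, $(2,2)$, or $(0,4)$---is the delicate part and will likely require designing a dedicated small forcing gadget tailored to the $(3,1)/(1,3)$ regime, analogous to but distinct from $\mathcal{F}$ of Section~\ref{sub:once-negated}.
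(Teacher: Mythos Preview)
Your plan is essentially the paper's approach, and the ``main technical obstacle'' you flag is in fact not an obstacle at all: the forcing gadgets you need are exactly the $\mathcal{B}$ and $\bar{\mathcal{B}}$ you already have, applied with a repeated argument. To pin $a_i$ to $F$ while giving it three negated appearances, add $\bar{\mathcal{B}}(\overline{a_i},\overline{a_i},\overline{a_i})$; by Corollary~\ref{cor:barB(barx,bary,barz)} this is satisfiable iff $a_i=F$, and since $a_i$ occurs once (negated) in each of the three $\mathcal{C}$-copies inside $\bar{\mathcal{B}}$, it picks up exactly three negated appearances there, closing the bookkeeping at $(1,3)$. Dually, $\mathcal{B}(b_i,b_i,b_i)$ pins $b_i$ to $T$ and supplies three unnegated appearances, yielding $(3,1)$. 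All internal auxiliaries of these gadgets inherit the $(3,1)/(1,3)$ pattern from Lemma~\ref{lem:CforMon3SATE4} and the single negative (resp.\ positive) link clause, so no dedicated new gadget is needed.

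The paper in fact streamlines your construction by collapsing one layer: rather than routing $x_{i,1},x_{i,2}$ through $\mathcal{B}$ and $\bar{\mathcal{B}}$ and then forcing the extra slots $a_i,b_i$, it simply adds the two raw clauses $\{x_{i,1},x_{i,2},y_i\}$ and $\{\overline{x_{i,1}},\overline{x_{i,2}},\overline{z_i}\}$ and forces $y_i=F$ via $\bar{\mathcal{B}}(\overline{y_i},\overline{y_i},\overline{y_i})$ and $z_i=T$ via $\mathcal{B}(z_i,z_i,z_i)$. This gives the same polarity count for $x_{i,1},x_{i,2}$ (two positives from the rewritten clauses, one positive and one negative from the new pair, hence $(3,1)$), while $y_i$ ends $(1,3)$ and $z_i$ ends $(3,1)$. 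Your version works too once you plug in the repeated-argument trick; it just uses roughly twice as many gadget instances per variable.
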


\begin{proof}
We show NP-hardness by reducing from \textsc{3-Sat-(2,2)}, for which NP-hardness was established by Berman et al.~\cite[Theorem 1]{Berman2003}. Given an instance~$\mathcal{I}$ of the latter with a set $V$ of variables and a set $C$ of clauses over $V$, let $n:=|V|$. For each variable $x_i \in V$, we introduce two new variables $x_{i,1}$, $x_{i,2}$ and replace the two negated appearances with $x_{i, 1}$ and the two unnegated appearances with $x_{i, 2}$, respectively. Then, we remove all negations and introduce the following clauses for $i \in \{1,2, \ldots, n\}$, where $z_i$ and $y_i$ are new variables:
\[
\{\{x_{i,1},\,x_{i,2}, y_i\}, \{\overline{x_{i,1}},\,\overline{x_{i,2}}, \overline{z_i}\}\} \cup \bar{\mathcal{B}}(\overline{y_i},\overline{y_i}, \overline{y_i} ) \cup \mathcal{B}(z_i, z_i, z_i).
\]   
Let $V_i$ denote the variables appearing in the clauses introduced above. By construction and corollaries~\ref{cor:B(x,y,z)} and~\ref{cor:barB(barx,bary,barz)}, a truth assignment $\beta_i$ for $\{x_{i,1},\,x_{i,2}\}$ can be extended to a truth assignment $\beta'_i$ for $V_i$ that satisfies these clauses if and only if $\beta_i(x_{i,1}) \neq \beta_i(x_{i,2})$. Now it is straightforward to verify that the constructed set of clauses is satisfiable if and only if the given instance $\mathcal{I}$ is satisfiable.  

By construction of $\bar{\mathcal{B}}(\overline{y_i},\overline{y_i}, \overline{y_i})$ and $\mathcal{B}(z_i, z_i, z_i)$, each variable in $\bigcup_{i=1}^n V_i$ appears  three times unnegated and once negated or three times negated and once unnegated. Moreover, each variable in $\bigcup_{i=1}^n \{x_{i,1}, x_{i,2}\}$ appears once unnegated and once negated in the introduced clauses, and twice unnegated in the original clause set. Also observe that, by construction, all clauses are monotone. Hence, we constructed an instance of \textsc{Monotone 3-Sat-E4} where each variable appears three times negated and once unnegated or three times unnegated and once negated.

We conclude the proof by remarking that the transformation is polynomial.~\end{proof}

Finally, dropping the monotonicity condition we point out that Theorem~\ref{thm:Mon3SATE4} implies also an interesting hardness result for \textsc{3-Sat-E4}. For instance, replacing each variable $x$ that appears negated exactly three times and unnegated exactly once with a new variable $z$ such that literal $z$ replaces literal $\bar{x}$ and literal $\bar{z}$ replaces literal $x$, it follows that \textsc{3-Sat-E4} is NP-complete even if each variable appears exactly once negated and exactly three times unnegated. An analogous result  follows for the case  that each variable appears exactly three times  negated and exactly once unnegated. Therewith, we complement a result by Berman et al.~\cite{Berman2003} stating that \textsc{3-Sat-E4} is NP-complete even if each variable appears exactly twice unnegated and exactly twice negated. We summarize these findings in terms of the corollary below (for the sake of completeness, we include also the result by Berman et al.~\cite[Theorem 1]{Berman2003}). 

\begin{cor}
\textsc{3-Sat-E4} is NP-complete even if either
\begin{itemize}
\item each variable appears exactly three times unnegated and once negated, or 
\item each variable appears exactly three times negated and once unnegated, or
\item each variable appears exactly twice unnegated and twice negated~\cite{Berman2003}, respectively. 
\end{itemize}
\end{cor}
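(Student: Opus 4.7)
The plan is to derive the first two bullets directly from Theorem~\ref{thm:Mon3SATE4} by a simple polarity-flip on a subset of variables, and to invoke Berman et al.~\cite[Theorem 1]{Berman2003} for the third bullet. First I would take an arbitrary instance of \textsc{Monotone 3-Sat-E4} in which each variable appears either three times unnegated and once negated, or three times negated and once unnegated; by Theorem~\ref{thm:Mon3SATE4} satisfiability of such instances is NP-hard. I would then partition the variable set as $V = V_+ \cup V_-$ according to which of the two appearance patterns the variable has.

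For every variable $x \in V_-$ I would introduce a fresh variable $z_x$ and perform the substitution that replaces each occurrence of $\bar{x}$ by $z_x$ and the unique occurrence of $x$ by $\bar{z_x}$, leaving the variables in $V_+$ untouched. This substitution is just a renaming under a polarity flip, hence preserves satisfiability: a truth assignment $\beta$ of the original instance yields a satisfying assignment of the transformed instance by setting $z_x$ to the complement of $\beta(x)$ for $x \in V_-$, and conversely. After the substitution every variable (both the untouched ones in $V_+$ and each new $z_x$) appears exactly three times unnegated and once negated, and every clause still contains three distinct literals. The resulting formula is no longer monotone, but it is a valid instance of \textsc{3-Sat-E4} in the restricted form required by the first bullet.

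The second bullet follows symmetrically, by applying the polarity flip to the variables of $V_+$ instead of $V_-$; equivalently, one may negate every literal of the instance constructed for the first bullet, which preserves satisfiability and swaps the two appearance patterns. The third bullet is precisely \cite[Theorem 1]{Berman2003} and needs no further argument. Polynomiality of the whole construction is immediate since the transformation only renames literals and introduces no new clauses.

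There is essentially no technical obstacle: the only things to verify are the appearance counts in the transformed instance (direct from how the substitution acts on each $x \in V_-$ and each $x \in V_+$) and the satisfiability equivalence (a one-line check about the polarity flip). The entire corollary is bookkeeping on top of Theorem~\ref{thm:Mon3SATE4} and the cited result of Berman et al.
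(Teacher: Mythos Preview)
Your proposal is correct and follows exactly the same approach as the paper: the paper also derives the first two bullets from Theorem~\ref{thm:Mon3SATE4} by performing a polarity flip on those variables whose appearance pattern is the ``wrong'' one (replacing each such $x$ by a fresh $z$ with $z$ standing for $\bar{x}$ and $\bar{z}$ for $x$), and takes the third bullet directly from Berman et al.~\cite{Berman2003}. The checks you list (appearance counts, distinctness of literals, satisfiability equivalence, polynomiality) are precisely the ones needed.
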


\section{Conclusion}\label{sec:con}

We have shown that \textsc{Not-All-Equal 3-Sat} remains NP-complete for linear and monotone formulas in CNF, where each clause contains exactly 3 distinct variables and every variable appears in exactly 4 clauses. In a sense, these parameters establish a sharp separation line between polynomial time solvability and \textsc{NP}-completeness, since it is known that \textsc{Not-All-Equal 3-Sat} can be decided in polynomial time if
\begin{itemize}
\item the formula is exact linear~\cite[Corollary 2]{porschen09} (i.e., each pair of distinct clauses shares exactly one variable),  
\item each clause contains at most 2 distinct variables~\cite[Theorem 1]{porschen05}, or
\item each clause is monotone and contains exactly 3 distinct variables, and each variable appears exactly (at most) 3 times~\cite[Theorem 4]{porschen04}, respectively.  
\end{itemize}
Further, we provided NP-completeness of \textsc{Monotone 3-Sat-$(k, k)$} for all $k \geq 3$. By a result of Tovey~\cite[Theorem 2.4]{tovey84} the problem is trivial for $k=1$, i.e., all such instances are satisfiable. For the remaining case $k = 2$ we were able to show that it is either trivial or NP-complete, and that NP-completeness holds if the three literals in each clause are not required to be distinct. Hence, we present the following challenge for future research in order to clarify the complexity status for $k=2$:\\

\noindent{\bf Challenge 1.} Find an unsatisfiable instance of \textsc{Monotone 3-Sat-$(2, 2)$} or prove that all instances are satisfiable.\\   

\noindent
Another focus of the paper was laid on \textsc{Monotone 3-Sat-$(k,1)$}, where each variable appears exactly $k$ times unnegated and once negated respectively. For this variant, we proved NP-completeness for all $k \geq 5$. Again, by Tovey~\cite[Theorem 2.4]{tovey84} the problem is trivial for $k \leq 2$. The cases $k=3$ and $k=4$ are, to the best of our knowledge, open; we hence state the following second challenge for future research: \\

\noindent{\bf Challenge 2.} Is \textsc{Monotone 3-Sat-$(k,1)$} NP-hard for $k \in \{3, 4\}$?\\   

\bibliographystyle{alpha}
\bibliography{mylit}

\end{document}